\def\BibTeX{{\rm B\kern-.05em{\sc i\kern-.025em b}\kern-.08em
    T\kern-.1667em\lower.7ex\hbox{E}\kern-.125emX}}
\newtheorem{corollary}{Corollary}
\newtheorem{theorem}{Theorem}
\newtheorem{definition}{Definition}
\newtheorem{lemma}{Lemma}
\newtheorem{remark}{Remark}
\newtheorem{baseline}{Baseline}
\newcommand{\blue}{\textcolor[rgb]{0,0,1}}
\newcommand{\nt}{\tilde{n}}
\newcommand{\mt}{\tilde{m}}
\newcommand{\at}{\tilde{\alpha}}
\newcommand{\rt}{\tilde{R}}
\newcommand{\wt}{\tilde{W}}
\newcommand{\kt}{\tilde{k}}
\newcommand{\ct}{\tilde{C}}
\date{}
\title{Two-Way Interference Channel Capacity: \\ How to Have the Cake and Eat it Too}
\author{Changho Suh, Jaewoong Cho and David Tse \\
\thanks{C. Suh and J. Cho are with the School of Electrical Engineering at Korea Advanced Institute of Science and Technology, South Korea (Email: $\mathsf{ \{chsuh, cjw2525 \}@kaist.ac.kr}$)}
\thanks{D. Tse is with the Electrical Engineering Department at Stanford University, CA, USA (Email: $\mathsf{dntse@stanford.edu}$).} }
\begin{document}

\maketitle

\begin{abstract}

Two-way communication is prevalent and its fundamental limits are first studied in the point-to-point setting by Shannon~\cite{shannon:two-way}. One natural extension is a two-way  interference channel (IC) with four independent messages: two associated with each direction of communication. In this work, we explore a deterministic two-way IC which captures key properties of the wireless Gaussian channel. Our main contribution lies in the complete capacity region characterization of the two-way IC (w.r.t. the forward and backward sum-rate pair) via a new achievable scheme and a new converse. One surprising consequence of this result is that not only we can get an interaction gain over the one-way non-feedback capacities, we can sometimes \emph{get all the way to perfect feedback capacities in both directions simultaneously}. In addition, our novel outer bound characterizes channel regimes in which interaction has no bearing on capacity.

\end{abstract}
\begin{keywords}
Feedback Capacity, Interaction, Perfect Feedback, Two-Way Interference Channels
\end{keywords}

\section{Introduction}

Two-way communication, where two nodes want to communicate data to each other, is prevalent. The first study of such two-way channels was done by Shannon~\cite{shannon:two-way}  in the setting of point-to-point memoryless channels. When the point-to-point channels in the two directions are orthogonal (such as when the two directions are allocated different time slots or different frequency bands, or when the transmitted signal can be canceled perfectly as in full-duplex communication), the problem is not interesting as feedback does not increase point-to-point capacity. Hence, communication in one direction cannot increase the capacity of the other direction and no {\em interaction gain} is possible. One can achieve no more than the one-way capacity in each direction. 

The situation changes in network scenarios where feedback can increase capacity. In these scenarios, communication in one direction can potentially increase the capacity of the other direction by providing feedback in addition to communicating data. One scenario of particular interest is the setting of the two-way interference channel (two-way IC), modeling two interfering two-way communication links (Fig.~\ref{fig:system}). Not only is this scenario common in wireless communication networks, it has also been demonstrated that feedback provides a significant gain for communication over (one-way) IC's~\cite{Kramer:it02,SuhTse,AlirezaSuhAves}. In particular,~\cite{SuhTse} reveals that the feedback gain can be  unbounded, i.e., the gap between the feedback and non-feedback capacities can be arbitrarily large for certain channel parameters. This suggests the potential of significant interaction gain in two-way IC's. On the other hand,  the feedback result~\cite{SuhTse} assumes a dedicated infinite-capacity feedback link. In the two-way setting, any feedback needs to be transmitted through a backward IC,  which also needs to carry its own backward data traffic. The question is when we take in consideration the competition with the backward traffic, whether there is still any net interaction gain through feedback?

 \begin{figure}[t]
\begin{center}
{\epsfig{figure=./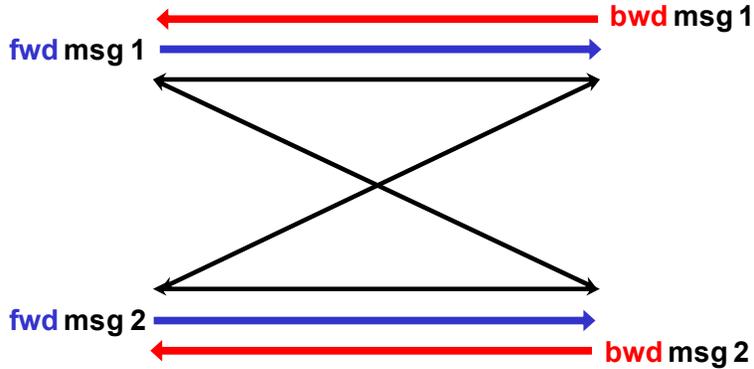, angle=0, width=0.6\textwidth}}
\end{center}
\caption{Two interfering two-way communication links, consisting of two IC's, one in each direction. The IC's are orthogonal to each other and do not necessarily have the same channel gains.}
\label{fig:system}
\vspace*{-0.1in}
\end{figure}

To answer this question,~\cite{SuhWangTse} investigated a two-way IC under the linear deterministic model~\cite{Salman:IT11}, which approximates a Gaussian channel. A scheme is proposed to demonstrate a net interaction gain, i.e., one can simultaneously achieve better than the non-feedback capacities in both directions.  While an outer bound is also derived, it has a gap to the lower bound. Hence, there has been limited understanding on the maximal gain that can be reaped by feedback. In particular, whether or not one can get all the way to perfect feedback capacities in both directions has been unanswered. 
Recently Cheng-Devroye~\cite{Devroye:IT2014} derived an outer bound, but it does not give a proper answer as the result assumes a partial interaction scenario in which interaction is enabled only at two nodes, while no interaction is permitted at the other two nodes. 


\begin{figure}[t]
\begin{center}
{\epsfig{figure=./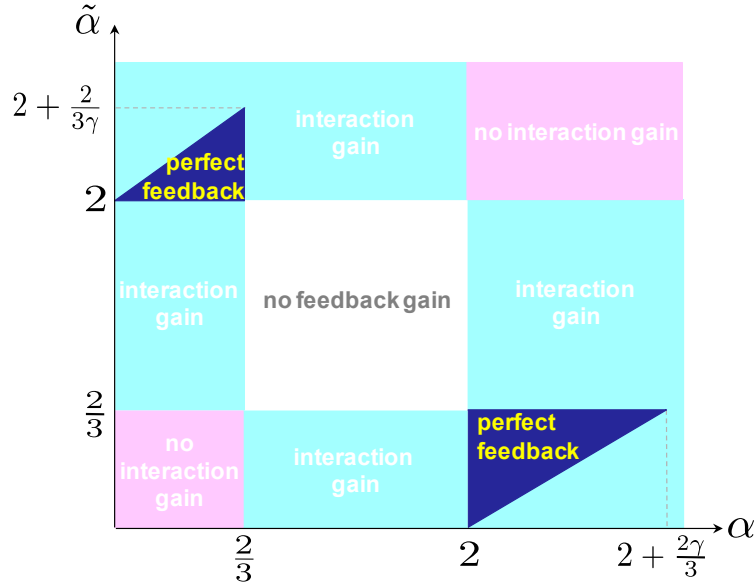, angle=0, width=0.6\textwidth}}
\end{center}
\caption{{\bf When can one have the cake and eat it too?} The plot is over two channel parameters of the deterministic model, $\alpha$ and $\tilde{\alpha}$, where $\alpha$ is the ratio of the interference-to-noise ratio (in dB) to the signal-to-noise ratio (in dB) of the IC in the forward direction and $\tilde{\alpha}$ is the corresponding quantity of the IC in the backward direction. The parameter $\gamma$ is the ratio of the backward signal-to-noise ratio (in dB) to the forward signal-to-noise ratio (in dB), and is fixed to be a value between $1$ and $4$. White region: feedback does not increase capacity in either direction and thus interaction is not useful. Purple: feedback does increase capacity but interaction cannot provide such increase. Light blue: feedback can be provided through interaction and there is a net interaction gain. Dark blue: interaction is so efficient that one can achieve perfect feedback capacity simultaneously in both directions. This implies that one can obtain the maximal feedback gain without any sacrifice for feedback transmission (have the cake and eat it too).}
\label{fig:freeride_vs_nofreeride}
\vspace*{-0.15in}
\end{figure}

In this work, we settle this open problem and completely characterize the capacity region of the deterministic two-way IC via a new capacity-achieving transmission scheme as well as a novel outer bound. For simplicity, we  assume the IC in each direction is symmetrical between the two users; however the IC's in the two directions are not necessarily the same (for example, they may use different frequency bands). For some channel gains, the new scheme simultaneously achieves the perfect feedback sum-capacities of the IC's in both directions. 
This occurs even when feedback offers gains in both directions and thus feedback must be explicitly or implicitly carried over each IC while sending the traffic in its own direction.  Fig.~\ref{fig:freeride_vs_nofreeride} shows for what channel gains this happens.
 
In the new scheme, feedback allows the exploitation of  the following as side information: (i) past received signals; (ii) users' own messages; (iii) even the \emph{future} information via \emph{retrospective decoding} (to be detailed later; see Remark~\ref{remark:whyfreelunch} in particular).
While the first two were already shown to offer a feedback gain in literature, the third is newly exploited. It turns out this new exploitation leads us to achieve the perfect feedback capacities in both directions, which can never be done by the prior schemes~\cite{SuhTse,AlirezaSuhAves,SuhWangTse}.

Our new outer bound leads to the characterization of channel regimes in which interaction provides no gain in capacity.
The bound is neither cutset nor more sophisticated bounds such as genie-aided bounds~\cite{ElGamal:it82, Kramer:it02, SuhTse, Rini:it11, WangTse:it11, Vinod:it11, Sahai:ITW09} and the generalized network sharing bound~\cite{Kamath:allerton14}. We employ a notion called triple mutual information, also known as interaction information~\cite{InteractionInfo:54}. In particular, we exploit one key property of the notion, commutativity, to derive the bound.

\section{Model}
\label{sec:model}

Fig.~\ref{fig:model} describes a two-way deterministic IC where user $k$ wants to send its own message $W_k$ to user $\tilde{k}$, while user $\tilde{k}$ wishes to send its own message $\wt_k$ to user $k$, $k=1,2$. We assume that $(W_1, W_2, \wt_1, \wt_2)$ are independent and uniformly distributed. For simplicity, we consider a setting where both forward and backward ICs are symmetric but not necessarily the same. In the forward IC, $n$ and $m$ indicate the number of signal bit levels for direct and cross links respectively. The corresponding values in the backward IC are denoted by $(\tilde{n}, \tilde{m})$. Let $X_k \in \mathbb{F}_2^{\max(n,m)}$ be user $k$'s transmitted signal and $V_{k} \in \mathbb{F}_2^m$ be a part of $X_k$ visible to user $\tilde{j} (\neq \tilde{k})$. Similarly let $\tilde{X}_k$ be user $\tilde{k}$'s transmitted signal and $\tilde{V}_k$ be a part of $\tilde{X}_k$ visible to user $j (\neq k)$.
The deterministic model abstracts broadcast and superposition of signals in the wireless Gaussian channel. See~\cite{Salman:IT11} for explicit details.
A signal bit level observed by both users is broadcasted. If multiple signal levels arrive at the same signal level at a user, we assume modulo-2-addition.
The encoded signal $X_{ki}$ of user $k$ at time $i$ is a function of its own message and past received signals:
$X_{ki}= f_{ki} (W_k, \tilde{Y}_{k}^{i-1})$.
We define $\tilde{Y}_{k}^{i-1}:= \{ \tilde{Y}_{kt} \}_{t=1}^{i-1}$ where $\tilde{Y}_{kt}$ denotes user $k$'s received signal at time $t$, offered through the backward IC. Similarly the encoded signal $\tilde{X}_{k i}$ of user $\kt$ at time $i$ is a function of its own message and past received signals:
$\tilde{X}_{ki} = \tilde{f}_{ki} (\wt_k, {Y}_{k}^{i-1})$.

\begin{figure}[t]
\begin{center}
{\epsfig{figure=./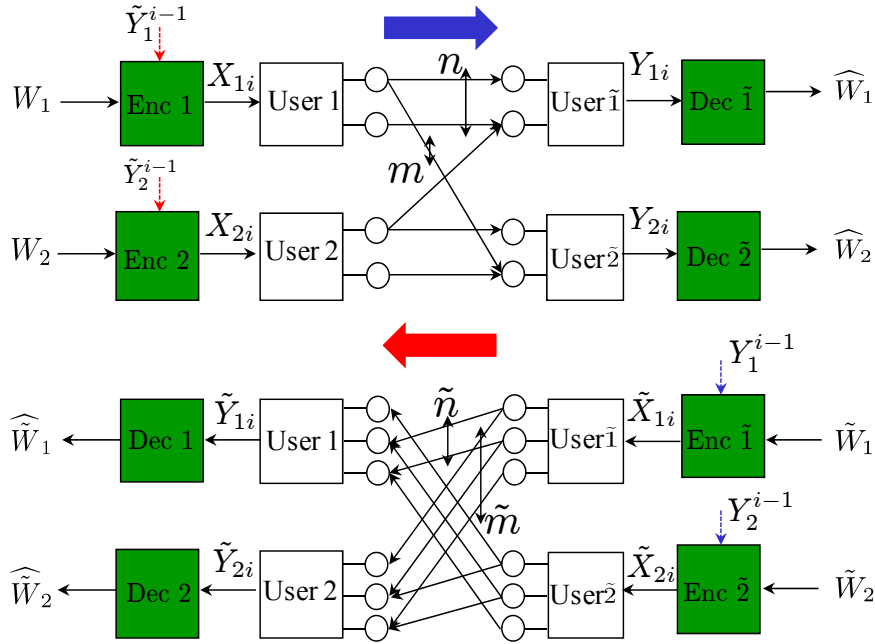, angle=0, width=0.7\textwidth}}
\end{center}
\caption{A two-way deterministic interference channel (IC).} \label{fig:model}
\vspace*{-0.1in}
\end{figure}

A rate tuple $(R_1,R_2, \rt_1, \rt_2)$ is said to be achievable if there exists a family of codebooks and  encoder/decoder functions such that the decoding error probabilities go to zero as code length $N$ tends to infinity. 

For simplicity, we focus on a sum-rate pair regarding the forward and bacward ICs: $(R,\rt):=(R_1+R_2, \rt_1 + \rt_2)$.\footnote{The extension to the four-rate tuple case is not that challenging although it requires a complicated yet tedious analysis. Given our results (to be presented soon) and the tradeoff w.r.t. $(R_1, R_2)$ (or $(\rt_1,\rt_2)$)  already characterized in~\cite{SuhTse}, the extension does not provide any additional insights. Hence, here we consider a simpler sum-rate pair setting.} The capacity region is defined as the closure of the set of achievable sum-rate pairs:
${\cal C} =  {\rm closure}  \{ (R, \rt): (R_1,R_2, \rt_1, \rt_2) \in \mathcal{C}_{\rm high} \}$ where ${\cal C}_{\rm high}$ denotes the one w.r.t. the high-dimensional rate tuple.

\section{Main Results}
\label{sec:MainResults}

Our main contribution lies in characterization of the capacity region of the two-way IC, formally stated below.

\begin{theorem}[Capacity region]
\label{thm:capacityregion}
The capacity region ${\cal C}$ of the two-way IC is the set of $(R, \rt)$ such that
\vspace*{-0.065in}
\begin{align}
R & \leq \max (2n - m, m) =: C_{\sf pf}  \label{eq:S3outerbound_1} \\
\tilde{R} & \leq   \max (2\tilde{n} - \tilde{m}, \tilde{m}) =: \tilde{C}_{\sf pf} \label{eq:S3outerbound_2} \\
R + \tilde{R} &\leq 2 (n + \tilde{n}) \label{eq:S3outerbound_3} \\
R + \tilde{R} &\leq 2 \max (n-m, m) + 2 \max (\tilde{n} - \tilde{m}, \tilde{m}) \label{eq:S3outerbound_4}
\end{align}
where $C_{\sf pf}$ and $\tilde{C}_{\sf pf}$ indicate the perfect feedback sum-capacities of the forward and backward IC's, respectively~\cite{SuhTse}. 
\end{theorem}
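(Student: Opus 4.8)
The plan is to prove the theorem in two halves: a converse establishing that every achievable $(R,\tilde{R})$ satisfies \eqref{eq:S3outerbound_1}--\eqref{eq:S3outerbound_4}, and an achievability scheme meeting all four bounds simultaneously. For the converse, bounds \eqref{eq:S3outerbound_1} and \eqref{eq:S3outerbound_2} are the easiest: the forward sum-rate $R$ cannot exceed the perfect-feedback sum-capacity $C_{\sf pf}=\max(2n-m,m)$ of the forward IC, because giving each forward transmitter noiseless access to \emph{all} backward-channel observations (hence in particular the genuine causal feedback available in the two-way setting) only enlarges the achievable region, and that enhanced channel is exactly the perfect-feedback IC of~\cite{SuhTse}. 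Bound \eqref{eq:S3outerbound_3} should follow from a cut-set-type argument: $R+\tilde{R}$ counts bits flowing across a cut separating $\{$user $1$, user $\tilde 1\}$ from $\{$user $2$, user $\tilde 2\}$ (or an appropriate node partition), and the total bit-level capacity crossing is $2(n+\tilde n)$. The interesting and hard converse bound is \eqref{eq:S3outerbound_4}; I expect this is where the ``triple mutual information / interaction information'' machinery advertised in the introduction is needed. The idea would be to write Fano-type inequalities for $NR$ and $N\tilde R$ in terms of mutual informations between messages and received-signal sequences, then combine the forward and backward expressions so that the cross terms organize into symmetric triple-information quantities $I(A;B;C)$; exploiting the commutativity of $I(A;B;C)$ under permutation of its arguments lets one cancel or merge terms and collapse the sum to $2\max(n-m,m)+2\max(\tilde n-\tilde m,\tilde m)$. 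Handling the two cases $n\ge 2m$ vs. $n<2m$ (equivalently which term achieves the max) separately will be required.

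For achievability, the goal is a single scheme whose $(R,\tilde R)$ region is the polytope cut out by the four inequalities. The natural strategy is to show the scheme achieves the two ``corner'' regimes and then time-share. One corner is $(R,\tilde R)=(C_{\sf pf},\tilde R^{*})$ and symmetrically $(R^{*},\tilde C_{\sf pf})$, where the residual rate in the opposite direction is whatever \eqref{eq:S3outerbound_3}--\eqref{eq:S3outerbound_4} permit; the dark-blue region of Fig.~\ref{fig:freeride_vs_nofreeride} corresponds to the case where both corners hit $(C_{\sf pf},\tilde C_{\sf pf})$ at once. The scheme itself is the ``retrospective decoding'' construction flagged in the introduction: forward transmitters use the backward transmissions (which carry backward data \emph{and} double as feedback about the forward received signals) to resolve collisions from earlier forward blocks, and crucially the decoding is done backward in time over blocks so that a receiver can use \emph{later} blocks' refinements to clean up \emph{earlier} ones — this is the mechanism by which feedback is carried ``for free'' on top of backward traffic. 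I would lay out the block-Markov encoding, specify what each node transmits on each sub-level (direct vs. cross bit levels) in each block, then verify by induction over blocks (run backwards) that every intended bit is decodable, and finally count rates to confirm the claimed region, again splitting into the regimes $n\gtrless 2m$, $\tilde n\gtrless 2\tilde m$.

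The main obstacle I anticipate is twofold. On the converse side, it is discovering the \emph{right} grouping of Fano terms that makes the interaction-information cancellation work for \eqref{eq:S3outerbound_4} — cut-set and standard genie-aided bounds provably do not suffice (the paper says so), so one must find an entropy identity, leveraging commutativity of triple mutual information, that is not one of the usual tricks; getting the deterministic-model entropy bookkeeping ($H(V_k)\le m$, $H(X_k)\le\max(n,m)$, etc.) to land exactly on $2\max(n-m,m)+2\max(\tilde n-\tilde m,\tilde m)$ is delicate. On the achievability side, the obstacle is designing the retrospective-decoding schedule so that the bits used as ``feedback'' do not consume rate that the bound \eqref{eq:S3outerbound_3} or \eqref{eq:S3outerbound_4} has already allocated to data — i.e., showing the feedback is genuinely piggybacked — and proving the backward induction closes with vanishing error as the number of blocks grows. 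I would treat the symmetric-channel, sum-rate reduction as a genuine simplification and not attempt the four-rate refinement, consistent with the paper's stated scope.
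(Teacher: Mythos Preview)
Your converse plan matches the paper's: bounds \eqref{eq:S3outerbound_1}--\eqref{eq:S3outerbound_2} are the perfect-feedback bounds, \eqref{eq:S3outerbound_3} is cutset, and \eqref{eq:S3outerbound_4} is the novel bound via interaction information and its commutativity. One correction: the partition $\{1,\tilde 1\}$ vs.\ $\{2,\tilde 2\}$ separates no message pair (each user talks to its own tilde counterpart, which lies on the same side of that cut). The paper instead adds the two bounds $R_1+\tilde R_2\le n+\tilde n$ and $R_2+\tilde R_1\le n+\tilde n$, which come from the cut $\{1,\tilde 2\}$ vs.\ $\{2,\tilde 1\}$ and its mirror.

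The achievability plan has a genuine gap. You treat retrospective decoding as \emph{the} scheme, but the paper needs \emph{two} distinct constructions. Retrospective decoding (the paper's Scheme~2) is tailored to the regime where feedback helps in \emph{both} directions (e.g.\ $\alpha<2/3,\ \tilde\alpha>2$). For the regimes where feedback helps in only one direction (e.g.\ $\alpha>2,\ \tilde\alpha\in[2/3,2]$ or $\alpha<2/3,\ \tilde\alpha\in[2/3,2]$), a separate Scheme~1 is required: it XORs the feedback signal with the backward data symbol and uses interference neutralization so that the backward receiver strips off the feedback while the forward transmitter still extracts it as side information. And for the ``no interaction gain'' regimes ($\alpha,\tilde\alpha<2/3$ or $\alpha,\tilde\alpha>2$), the paper reverts to the known perfect-feedback scheme with an explicit one-bit-for-one-bit tradeoff. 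Gluing all of this together for arbitrary $(n,m,\tilde n,\tilde m)$ relies on a \emph{network decomposition}: factoring each channel into orthogonal copies of elementary $(1,0)$, $(2,1)$, $(3,2)$, $(0,1)$, $(1,2)$ subchannels and then pairing forward and backward subchannels so that the right per-pair lemma applies. Your plan mentions neither Scheme~1 nor the decomposition, and your proposed regime split $n\gtrless 2m$ (i.e.\ $\alpha\gtrless 1/2$) is not the operative one; the thresholds that govern the shape of the region are $\alpha\in(0,2/3)$, $[2/3,2]$, $(2,\infty)$ in each direction, giving five nontrivial regimes with several subcases each.
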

\begin{proof}
The achievability proof relies on two novel transmission schemes. In particular, we highlight key features of the second scheme - that we call~\emph{retrospective decoding} - which plays a crucial role to achieve perfect feedback capacities in both directions. The first feature is that it consists of two stages, each comprising a sufficiently large number $L$ of time slots. The second feature is that in the second stage, feedback-aided successive refinement w.r.t. the fresh symbols sent in the first stage occurs in a \emph{retrospective} manner: the fresh symbol sent in time $i$ of stage I is refined in time $2L+2-i$ of stage II where $1\leq i \leq L$. See Section~\ref{sec:achievability_capacityregion} for the detailed proof. 

For the converse proof, we first note that the first two bounds~\eqref{eq:S3outerbound_1} and~\eqref{eq:S3outerbound_2} 
match the perfect-feedback bound~\cite{SuhTse,Sahai:ITW09,SuhWangTse}. So one can prove them with a simple modification to the proof in the references. The third bound is due to cutset: $R_1 + \tilde{R}_2 \leq n + \tilde{n}$ and $R_2 + \tilde{R}_1 \leq n + \tilde{n}$. Our contribution lies in the derivation of the last bound. See Section~\ref{sec:proofnovelbound} for the proof.
\end{proof}

We state two baselines for comparison to our main result.

\begin{baseline}
[\cite{ElGamal:it82,bresler:europe}]
\label{baseline:nonfeedback}
The capacity region ${\cal C}_{\sf no}$ for the non-interactive scenario is the set of $(R, \rt )$ such that
\vspace*{-0.065in}
\begin{align*}
R & \leq \min \left\{  2 \max (n-m, m), \max (2n-m, m), 2n \right \} =: C_{\sf no}  \\
\rt & \leq \min \left\{  2 \max ( \tilde{n}-\tilde{m}, \tilde{m}), \max (2 \tilde{n}- \tilde{m}, \tilde{m} ), 2 \tilde{n} \right \} = : \tilde{C}_{\sf no}.
\end{align*}
\end{baseline}

\vspace*{-0.05in}
\begin{baseline}
[\cite{SuhTse}]
\label{baseline:perfectfeedback}
The capacity region  for the perfect feedback scenario is ${\cal C}_{\sf pf} = \{ (R, \rt ): R \leq C_{\sf pf}, \rt \leq \tilde{C}_{\sf pf} \}$. 
\end{baseline}


With Theorem~\ref{thm:capacityregion} and Baseline~\ref{baseline:nonfeedback}, one can readily see that feedback gain (in terms of capacity region) occurs as long as $(\alpha \notin [\frac{2}{3},2], \tilde{\alpha} \notin[\frac{2}{3},2])$, where $\alpha:=\frac{m}{n}$ and $\tilde{\alpha} := \frac{\tilde{m}}{\tilde{n}}$. 
A careful inspection reveals that there are channel regimes in which one can enhance ${C}_{\sf no}$ (or $\tilde{C}_{\sf no}$) without sacrificing the other counterpart. This implies a net interaction gain.  


%

\begin{definition}[Interaction gain]
\label{def:freeride_vs_nofreeride}
We say that an interaction gain occurs if one can achieve $(R,\rt) = (C_{\sf no} + \delta, \tilde{C}_{\sf no} + \tilde{\delta} )$ for some $\delta \geq 0$ and $\tilde{\delta} \geq 0$ such that $\max (\delta, \tilde{\delta} ) >0$. 
\end{definition}
A tedious yet straightforward calculation with this definition leads us to identify channel regimes which exhibit an interaction gain, marked in light blue in Fig.~\ref{fig:freeride_vs_nofreeride}.

We also find the regimes in which feedback does increase capacity but interaction cannot provide such increase, meaning that whenever $\delta >0$, $\tilde{\delta}$ must be $-\delta$ and vice versa. These are $(\alpha \leq \frac{2}{3}, \tilde{\alpha} \leq \frac{2}{3})$ and $(\alpha \geq 2, \tilde{\alpha} \geq 2)$ marked in purple in Fig.~\ref{fig:freeride_vs_nofreeride}. The cutset bound~\eqref{eq:S3outerbound_3} proves this for $(\alpha \geq 2, \tilde{\alpha} \geq 2)$. The regime of $(\alpha \leq \frac{2}{3}, \tilde{\alpha} \leq \frac{2}{3})$ has been open as to whether both $\delta$ and $\tilde{\delta}$ can be non-negative. Our novel bound~\eqref{eq:S3outerbound_4} cracks the open regime, demonstrating that there is no interaction gain in the regime.


\vspace*{-0.02in}
{\bf Achieving perfect feedback capacities:} One interesting observation is that
there are channel regimes in which both $\delta$ and $\tilde{\delta}$ can be strictly positive. This is unexpected because it implies that not only feedback does not sacrifice one transmission for the other, it can actually improve both simultaneously. More interestingly, $\delta$ and $\tilde{\delta}$ can reach up to the maximal feedback gains, reflected in $C_{\sf pf} - C_{\sf no}$ and $\tilde{C}_{\sf pf} - \tilde{C}_{\sf no}$. 
The dark blue regimes in Fig.~\ref{fig:freeride_vs_nofreeride} indicate such channel regimes when $1 \leq \gamma := \frac{ \tilde{n}}{n} \leq 4$. 
Note that such regimes depend on $\gamma$. The amount of feedback that one can send is limited by available resources offered by the backward (or forward) IC. Hence, the feedback gain can be saturated depending on availability of the resources, which is affected by the channel asymmetry parameter $\gamma$.
One point to note here is that for any $\gamma$, there always exists a non-empty set of $(\alpha, \tilde{\alpha})$ in which perfect feedback capacities can be achieved. Corollary~\ref{cor:S2sumrate_sumcapacity} stated below exhibits all of such channel regimes.


\vspace*{-0.02in}
\begin{corollary}
\label{cor:S2sumrate_sumcapacity}
Consider a case in which feedback helps in both ICs:  $C_{\sf pf} > C_{\sf no}$ and $\tilde{C}_{\sf pf} > \tilde{C}_{\sf no}$. In this case, the channel regimes in which ${\cal C} = {\cal C}_{\sf pf}$ are: 
\begin{align*}
 & \textrm{(I)} \;\;  \alpha < 2/3,  \tilde{\alpha} > 2, 
\; C_{\sf pf} - C_{\sf no} \leq 2 \mt - \tilde{C}_{\sf pf}, \; \tilde{C}_{\sf pf} - \tilde{C}_{\sf no} \leq 2n - C_{\sf pf}; \\
&  \textrm{(II)} \;  \at < 2/3,    \alpha > 2, \;
\tilde{C}_{\sf pf} - \tilde{C}_{\sf no} \leq 2 m - C_{\sf pf}, \;
C_{\sf pf} - C_{\sf no} \leq  2 \tilde{n} - \tilde{C}_{\sf pf}.
\end{align*} 
\end{corollary}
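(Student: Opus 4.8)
The plan is to obtain Corollary~\ref{cor:S2sumrate_sumcapacity} as a direct specialization of Theorem~\ref{thm:capacityregion}. Since $\mathcal{C}_{\sf pf} = \{(R,\tilde R): R \le C_{\sf pf}, \tilde R \le \tilde C_{\sf pf}\}$ is defined purely by the two single-direction bounds~\eqref{eq:S3outerbound_1} and~\eqref{eq:S3outerbound_2}, and these two bounds are \emph{always} part of the description of $\mathcal{C}$ in Theorem~\ref{thm:capacityregion}, the equality $\mathcal{C} = \mathcal{C}_{\sf pf}$ holds if and only if the corner point $(C_{\sf pf}, \tilde C_{\sf pf})$ itself lies in $\mathcal{C}$, i.e.\ if and only if it satisfies the two sum-rate bounds~\eqref{eq:S3outerbound_3} and~\eqref{eq:S3outerbound_4}. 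So the whole corollary reduces to: characterize when
\[
C_{\sf pf} + \tilde C_{\sf pf} \le 2(n+\tilde n)
\qquad\text{and}\qquad
C_{\sf pf} + \tilde C_{\sf pf} \le 2\max(n-m,m) + 2\max(\tilde n-\tilde m,\tilde m),
\]
under the standing hypothesis $C_{\sf pf} > C_{\sf no}$ and $\tilde C_{\sf pf} > \tilde C_{\sf no}$.

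Next I would translate the hypothesis into constraints on $(\alpha,\tilde\alpha)$. Comparing Baseline~\ref{baseline:nonfeedback} with Theorem~\ref{thm:capacityregion}, $C_{\sf pf} > C_{\sf no}$ forces $\max(2n-m,m) > \min\{2\max(n-m,m),\max(2n-m,m),2n\}$; a short case check on $\alpha = m/n$ shows this happens exactly when $\alpha < 2/3$ or $\alpha > 2$ (the regime identified right after Baseline~\ref{baseline:perfectfeedback}), and symmetrically for $\tilde\alpha$. The diagonal subcases $(\alpha<2/3,\tilde\alpha<2/3)$ and $(\alpha>2,\tilde\alpha>2)$ are ruled out: the text already notes that~\eqref{eq:S3outerbound_3} kills the second and~\eqref{eq:S3outerbound_4} kills the first (there the corner point strictly violates the bound, since, e.g., for $\alpha<2/3$ one has $C_{\sf pf}=2n-m>2(n-m)=2\max(n-m,m)$, and similarly in the backward direction, so the sum strictly exceeds the RHS of~\eqref{eq:S3outerbound_4}). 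Hence only the two mixed regimes (I) $\alpha<2/3,\tilde\alpha>2$ and (II) $\alpha>2,\tilde\alpha<2/3$ survive, which is why the corollary lists exactly those two.

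It then remains, in case (I), to simplify the two sum-rate inequalities using $\alpha<2/3$ (so $m < 2n/3$, giving $\max(n-m,m)=n-m$ and $C_{\sf pf}=2n-m$) and $\tilde\alpha>2$ (so $\tilde m > 2\tilde n$, giving $\max(\tilde n-\tilde m,\tilde m)=\tilde m$ and $\tilde C_{\sf pf}=\tilde m$). Substituting, bound~\eqref{eq:S3outerbound_4} becomes $C_{\sf pf} + \tilde C_{\sf pf} \le 2(n-m) + 2\tilde m$, i.e.\ $(2n-m) \le 2(n-m) + (2\tilde m - \tilde m)$, which rearranges to $C_{\sf pf} - C_{\sf no} \le 2\tilde m - \tilde C_{\sf pf}$ once one substitutes $C_{\sf no}=2(n-m)$ (valid because $\alpha<2/3 \Rightarrow C_{\sf no} = 2\max(n-m,m)$) and $\tilde C_{\sf pf} = \tilde m$; and bound~\eqref{eq:S3outerbound_3} becomes $C_{\sf pf} + \tilde C_{\sf pf} \le 2n + 2\tilde n$, which rearranges to $\tilde C_{\sf pf} - \tilde C_{\sf no} \le 2n - C_{\sf pf}$ after substituting $\tilde C_{\sf no} = 2\tilde n$ (valid because $\tilde\alpha>2 \Rightarrow \tilde C_{\sf no} = 2\tilde n$). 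This yields exactly the two stated inequalities in (I); case (II) is identical by the forward/backward symmetry of the model. The only genuinely fiddly part is the bookkeeping of which branch of each $\max$ and $\min$ is active in each regime — once that is pinned down, everything is linear algebra; there is no real obstacle, just careful case analysis.
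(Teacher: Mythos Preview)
Your approach is exactly the paper's: the corollary is obtained by a direct calculation from Theorem~\ref{thm:capacityregion}, checking when the corner point $(C_{\sf pf},\tilde C_{\sf pf})$ satisfies the two sum-rate bounds~\eqref{eq:S3outerbound_3} and~\eqref{eq:S3outerbound_4}. One small bookkeeping slip: for $\alpha<2/3$ you write $\max(n-m,m)=n-m$, but this holds only for $\alpha\le 1/2$; in the range $1/2<\alpha<2/3$ one has $\max(n-m,m)=m$. This does not affect your conclusion, because the identity you actually need is $C_{\sf no}=2\max(n-m,m)$ for all $\alpha<2/3$ (which you state correctly), so bound~\eqref{eq:S3outerbound_4} reads $C_{\sf pf}+\tilde C_{\sf pf}\le C_{\sf no}+2\tilde m$ directly, without ever needing to resolve which branch of the forward $\max$ is active.
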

\vspace*{-0.02in}
\begin{proof}
A tedious yet straightforward calculation with Theorem~\ref{thm:capacityregion} completes the proof.
\end{proof}
\vspace*{-0.02in}
\begin{remark}[Why the Perfect Feedback Regimes?]
 When $\alpha <2/3$ and $\at >2$, $2 \mt$ indicates the total number of resource levels at the receivers in the backward channel. 
 Hence, one can interpret $2 \mt - \tilde{C}_{\sf pf}$ as the remaining resource levels (resource holes) that can potentially be utilized to aid forward transmission. It turns out feedback can maximize resource utilization by filling up the resource holes under-utilized in the non-interactive case. Note that $C_{\sf pf} - C_{\sf no}$ represents the amount of feedback that needs to be sent for achieving $C_{\sf pf}$. Hence, the condition $C_{\sf pf}-C_{\sf no} \leq 2 \mt - \tilde{C}_{\sf pf}$ (similarly $\tilde{C}_{\sf pf} - \tilde{C}_{\sf no} \leq 2 n - C_{\sf pf}$) in Corollary~\ref{cor:S2sumrate_sumcapacity} implies that as long as we have enough resource holes, we can get all the way to perfect feedback capacity.
 We will later provide an intuition as to why feedback can do so while describing our achievability; see Remark~\ref{remark:whyfreelunch} in particular. 
  $\Box$
\end{remark}

\section{Achievability Proof of Theorem~\ref{thm:capacityregion}}
\label{sec:achievability_capacityregion}


We first illustrate new transmission schemes via two toy examples in which the key ingredients of our achievability idea are well presented. Once the description of the schemes is done via the examples, we will then outline the proof for generalization while leaving a detailed proof for arbitrary channel parameters in Appendix~\ref{append:achievability_remaining}.

\subsection{Example 1: $(n,m)=(2,1), (\nt, \mt) = (1,2)$}
\label{sec:scheme1}

See Fig.~\ref{fig:perfectfeedback} for the channel structure of the example. The claimed rate region in this example reads $\{(R, \rt): R \leq C_{\sf pf} = 3, \rt \leq \tilde{C}_{\sf no} = 2 \}$. This is the case in which one can achieve $C_{\sf pf}$ while maintaining $\tilde{C}_{\sf no}$. We introduce a new transmission scheme (that we call \emph{Scheme 1}) to achieve the claimed rate region.

\begin{figure}[t]
\begin{center}
{\epsfig{figure=./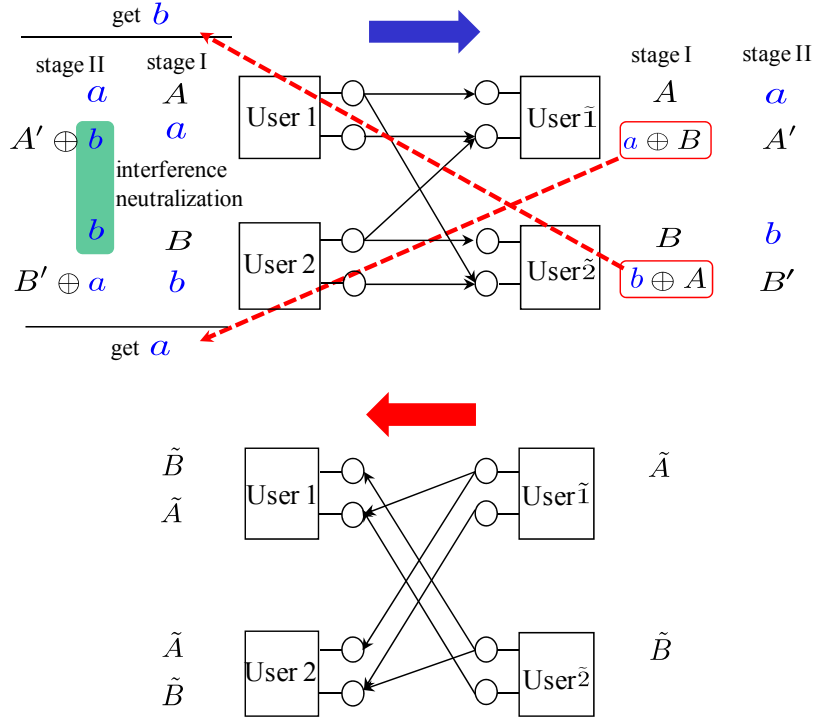, angle=0, width=0.65\textwidth}}
\end{center}
\caption{A perfect feedback scheme for $(n,m)=(2,1)$ where $C_{\sf pf} = 3$ (top); a nonfeedback scheme for $(\nt,\mt) = (1,2)$ where $\tilde{C}_{\sf pf} = \tilde{C}_{\sf no} = 2$ (bottom).
} \label{fig:perfectfeedback}
\vspace*{-0.1in}
\end{figure}

{\bf Perfect feedback scheme:} A perfect feedback scheme was presented in~\cite{SuhTse}. Here we consider a different scheme which allows us to resolve the tension between feedback and independent messages when translated into a two-way scenario. The scheme operates in two stages. See Fig.~\ref{fig:perfectfeedback}. In stage I, four fresh symbols ($(A,a)$ from user 1 and $(B,b)$ from user 2) are transmitted. The scheme in~\cite{SuhTse} feeds  $a \oplus B$ back to user 1, so that user 1 can decode $B$ which turns out to help refining the corrupted symbol $a$ in stage II. On the other hand, here we send $a \oplus B$ back to \emph{user 2}. This way, user 2 can get $a$ by removing its own symbol $B$. Similarly user 1 can get $b$. Now in stage II, user 2 intends to re-send $b$ on top, as the $b$ is corrupted due to $A$ in stage I. But here a challenge arises. The challenge is that the $b$ causes interference to user $\tilde{1}$ at the bottom level. But here the symbol $b$ obtained via feedback at user 1 can play a role. The idea of interference neutralization~\cite{Mohajer:it11} 
comes into play. User 1 sending the $b$ on bottom enables neutralizing the interference. This then allows user 1 to transmit another fresh symbol, say $A'$, without being interfered. Similarly user 2 can carry $B'$ interference-free. This way, we send 6 symbols during two time slots, thus achieving $C_{\sf pf}=3$. As for the backward IC, we employ a nonfeedback scheme in~\cite{bresler:europe}. User $\tilde{1}$ and $\tilde{2}$ send $(\tilde{A}, \tilde{B})$ on top levels. This yields $\tilde{C}_{\sf no} = 2$.

We are now ready to illustrate our achievability. Like the perfect feedback scheme, it still operates in two stages and the operation of stage I remains unchanged. A new idea comes in feedback strategy. Recall that $a \oplus B$ is the one that is desired to be fed back to user 2. But the $a \oplus B$ has a conflict with transmission of $\tilde{A}$. It seems an explicit selection needs to be made between the two competing transmissions. But it turns out the two transmissions come without the conflict. 
The idea is to combine the XORing scheme introduced in network coding literature~\cite{ahlswede:it} with interference neutralization~\cite{Mohajer:it11}. See Fig.~\ref{fig:example1}. User $\tilde{1}$ simply sends the XOR of $a \oplus B$ and $\tilde{A}$ on top. User 1 can then extract $\tilde{A} \oplus B$ by using its own symbol $a$ as side information. But it is still interfered with by $B$. Here a key observation is that $B$ is also available at user $\tilde{2}$ - it was received cleanly at the top level in stage I. User $\tilde{2}$ sending the $B$ on bottom enables user 1 to achieve interference neutralization at the bottom level, thereby decoding $\tilde{A}$ of interest. Now consider user 2 side. User 2 can exploit $B$ to obtain $a \oplus \tilde{A}$. Note that $a \oplus \tilde{A}$ is not the same as $a$ wanted by user 2 in the perfect feedback scheme. Nonetheless $a \oplus \tilde{A}$ can serve the same role as $a$ and this will be clearer soon. Similarly, user $\tilde{2}$ sending $\tilde{B} \oplus (b \oplus A)$ on top while user $\tilde{1}$ sending $A$ (already delivered via the forward IC) on bottom, user $2$ can decode $\tilde{B}$ of interest and user 1 can get $b \oplus \tilde{B}$.

\begin{figure}[t]
\begin{center}
{\epsfig{figure=./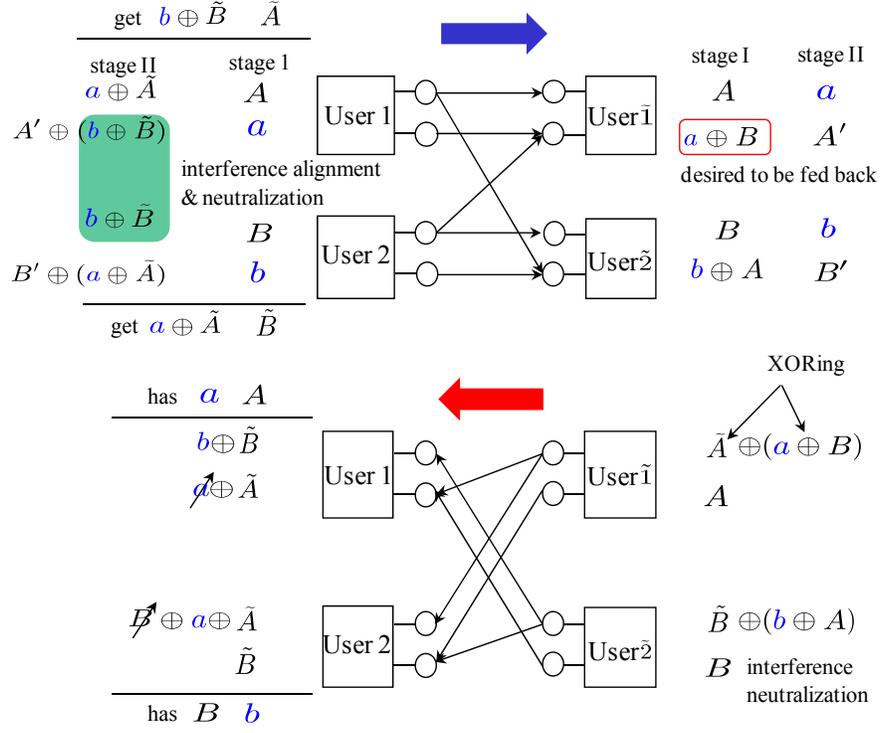, angle=0, width=0.7\textwidth}}
\end{center}
\caption{XORing with interferene neutralization for feedback strategy; Employing interference alignment and neutralization for refinement of the past corrupted symbols.} \label{fig:example1}
\vspace*{-0.1in}
\end{figure}

Now in stage II, we take a similar approach as in the perfect feedback case. User 2 intends to re-send $b$ on top. Recall in the perfect feedback scheme that user 1 sent the fedback symbol $b$ on bottom, in order to remove the interference caused to user $\tilde{1}$. But the situation is different here. User 1 has $b \oplus \tilde{B}$ instead. It turns out this can also play the same role. The idea is to use \emph{interference alignment} and \emph{neutralization}~\cite{MaddahAli:it08,Jafar:it08,Mohajer:it11}. User 1 sends $b \oplus \tilde{B}$ on bottom. Here $\tilde{B}$ seems to cause interference to user $\tilde{1}$. But this can be canceled as $\tilde{B}$ is already decoded at user 2 - see the bottom level at user 2 in the backward channel. User 2 sending $b \oplus \tilde{B}$ on top enables interference neutralization. This allows user 1 to send another fresh symbol $A'$ on bottom interference-free. Note that $b \oplus \tilde{B}$ can be viewed as the aligned interference w.r.t. $A$. Similarly user 1 sending $a \oplus \tilde{A}$ on top and user 2 sending $B' \oplus (a \oplus \tilde{A})$ on bottom, user $\tilde{1}$ and $\tilde{2}$ can decode $a$ and $B'$ respectively. This way, we achieve $C_{\sf pf }=3$ as in the perfect feedback case while maintaining $\tilde{C}_{\sf no}=2$. Hence, the claimed rate region is achieved. $\Box$

\begin{remark}[Exploiting Side Information]
\label{remark:scheme1}
 Note in Fig.~\ref{fig:example1} (bottom) that the two backward symbols $(\tilde{A}, \tilde{B})$ and the two feedback signals $(a \oplus B, b \oplus A)$ can be transmitted through 2-bit-capacity backward IC.
 This is because each user can cancel the seemingly interfering information by exploiting what has been received and its own symbols as side information. The side information allows the backward IC to have an effectively larger capacity, thus yielding a gain. This gain equalizes feedback cost, which in turn enables feedback to come for free in the end. The nature of the gain offered by side information coincides with that of the two-way relay channel~\cite{Wu:05} and many other examples~\cite{Katti:SIGCOMM06,BarYossef:FOCS06,SuhTse,MaddahAli:allterton10,MaddahAli:it13}. 
$\Box$
\end{remark} 

\subsection{Example 2: $(n,m)=(2,1), (\nt, \mt)=(0,1)$}
\label{sec:scheme2}

Scheme 1 is intended for the regimes in which feedback provides a gain only in one direction, e.g., $C_{\sf pf}> C_{\sf no}$ and $\tilde{C}_{\sf pf} = \tilde{C}_{\sf no}$. For the regimes feedback helps in both directions, we develop another transmission scheme (that we call \emph{Scheme 2}) which enables us to get sometimes all the way to perfect feedback capacities. In this section, we illustrate the scheme via Example 2 in which $(C_{\sf pf} = 3 > 2 = C_{\sf no}, \tilde{C}_{\sf pf} =1 > 0 = \tilde{C}_{\sf no})$ and one can achieve $(R,\tilde{R}) = (C_{\sf pf}, \tilde{C}_{\sf pf})$.
See Fig.~\ref{fig:example2_stage1} for the channel structure of the example.

Our scheme operates in two stages. But one noticeable distinction is that each stage comprises a sufficiently large number of time slots. Specifically stage I consists of $L$ time slots, while stage II uses $L+1$ time slots. 
It turns out our scheme ensures transmission of $6L$ forward symbols and $2L$ backward symbols, thus yielding:
\vspace*{-0.05in} 
\begin{align*}
(R, \rt) = \left( \frac{ 6L}{2L+1}, \frac{2L }{2L +1} \right) \longrightarrow (3,1) = (C_{\sf pf}, \tilde{C}_{\sf pf}). 
\end{align*}
as $L \rightarrow \infty$. Here are details.

Before describing details, let us review the perfect feedback scheme of the backward IC~\cite{SuhTse} which takes a relaying idea. User $\tilde{1}$ delivers a backward symbol, say $\tilde{a}$, to user 1 via the feedback-assisted path: user $\tilde{1}$ $\rightarrow$ user 2 $\rightarrow$ feedback $\rightarrow$ user $\tilde{2} \rightarrow$ user 1. Similarly user $\tilde{2}$ sends $\tilde{b}$ to user 2. This yields $\tilde{C}_{\sf pf}=1$.

\begin{figure}[!t]
\begin{center}
{\epsfig{figure=./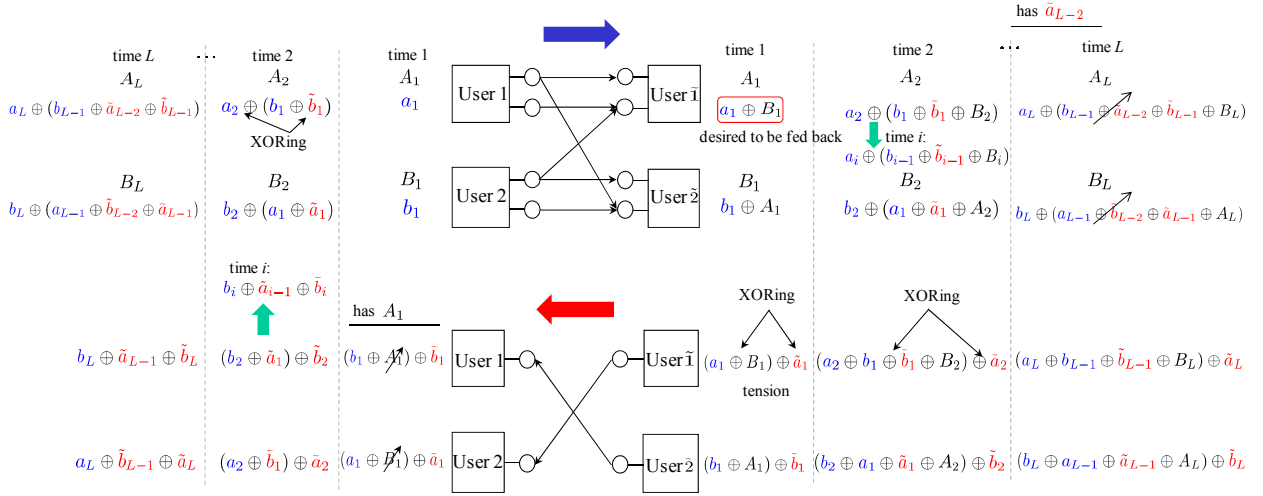, angle=0, width=1.0\textwidth}}
\end{center}
\caption{Stage I: Employ $L$ time slots. The operation in  each time slot is similar to stage I's operation in the perfect feedback case. We simply forward the XOR of a feedback signal and a new independent symbol. Here we see the tension between them. 
} \label{fig:example2_stage1}
\vspace*{-0.1in}
\end{figure}

{\bf Stage I:} We employ $L$ time slots.  In each time slot, we mimick the perfect feedback scheme although we have the tension between feedback and independent message transmissions. 

\emph{Time 1}: Four fresh symbols are transmitted over the forward IC. User $\tilde{1}$ then extracts the one that is desired to be fed back: $a_1 \oplus B_1$. Next we send the XOR of $a_1 \oplus B_1
$ and a backward symbol, say $\tilde{a}_1$. 
 Similarly user $\tilde{2}$ transmits $(b_1 \oplus A_1 ) \oplus \tilde{b}_1$. 
User 1 then gets $b_1 \oplus \tilde{b}_1$ using its own symbol $A_1$. Similarly user 2 gets $a_1 \oplus \tilde{a}_1$. 

\emph{Time 2}: User 1 superimposes $b_1 \oplus \tilde{b}_1$ with another new symbol, say $a_2$, sending the XOR on bottom. On top is another fresh symbol $A_2$ transmitted. Similarly user 2 sends $(B_2, b_2 \oplus (a_1 \oplus \tilde{a}_1))$. 
User $\tilde{1}$ transmits $(a_2 \oplus b_1 \oplus \tilde{b}_1 \oplus B_2)\oplus \tilde{a}_2$. 
Similarly user $\tilde{2}$ sends $(b_2 \oplus a_1 \oplus \tilde{a}_1 \oplus A_2) \oplus \tilde{b}_2$. User 1 then gets $b_2 \oplus \tilde{a}_1 \oplus \tilde{b}_2$ by using its own signal $a_1 \oplus A_2$. Similarly user 2 obtains $a_2 \oplus \tilde{b}_1 \oplus \tilde{a}_2$.
 Repeating the above, one can readily verify that at time $i \in \{2, \cdots, L \}$, user 1 and 2 get $b_{i} \oplus \tilde{a}_{i-1} \oplus \tilde{b}_{i-1}$ and $a_{i} \oplus \tilde{b}_{i-1} \oplus \tilde{a}_{i-1}$ respectively; similarly user $\tilde{1}$ and $\tilde{2}$ get $a_i \oplus b_{i-1} \oplus \tilde{b}_{i-1} \oplus B_i $ and $b_i \oplus a_{i-1} \oplus \tilde{a}_{i-1} \oplus A_i $ on bottom, respectively. See Fig.~\ref{fig:example2_stage1}.

 \begin{figure*}[!htp]
\begin{center}
{\epsfig{figure=./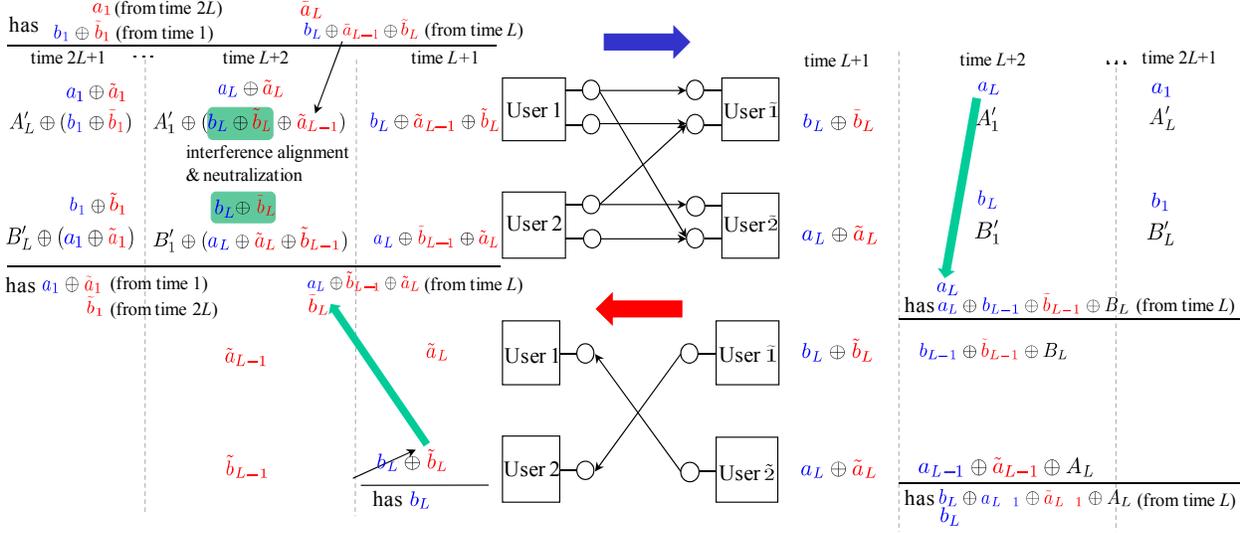, angle=0, width=1.0\textwidth}}
\end{center}
\caption{Stage II: Time $L+1$ aims at decoding $(\tilde{a}_L, \tilde{b}_L)$. At time $L+1+i$, given $(\tilde{a}_{L+1-i}, \tilde{b}_{L+1-i})$ (decoded in time $L+i$), we decode $(a_{L+1-i},b_{L+1-i})$ which in turn helping decoding $(\tilde{a}_{L-i},\tilde{b}_{L-i})$. We iterate this from $i=1$ to $i=L$.} \label{fig:example2_stage2}
\vspace*{-0.1in}
\end{figure*}


{\bf Stage II:} We employ $L+1$ time slots. We perform refinement w.r.t. the fresh symbols sent in stage I. The novel feature here is that the successive refinement occurs in a \emph{retrospective} manner: the fresh symbol sent at time $i$ is refined at time $2L+2-i$ in stage II where $1\leq i \leq L$. Here one key point to emphasize is that the refined symbol in stage II acts as \emph{side information}, which in turn helps refining other past symbols in later time. 
In the example, the decoding order reads: 
\begin{align}
\label{eq:decodingorder}
(\tilde{a}_L, \tilde{b}_L) \rightarrow (a_L, b_L) \rightarrow 
\cdots \rightarrow (\tilde{a}_1, \tilde{b}_1) \rightarrow (a_1,b_1).
\end{align}

\emph{Time $L+1$}: 
User 1 sends $b_L \oplus \tilde{a}_{L-1} \oplus \tilde{b}_L$ (received at time $L$) on bottom. It turns out this acts as \emph{ignition} for refining all the corrupted symbols in the past. 
Similarly user 2 sends $a_L \oplus \tilde{b}_{L-1} \oplus \tilde{a}_L$ on bottom. 
User $\tilde{1}$ can then obtain $b_L \oplus \tilde{b}_L$ which would be forwarded to user 2. User 2 can then decode $\tilde{b}_L$ of interest.
 Similarly $\tilde{a}_L$ is delivered to user 1.  

\emph{Time $L+2$}: The decoded symbols $(\tilde{a}_L,\tilde{b}_L)$ turn out to play a key role to refine past forward transmission. Remember that $b_L$ sent by user 2 at time $L$ in stage I was corrupted. User 2 re-transmits the $b_L$ on top as in the perfect feedback case. But here the problem is that the situation is different from that in the perfect feedback case where $b_L$ was available at user 1 and helped nulling interference. Note that $b_L$ is not available here. Instead user 1 has an interfered version: $b_L \oplus  \tilde{b}_L \oplus \tilde{a}_{L-1} $. Nonetheless we can effectively do the same as in the perfect feedback case. 
 User 1 sends $b_L \oplus \tilde{b}_L \oplus \tilde{a}_{L-1}$ on bottom.
  Clearly the neutralization is not perfect as it contains $\tilde{b}_L$.
Here the idea is to exploit the $\tilde{b}_L$ as side information to enable interference alignment and neutralization~\cite{MaddahAli:it08,Jafar:it08,Mohajer:it11}. 
Note that user 2 can exploit the knowledge of $\tilde{b}_L$ to construct the \emph{aligned interference} $b_L \oplus \tilde{b}_L$. Sending the $b_L \oplus \tilde{b}_L$ on top, user 2 can completely neutralize the interference as in the perfect feedback case. This enables user 1 to deliver $A_{1}'$ interference-free on bottom. Similarly we can deliver $(a_L, B_{1}')$.
On the other hand, exploiting $a_L$ (decoded right before) as side information, user $\tilde{1}$ can extract $b_{L-1} \oplus \tilde{b}_{L-1} \oplus B_L$ from the one received at time $L$. Sending this then allows user 2 to decode $\tilde{b}_{L-1}$. Similarly $\tilde{a}_{L-1}$ can be decoded at user 1. 

\emph{Time $L+3$ $\sim$ Time $2L+1$}: We repeat the same as before. At  time $L+1+i$ where $2 \leq i \leq L$, exploiting $(\tilde{a}_{L+1-i}, \tilde{b}_{L+1-i})$ decoded in time $L+i$, we decode $(a_{L+1-i}, b_{L+1-i})$, which in turn helps decoding $(\tilde{a}_{L-i}, \tilde{b}_{L-i})$. 

Now let us compute an achievable rate. In stage I, we sent $(4L,2L)$ fresh forward and backward symbols. In stage II, we sent only $2L$ fresh forward symbols. This yields the desired rate in the limit of $L \rightarrow \infty$.

\begin{remark}[Exploiting Future Symbols as Side Information]
\label{remark:whyfreelunch}
Note in Fig.~\ref{fig:example2_stage1} the two types of tension: (1) forward-symbol feedback vs. backward symbols; 
(2) the other counterpart. As illustrated in Fig.~\ref{fig:example2_stage2}, our scheme leads us to resolve both tensions. This then enables us to fully utilize the remaining resource level $2 \mt - \tilde{C}_{\sf pf} = 1$ for sending the forward-symbol feedback of $C_{\sf pf}- C_{\sf no}=1$, thereby achieving $C_{\sf pf}$. Similarly we can fill up the resource holes $2n - C_{\sf pf} = 1$ with the backward-symbol feedback of $\tilde{C}_{\sf pf} - \tilde{C}_{\sf no} = 1$. This comes from the fact that our feedback scheme exploits the following as side information: (i) past received signals; (ii) users' own symbols; (iii) partially decoded symbols. While the first two were already shown to be beneficial in the prior works~\cite{SuhTse,SuhWangTse} (as well as in Example 1), the third type of information  is the newly exploited one which turns out to yield the strong interaction gain. One can view this as \emph{future} information. Recall the decoding order~\eqref{eq:decodingorder}. When decoding $(\tilde{a}_{L-1}, \tilde{b}_{L-1})$, we exploited $(a_L,b_L)$ (future symbols w.r.t. $(\tilde{a}_{L-1}, \tilde{b}_{L-1})$) as side information. A conventional belief is that feedback allows us to know only about the \emph{past}. In contrast, we discover a new viewpoint on the role of feedback. Feedback enables exploiting future information as well via retrospective decoding. $\Box$
\end{remark}

\subsection{Proof Outline}
\label{sec:proofoutline}

We categorize regimes depending on the values of channel parameters. Notice that ${\cal C} = {\cal C}_{\sf no}$ when $(\alpha \in [\frac{2}{3}, 2], \at \in [\frac{2}{3}, 2])$. Also by symmetry, it suffices to consider only five regimes - see Fig.~\ref{fig:regimes_to_check}:
\begin{align*}
&{\rm (R1) } \; \alpha >2 , \at >2; \\
&{\rm (R2) } \; \alpha \in  (0, 2/3), \at \in (0, 2/3); \\
&{\rm (R3) } \; \alpha >2, \at \in [2/3,2 ]; \\
&{\rm (R4) } \; \alpha \in ( 0, 2/3), \at \in [2/3, 2 ]; \\
&{\rm (R5) } \; \alpha \in (0, 2/3), \at > 2.
\end{align*}

\begin{figure}[t]
\begin{center}
{\epsfig{figure=./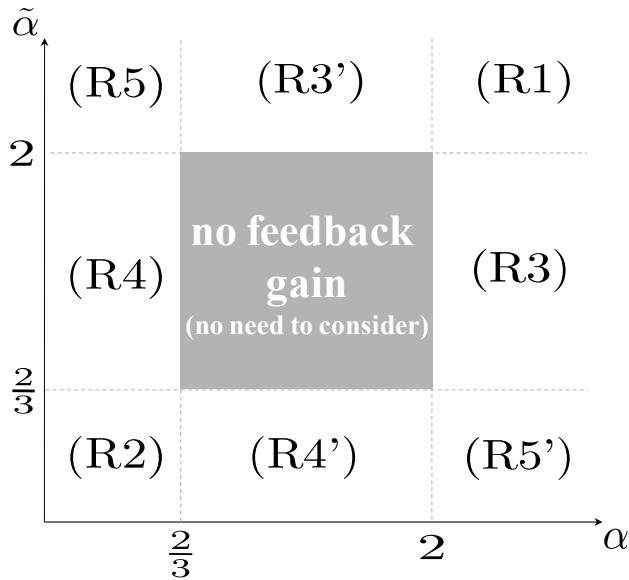, angle=0, width=0.5\textwidth}}
\end{center}
\caption{Regimes to check for achievability proof. By symmetry, it suffices to consider (R1), (R2), (R3), (R4), (R5).} \label{fig:regimes_to_check}
\end{figure}

As figured out in Fig.~\ref{fig:freeride_vs_nofreeride}, (R1) and (R2) are the ones in which there is no interaction gain. The proof builds only upon the perfect feedback scheme~\cite{SuhTse}. One thing to note here is that there are many subcases depending on whether or not available resources offered by a channel are enough to achieve the perfect feedback bound. Hence, a tedious yet careful analysis is required to cover all such subcases. On the other hand, (R3) and (R4) are the ones in which there is an interaction gain but only in one direction. So in this case, the nonfeedback scheme suffices for the backward IC while a non-trivial scheme needs to be employed for the forward IC. It turns out Scheme 1 illustrated in Example 1 plays a key role in proving the claimed achievable region. (R5) is the one in which there is an interaction gain and sometimes one can get to perfect feedback capacities. We fully utilize the ideas presented in Scheme 1 and Scheme 2 to prove the claimed rate region. One key feature to emphasize is that the idea of \emph{network decomposition} developed in~\cite{SuhGoelaGastpar:it16} is utilized to provide a conceptually simpler proof for generalization. Here we illustrate the network decomposition idea via Example 3, while leaving a detailed proof in Appendix~\ref{append:achievability_remaining}.

\begin{figure}[t]
\begin{center}
{\epsfig{figure=./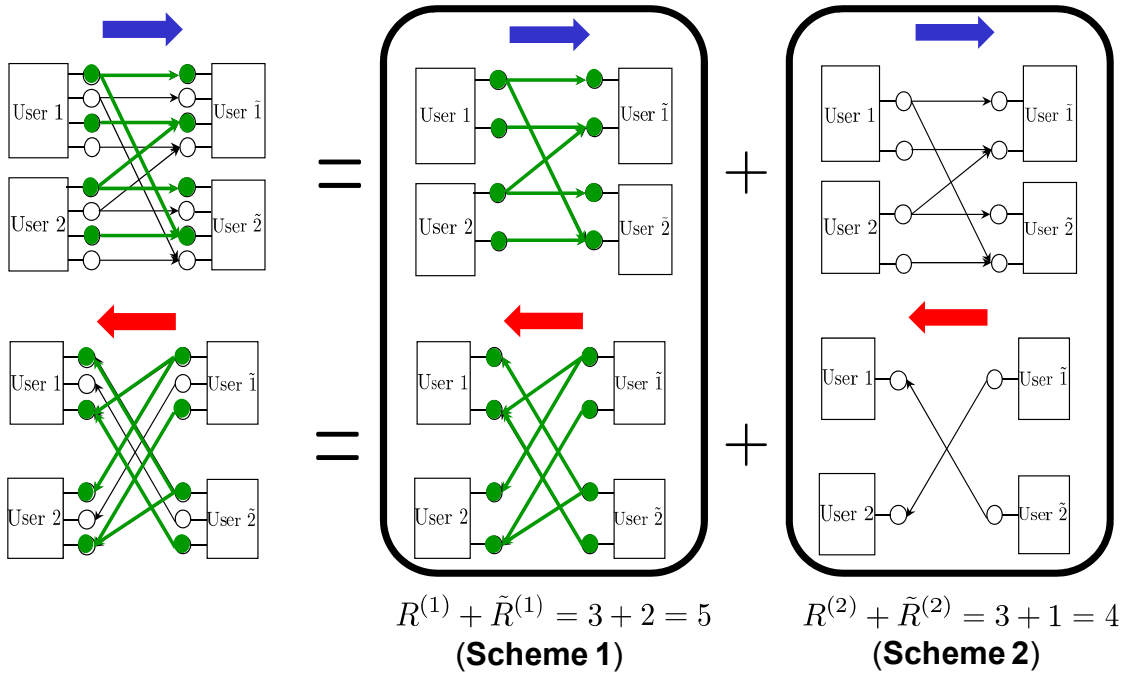, angle=0, width=0.9\textwidth}}
\end{center}
\caption{Achievaility for $(n,m)=(4,2), (\nt,\mt) = (1,3)$ via network decomposition.} \label{fig:example3}
\vspace*{-0.1in}
\end{figure}

{\bf Example 3: $(n,m)=(4,2), (\nt,\mt) = (1,3)$}:  Network decomposition relies on graph coloring. See Fig.~\ref{fig:example3}. For the forward IC, we assign a color (say green) to level 1 and the levels connected to level 1. The green-colored graph then represents a subchannel, say $(n^{(1)}, m^{(1)}) = (2,1)$, which has no overlap with the remaining uncolored subchannel $(n^{(2)}, m^{(2)}) = (2,1)$. Following the notation in~\cite{SuhGoelaGastpar:it16}, we represent this  by: $(4,2) \longrightarrow (2,1) \times (2,1)$. Similarly the backward channel can be decomposed as: $(1,3) \longrightarrow (1,2) \times (0,1)$. We then pair up one forward-subchannel $(2,1)$ and one backward-subchannel, say $(1,2)$, and apply \emph{Scheme 1} for the pair as in Fig.~\ref{fig:example1}. This gives $(R^{(1)}, \tilde{R}^{(1)}) = (3, 2)$. For the remaining pair of $(2,1)$ and $(0,1)$, we perform \emph{Scheme 2} independently. This yields $(R^{(2)}, \tilde{R}^{(2)}) = (3, 1)$. Combining these two achieves the desired rate region: $\{(R,\rt): R \leq C_{\sf pf} = 6, \rt \leq \tilde{C}_{\sf pf} = 3 \}$. $\Box$

\section{Converse Proof of Theorem~\ref{thm:capacityregion}}
\label{sec:S3_outerbound}

The first two~\eqref{eq:S3outerbound_1} and~\eqref{eq:S3outerbound_2} are the perfect-feedback bounds~\cite{SuhTse,Sahai:ITW09,SuhWangTse}. So the proof is immediate via a slight modification. The third bound~\eqref{eq:S3outerbound_3} is cutset: $R_1 + \tilde{R}_2 \leq n + \tilde{n}$ and $R_2 + \tilde{R}_1 \leq n + \tilde{n}$. The last is a new bound. For completeness, we will provide detailed proof for the cutset and perfect feedback bounds in the subsequent section. We will then derive the new bound in Section~\ref{sec:proofnovelbound}. 

\subsection{Proof of the Cutset \& Perfect Feedback Bound}
 
\textbf{Proof of~(\ref{eq:S3outerbound_3})}: Starting with Fano's inequality, we get
\begin{align*}
\begin{split}
N&(R_1 + \rt_2 - \epsilon_N) \leq I(W_1, \wt_2 ;Y_1^{N}, \tilde{W}_1, \tilde{Y}_2^N, W_2) \\
& \overset{(a)}{=} \sum H(Y_{1i}, \tilde{Y}_{2i} | \tilde{W}_1, W_2, Y_1^{i-1}, \tilde{Y}_2^{i-1}, X_{2i}) \\
& \overset{(b)}{=} \sum H(Y_{1i} | \tilde{W}_1, W_2, Y_1^{i-1}, \tilde{Y}_2^{i-1}, X_{2i})  + \sum H(\tilde{Y}_{2i} | \tilde{W}_1, W_2, Y_1^{i}, \tilde{Y}_2^{i-1}, X_{2i}, \tilde{X}_{1i}) \\
&\overset{(c)}{\leq} \sum H(Y_{1i} | X_{2i} )  + \sum H(\tilde{Y}_{2i} | \tilde{X}_{1i}) \\
& \overset{(d)}{\leq}  N (n + \nt)
\end{split}
\end{align*}
where $(a)$ follows from the fact that $(W_1, \wt_2)$ is independent of $(W_2, \tilde{W}_1)$, and $X_{2i}$ is a function of $(W_2,\tilde{Y}_2^{i-1})$; $(b)$ follows from the fact that $\tilde{X}_{1i}$ is a function of $(\wt_1, Y_1^{i-1})$; $(c)$ follows from the fact that conditioning reduces entropy; $(d)$ follows from the fact that the right-hand-side is maximized when $(X_1, X_2, \tilde{X}_1, \tilde{X}_2)$ are uniformly distributed and independent. Similarly one can show $N(R_2 + \rt_1 - \epsilon_N) \leq N( n + \tilde{n})$. If $(R_1,R_2, \rt_1, \rt_2)$ is achievable, then $\epsilon_N \rightarrow 0$ as $N$ tends to infinity. Therefore, we get the desired bound.

\textbf{Proof of~(\ref{eq:S3outerbound_1})}: Starting with Fano's inequality, we get
\begin{align*}
\begin{split}
&N(R_1 + R_2- \epsilon_N) \overset{(a)}{ \leq} I(W_1;Y_1^{N}| \tilde{W}_1, W_2, \wt_2) + I(W_2;Y_2^{N} | \tilde{W}_2, \wt_1)  \\
& = H(Y_1^{N}|\tilde{W}_1,  W_2, \wt_2) + H(Y_2^{N}| \tilde{W}_2, \wt_1)  \\
& \quad  -\left \{  H(Y_1^N, Y_2^N | \tilde{W}_1, W_2, \wt_2) -  H(Y_1^{N}|  \wt_1, \tilde{W}_2, W_2, Y_2^N) \right \} \\
& = H(Y_1^{N}|\wt_1, \tilde{W}_2, W_2, Y_2^N) -  H(Y_2^{N}|  \tilde{W}_1, W_2, \wt_2, Y_1^N) + H(Y_2^N | \tilde{W}_2, \wt_1 ) \\
& \leq H(Y_1^{N}|\wt_1, \tilde{W}_2, W_2, Y_2^N) + H(Y_2^N | \tilde{W}_2, \wt_1) \\
& \overset{(b)}{=} \sum H(Y_{1i} | \wt_1, \tilde{W}_2, W_2, Y_2^N, Y_1^{i-1},  \tilde{X}_1^i, \tilde{X}_{2i}, \tilde{Y}_2^i, X_{2i}, V_{1i}) + H(Y_2^N | \tilde{W}_2, \wt_1)  \\
& \overset{(c)}{\leq}  \sum H(Y_{1i} | V_{1i}, X_{2i} ) + \sum H(Y_{2i}) \\
& \leq N \left\{ (n-m)^+ + \max(n,m) \right \} = N \max(2n-m, m)
\end{split}
\end{align*}
where $(a)$ follows from the independence of $(W_1,W_2, \tilde{W}_1, \tilde{W}_2)$; $(b)$ follows from the fact that $\tilde{X}_{1}^{i}$ is a function of $(\wt_1, Y_1^{i-1})$, $X_{2i}$ is a function of $(W_2, \tilde{Y}_2^{i-1} )$, and $V_{1i}$ is a function of $(X_{2i}, Y_{2i})$; $(c)$ follows from the fact that conditioning reduces entropy. This completes the proof.

\subsection{Proof of a Novel Outer Bound}
\label{sec:proofnovelbound}

The proof hinges upon several lemmas stated below. The proof is streamlined with the help of a key notion, called \emph{triple mutual information} (or interaction information~\cite{InteractionInfo:54}), which is defined as 
\begin{align}
\label{eq:triplet}
I(X;Y;Z) := I (X;Y) - I(X;Y|Z).
\end{align}
It turns out that the commutative property of the notion plays a crucial role in deriving several key steps in the proof: 
\begin{align}
\label{eq:commutativity}
I(X;Y;Z) = I(X;Z;Y) = \cdots = I(Z;Y;X).
\end{align}
Using this notion and starting with Fano's inequality, we get
\begin{align*}
\begin{split}
&N(R_1 + R_2- \epsilon_N) \leq I(W_1;Y_1^{N}, \tilde{W}_1  ) + I(W_2;Y_2^{N}, \tilde{W}_2 )  \\
& \leq I (W_1; Y_{1}^N, V_1^N | \tilde{W}_1) + I (W_2 ; Y_{2}^N, V_2^N | \tilde{W}_2)   \\
& = \sum \left \{ I (W_1; Y_{1i}, V_{1i} |\tilde{W}_1,  Y_1^{i-1}, V_1^{i-1} )  + I (W_2 ; Y_{2i}, V_{2i} |\tilde{W}_2, Y_2^{i-1}, V_2^{i-1} )  \right \} \\
& = \sum \left \{ I (V_{1i}; W_1  | \tilde{W}_1,  Y_1^{i-1}, V_1^{i-1}) + I (Y_{1i};  W_1   | \tilde{W}_1, Y_1^{i-1}, V_1^{i}) \right. \\
& \quad \;\; \left. + I (V_{2i}; W_2 |\tilde{W}_2, Y_2^{i-1}, V_2^{i-1}) + I (Y_{2i}; W_2 |\tilde{W}_2, Y_2^{i-1}, V_2^{i})  \right \} \\
& \overset{(a)}{=} \sum \left \{  I ( Y_{1i}; W_1, \blue{ W_2, \tilde{W}_2 } |\tilde{W}_1, Y_1^{i-1}, V_1^{i}) + I (Y_{2i}; W_2, \blue{ W_1, \tilde{W}_1} |\tilde{W}_2, Y_2^{i-1}, V_2^{i}) \right. \\
& \left. + I (V_{1i}; W_1 | \tilde{W}_1,  Y_1^{i-1}, V_1^{i-1})  - I ( Y_{1i}; \blue{ W_2, \tilde{W}_2 } |W_1, \tilde{W}_1, Y_1^{i-1}, V_1^{i}) \right. \\
& \left. + I (V_{2i}; W_2 |\tilde{W}_2, Y_2^{i-1}, V_2^{i-1}) - I (Y_{2i}; \blue{ W_1, \tilde{W}_1} |W_2,\tilde{W}_2,  Y_2^{i-1}, V_2^{i}) \right \} \\
& \leq \sum \left \{ H( Y_{1i} | V_{1i} ) + H(Y_{2i} | V_{2i} ) \right. \\
& \left. + I (V_{1i}; W_1  |  \tilde{W}_1, Y_1^{i-1}, V_1^{i-1}) - I ( Y_{1i}; W_2, \tilde{W}_2 |W_1, \tilde{W}_1, Y_1^{i-1}, V_1^{i}) \right . \\
& \left. + I ( V_{2i}; W_2 | \tilde{W}_2, Y_2^{i-1}, V_2^{i-1}) -  I (Y_{2i}; W_1, \tilde{W}_1 |W_2, \tilde{W}_2, Y_2^{i-1}, V_2^{i})  \right \}
\end{split}
\end{align*}
where $(a)$ follows from a chain rule. By symmetry, we get:
\begin{align*}
\begin{split}
&N(\tilde{R}_1 + \tilde{R}_2- \epsilon_N) \leq  \sum \left \{ H( \tilde{Y}_{1i} | \tilde{V}_{1i} ) + H( \tilde{Y}_{2i} | \tilde{V}_{2i} ) \right. \\
& \left. + I (\tilde{V}_{1i}; \tilde{W}_1  | W_1, \tilde{Y}_1^{i-1}, \tilde{V}_1^{i-1}) - I ( \tilde{Y}_{1i}; W_2, \tilde{W}_2 |W_1, \tilde{W}_1, \tilde{Y}_1^{i-1}, \tilde{V}_1^{i}) \right . \\
& \left. + I ( \tilde{V}_{2i}; \tilde{W}_2 | W_2, \tilde{Y}_2^{i-1}, \tilde{V}_2^{i-1}) -  I (\tilde{Y}_{2i}; W_1, \tilde{W}_1 |W_2, \tilde{W}_2, \tilde{Y}_2^{i-1}, \tilde{V}_2^{i})  \right \}.
\end{split}
\end{align*}

Now adding the above two and using Lemma~\ref{lemma:blancebound} stated below, we get:
\begin{align*}
\begin{split}
&N(R_1 + R_2 + \tilde{R}_1 + \tilde{R}_2 - \epsilon_N) \\
& \leq \sum \left \{ H(Y_{1i} |  {V_{1i}}) + H(Y_{2i} | { V_{2i}} )  +  H( \tilde{Y}_{1i} | \tilde{V}_{1i})  + H( \tilde{Y}_{2i} | \tilde{V}_{2i}) \right \} \\
&  \leq 2N \max (n-m, m) + 2N \max (\tilde{n}- \tilde{m}, \tilde{m}).
\end{split}
\end{align*}
Hence, we get the desired bound.

\begin{lemma}
\label{lemma:blancebound}
\begin{align*}
\begin{split}
& \sum \left \{ I (V_{1i}; W_1 | \tilde{W}_1, Y_1^{i-1}, V_1^{i-1} ) -  I (Y_{1i}; W_2, \tilde{W}_2 | W_1, \tilde{W}_1, Y_1^{i-1}, V_1^{i} ) \right. \\
& \left . \quad +  I (V_{2i}; W_2 | \tilde{W}_2, Y_2^{i-1}, V_2^{i-1}) - I (Y_{2i}; W_1, \tilde{W}_1 | W_2, \tilde{W}_2, Y_2^{i-1}, V_2^i ) \right . \\
& \left . \quad + I (\tilde{V}_{1i}; \tilde{W}_1 | {W}_1, \tilde{Y}_1^{i-1}, \tilde{V}_1^{i-1} ) -  I (\tilde{Y}_{1i}; \tilde{W}_2, {W}_2 | \tilde{W}_1, {W}_1, \tilde{Y}_1^{i-1}, \tilde{V}_1^{i} ) \right. \\
& \left . \quad +  I (\tilde{V}_{2i}; \tilde{W}_2 | {W}_2, \tilde{Y}_2^{i-1}, \tilde{V}_2^{i-1}) - I (\tilde{Y}_{2i}; \tilde{W}_1, {W}_1 | \tilde{W}_2, {W}_2, \tilde{Y}_2^{i-1}, \tilde{V}_2^i ) \right\} \leq 0 
\end{split}
\end{align*}
\end{lemma}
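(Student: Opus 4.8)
The plan is to exploit the commutativity of triple mutual information, stated in~\eqref{eq:commutativity}, to pair each positive term with a negative term across the forward and backward links and show the pairing cancels (or is nonpositive) term by term. First I would rewrite each positive term $I(V_{1i};W_1 | \tilde{W}_1, Y_1^{i-1}, V_1^{i-1})$ and each negative term $I(Y_{1i};W_2,\tilde{W}_2 | W_1,\tilde{W}_1,Y_1^{i-1},V_1^i)$ in the language of triple mutual information by introducing the conditioning/un-conditioning variable appropriately, e.g. writing $I(V_{1i};W_1|\cdots) - I(V_{1i};W_1|\cdots,\text{extra})$ as a triple term $I(V_{1i};W_1;\text{extra}|\cdots)$. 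The key structural observation is that $V_{1i}$ is a deterministic function of $(X_{2i},Y_{2i})$, hence a function of quantities available on the backward side conditioned on $(W_2,\tilde{W}_2,\tilde{Y}_2^{i-1},\dots)$; and likewise $\tilde{V}_{1i}$ is a function of forward-side quantities. So the "$V$" positive term in the forward bound is really reaching into the backward link, and the "$\tilde V$" positive term is reaching into the forward link, which is what makes a cross-link pairing plausible.

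The core of the argument is a summation-by-parts / telescoping manipulation. After converting to triple mutual informations, I would use commutativity~\eqref{eq:commutativity} to swap the roles of the "received-signal" variable and the "message" variable inside each triple term, so that a term that looks like "information $V_{1i}$ gives about $W_1$, lost when we also know future signals" gets re-expressed as "information the future signals give about $W_1$, lost when we also know $V_{1i}$" — matching the form of one of the negative terms on the backward side. The goal is to show that, after summing over $i=1,\dots,N$ and telescoping the $Y_1^{i-1},V_1^{i-1}$-type histories, the four positive terms are dominated by the four negative terms. A clean way to organize this: group $I(V_{1i};W_1|\cdots)$ with $I(\tilde Y_{1i};W_2,\tilde W_2|\cdots)$ (both pertain to what flows between the forward user~1 and backward user~1), and symmetrically for the index~2 pair, then invoke commutativity inside each group plus the Markov structure $X_{2i}=f(W_2,\tilde Y_2^{i-1})$, $\tilde X_{1i}=f(\tilde W_1,Y_1^{i-1})$, etc., to collapse each group to something $\le 0$.

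The main obstacle I anticipate is bookkeeping the conditioning sets so the telescoping actually closes: the histories $Y_1^{i-1},V_1^{i-1}$ on one side and $\tilde Y_1^{i-1},\tilde V_1^{i-1}$ on the other are not literally the same set, so after commuting the triple terms one has to absorb the mismatch using the deterministic relations ($V_{1i}$ is a function of $(X_{2i},Y_{2i})$; $X_{2i}$ is a function of the backward past; a received signal is a function of the transmitted signals) and the independence of the four messages. Getting every single-letter term on the left to find its negative partner without leftover positive residue is the delicate part; I expect one will need to add and subtract a few "free" mutual-information terms (each nonnegative, so safely droppable on the correct side of an inequality) to make the cancellation exact. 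Once the per-$i$, per-index cancellation is established, summing over $i$ and $N\to\infty$ gives $\epsilon_N\to 0$ and hence the claimed inequality $\le 0$; I would present the forward index-1 group in full detail and then invoke symmetry for the remaining three.
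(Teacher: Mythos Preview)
Your high-level ingredients are right --- triple mutual information, commutativity~\eqref{eq:commutativity}, telescoping, and the deterministic relations $V_{1i}=f(X_{2i},Y_{2i})$, $\tilde V_{1i}=f(\tilde W_1,Y_1^{i-1})$ --- and these are indeed the tools the paper uses. But the structural plan has a genuine gap. You propose to pair the positive $V$-terms with the negative $\tilde Y$-terms across the forward/backward links and then ``collapse each group to something $\le 0$,'' handling one group in detail and invoking symmetry for the remaining three. This is not how the cancellation actually closes, and in fact no such group is $\le 0$ on its own.

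In the paper's argument, the four index-$1$ terms (the 1st, 2nd, 5th and 6th in the statement) are processed together --- via a nontrivial auxiliary telescoping result (Lemma~\ref{lemma:inside_detail1}) that your sketch does not anticipate --- and reduce not to something nonpositive but to the \emph{block} expression
\[
I(V_1^N,\tilde V_1^N;W_1,\tilde W_1)\;-\;I(V_2^N,\tilde V_2^N;W_2,\tilde W_2\mid W_1,\tilde W_1),
\]
which has no reason to be $\le 0$ by itself. Only after adding the symmetric index-$2$ block and then invoking the message independence $I(W_1,\tilde W_1;W_2,\tilde W_2)=0$ do the four block terms cancel globally to zero. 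So the proof is not per-$i$, not per-pair, and not per-index: the telescoping first lifts everything to full-block quantities, and the final inequality needs all eight terms simultaneously with a \emph{cross-index} cancellation. Your plan of establishing ``per-$i$, per-index cancellation'' and then summing cannot succeed as stated. Two further points: the bookkeeping you flag as ``the delicate part'' is in fact a standalone lemma (bounding a triple term $-I(V_{1i},\tilde V_{1i};W_1;Y_1^{i-1}\mid\cdots)$ together with $-I(Y_{1i};\tilde Y_1^i\mid\cdots)$ by a controlled difference), not a routine absorption; and there is no $\epsilon_N$ or Fano step inside this lemma --- the inequality is purely combinatorial/algebraic on mutual informations.
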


\subsection{Proof of Lemma~\ref{lemma:blancebound}}

First consider:
\begin{align*}
\begin{split}
\scriptstyle
&\textrm{(1st and 2nd terms in summation of LHS)} \\
&\overset{(a)}{=} \sum \left \{ I (V_{1i}; W_1 | \tilde{W}_1,Y_1^{i-1}, V_1^{i-1} ) - I (Y_{1i}; W_2, \tilde{W}_2, \blue{\tilde{Y}_1^{i}} | W_1, \tilde{W}_1, Y_1^{i-1}, V_1^{i}) \right. \\
&\overset{(b)}{=} \sum \left \{ I (V_{1i}, \blue{ \tilde{V}_{1i} }; W_1 |  \tilde{W}_1, Y_1^{i-1}, V_1^{i-1}, \blue{ \tilde{V}_{1}^{i-1} }) \right. \\
& \left. \quad - I ( Y_{1i} ; {\tilde{Y}_1^{i}} | W_1, \tilde{W}_1, Y_1^{i-1}, V_1^{i},\blue{\tilde{V}_1^{i}}) - I (Y_{1i}; W_2, \tilde{W}_2 | W_1, \tilde{W}_1, {\tilde{Y}_1^{i}}, Y_1^{i-1}) \right\} \\
&\overset{(c)}{=} \sum \left \{ I (V_{1i}, \tilde{V}_{1i} ; W_1 |  \tilde{W}_1, V_1^{i-1},  \tilde{V}_{1}^{i-1} )  - I (V_{1i}, \tilde{V}_{1i} ; W_1; \blue{Y_1^{i-1}} | \tilde{W}_1, V_1^{i-1},  \tilde{V}_{1}^{i-1} ) \right. \\
& \left.\quad - I ( Y_{1i} ; {\tilde{Y}_1^{i}} | W_1, \tilde{W}_1, Y_1^{i-1}, V_1^{i},\tilde{V}_1^{i}) - I (Y_{1i}; W_2, \tilde{W}_2 | W_1, \tilde{W}_1, {\tilde{Y}_1^{i}}, Y_1^{i-1}) \right\} 
\end{split}
\end{align*}
where $(a)$ follows from the fact that $\tilde{Y}_1^{i}$ is a function of $(W_1,\tilde{W}_1,W_2,\tilde{W}_2)$; $(b)$ follows from the fact that $\tilde{V}_{1}^{i}$ is a function of $(\tilde{W}_1, Y_1^{i-1})$; and $(c)$ is due to the definition of triple mutual information~\eqref{eq:triplet}.

Using Lemma~\ref{lemma:inside_detail1} stated at the end of this section, we get:
\begin{align*}
\begin{split}
&\textrm{(1st and 2nd terms in summation of LHS)} \\
&\leq \sum \left \{ I (V_{1i}, \tilde{V}_{1i} ; W_1 |  \tilde{W}_1, V_1^{i-1},  \tilde{V}_{1}^{i-1} ) 
 + I (\tilde{Y}_{1i}; Y_1^{i-1}  | W_1, \tilde{W}_1, \tilde{Y}_{1}^{i-1}, \tilde{V}_1^{i})  \right . \\
& \left.\quad - I (  \tilde{V}_{1i}; W_1, \tilde{Y}_1^{i-1} | \tilde{W}_1, V_1^{i-1}, \tilde{V}_1^{i-1} )  - I (Y_{1i}; W_2, \tilde{W}_2 | W_1, \tilde{W}_1, {\tilde{Y}_1^{i}}, Y_1^{i-1}) \right\}.
\end{split}
\end{align*}
Now combining this with the 5th and 6th terms in summation of LHS gives:
\begin{align*}
\begin{split}
&\textrm{(1st, 2nd, 5th and 6th terms of LHS in the claimed bound)} \\
&\overset{(a)}{\leq} \sum \left \{ I (V_{1i}, \tilde{V}_{1i} ; W_1 |  \tilde{W}_1, V_1^{i-1},  \tilde{V}_{1}^{i-1} ) 
 + I (\tilde{Y}_{1i}; Y_1^{i-1}  | W_1, \tilde{W}_1, \tilde{Y}_{1}^{i-1}, \tilde{V}_1^{i})  \right . \\
& \left.\quad - I (  \tilde{V}_{1i}; W_1, \tilde{Y}_1^{i-1} | \tilde{W}_1, V_1^{i-1}, \tilde{V}_1^{i-1} )  - I (Y_{1i}; W_2, \tilde{W}_2 | W_1, \tilde{W}_1, {\tilde{Y}_1^{i}}, Y_1^{i-1}) \right\} \\
& +\sum \left\{ I (\tilde{V}_{1i}; \tilde{W}_1 | {W}_1, \tilde{Y}_1^{i-1},\blue{V_1^{i-1}} ,\tilde{V}_1^{i-1} )  -  I (\tilde{Y}_{1i}; \tilde{W}_2, {W}_2, \blue{Y_1^{i-1}} | \tilde{W}_1, {W}_1, \tilde{Y}_1^{i-1}, \tilde{V}_1^{i} )\right\}\\
&\overset{(b)}{\leq} \sum \left \{ I (V_{1i}, \tilde{V}_{1i} ; W_1 |  \tilde{W}_1, V_1^{i-1},  \tilde{V}_{1}^{i-1} ) \right. \\
& \left.\quad - I (  \tilde{V}_{1i}; W_1, \tilde{Y}_1^{i-1} | \tilde{W}_1, V_1^{i-1}, \tilde{V}_1^{i-1} )  - I (Y_{1i}; W_2, \tilde{W}_2 | W_1, \tilde{W}_1, {\tilde{Y}_1^{i}}, Y_1^{i-1}) \right\} \\
& +\sum \left\{ I (\tilde{V}_{1i}; \tilde{W}_1 ,\blue {{W}_1, \tilde{Y}_1^{i-1}}|V_1^{i-1}, \tilde{V}_1^{i-1} )   -  I (\tilde{Y}_{1i}; \tilde{W}_2, {W}_2| \tilde{W}_1, {W}_1, \tilde{Y}_1^{i-1}, {Y_1^{i-1}} )\right\}\\
&\overset{(c)}{\leq} \sum \left \{ I (V_{1i}, \tilde{V}_{1i} ; W_1 |  \tilde{W}_1, V_1^{i-1},  \tilde{V}_{1}^{i-1} ) \right. \\
& \left. \quad  - I (\blue{Y_{1i},\tilde{Y}_{1i}}; W_2, \tilde{W}_2 | W_1, \tilde{W}_1, {\tilde{Y}_1^{i-1}}, Y_1^{i-1})  + I (\blue{V_{1i}},\tilde{V}_{1i}; \blue{\tilde{W}_1}|V_1^{i-1}, \tilde{V}_1^{i-1} )  \right \} \\
&\overset{(d)}{=}I(V_1^N,\tilde{V}_1^{N};W_1,\tilde{W}_1)
-I(Y_1^N,\tilde{Y}_1^N,\blue{V_2^N,\tilde{V}_2^{N}};W_2,\tilde{W}_2|W_1,\tilde{W}_1)\\
&\overset{}{\leq}I(V_1^N,\tilde{V}_1^{N};W_1,\tilde{W}_1)
-I({V_2^N,\tilde{V}_2^{N}};W_2,\tilde{W}_2|W_1,\tilde{W}_1)
\end{split}
\end{align*}
where $(a)$ follows from the fact that $V_{1}^{i-1}$ and $Y_1^{i-1}$ are functions of $(W_1,\tilde{Y}_1^{i-1})$ and $(W_1,W_2,\tilde{W}_1,\tilde{W}_2)$, respectively; $(b)$ follows from a chain rule (applied on the last term) and the non-negativity of mutual information; $(c)$ follows from a chain rule (combining the 2nd and 4th terms; also combining the 3rd and 5th terms) and the non-negativity of mutual information; $(d)$ follows from a chain rule (combining the 1st and 3rd terms) and the fact that $(V_2^N,\tilde{V}_2^{N})$ is a function of $(W_1,\tilde{W}_1,Y_1^N,\tilde{Y}_1^N)$. 

Applying the same to the 3rd, 4th, 7th and 8th terms in summation of LHS, we get: 
\begin{align*}
\begin{split}
&\textrm{(LHS in the claimed bound)} \\
&\overset{}{\leq}I(V_1^N,\tilde{V}_1^{N};W_1,\tilde{W}_1)
-I({V_2^N,\tilde{V}_2^{N}};W_2,\tilde{W}_2|W_1,\tilde{W}_1)\\
&+I(V_2^N,\tilde{V}_2^{N};W_2,\tilde{W}_2)
-I({V_1^N,\tilde{V}_1^{N}};W_1,\tilde{W}_1|W_2,\tilde{W}_2)  \\
&\overset{}{\leq}I(\blue{W_2,\tilde{W}_2},V_1^N,\tilde{V}_1^{N};W_1,\tilde{W}_1)
-I({V_2^N,\tilde{V}_2^{N}};W_2,\tilde{W}_2|W_1,\tilde{W}_1)\\
&+I(\blue{W_1,\tilde{W}_1},V_2^N,\tilde{V}_2^{N};W_2,\tilde{W}_2)
-I({V_1^N,\tilde{V}_1^{N}};W_1,\tilde{W}_1|W_2,\tilde{W}_2) = 0.
\end{split}
\end{align*}
This completes the proof.
\begin{lemma}
\label{lemma:inside_detail1}
\begin{align*}
\begin{split}
& - \sum \left \{ I (V_{1i}, \tilde{V}_{1i}; W_1; Y_1^{i-1} | \tilde{W}_1, V_1^{i-1}, \tilde{V}_{1}^{i-1})  +I (Y_{1i}; \tilde{Y}_{1}^{i} | W_1, \tilde{W}_1, Y_1^{i-1}, V_1^i,  \tilde{V}_1^{i} ) \right \} \\
&\leq  \sum \left \{ I (\tilde{Y}_{1i}; Y_1^{i-1} | W_1, \tilde{W}_1, \tilde{Y}_{1}^{i-1}, \tilde{V}_1^{i})  - I (  \tilde{V}_{1i}; W_1, \tilde{Y}_1^{i-1} | \tilde{W}_1, V_1^{i-1}, \tilde{V}_1^{i-1} ) \right \}.
\end{split}
\normalsize
\end{align*}
\end{lemma}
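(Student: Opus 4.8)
The goal is to prove Lemma~\ref{lemma:inside_detail1}, which is the innermost technical estimate supporting Lemma~\ref{lemma:blancebound}. My plan is to manipulate the left-hand side purely by the definition of triple mutual information~\eqref{eq:triplet}, its commutativity~\eqref{eq:commutativity}, the chain rule, and the causal structure of the encoders, so that the whole inequality reduces to the non-negativity of certain (conditional) mutual information terms. The first step is to rewrite the triple mutual information term $I(V_{1i},\tilde V_{1i};W_1;Y_1^{i-1}\mid \tilde W_1, V_1^{i-1},\tilde V_1^{i-1})$ using commutativity so that $Y_1^{i-1}$ and $W_1$ swap roles; concretely I would use $I(V_{1i},\tilde V_{1i};W_1;Y_1^{i-1}\mid\cdots) = I(V_{1i},\tilde V_{1i};Y_1^{i-1}\mid\cdots) - I(V_{1i},\tilde V_{1i};Y_1^{i-1}\mid W_1,\cdots)$, and then expand both pieces with the chain rule to separate the contribution of $V_{1i}$ from that of $\tilde V_{1i}$. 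The key structural fact to invoke here is that $\tilde V_{1i}$ (like $\tilde V_1^{i}$) is a function of $(\tilde W_1, Y_1^{i-1})$ and that $Y_1^{i-1}$ is a deterministic function of all four messages; these are exactly the causality facts already used at steps $(a)$–$(c)$ preceding the lemma.

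The second step is to handle the term $I(Y_{1i};\tilde Y_1^{i}\mid W_1,\tilde W_1, Y_1^{i-1}, V_1^i,\tilde V_1^i)$. Here I would split $\tilde Y_1^i = (\tilde Y_1^{i-1},\tilde Y_{1i})$ via the chain rule and observe that, conditioned on $(W_1,\tilde W_1)$ together with $Y_1^{i-1}$ and the appropriate $V$'s, several of the resulting cross terms either vanish or can be bounded by the target terms $I(\tilde Y_{1i};Y_1^{i-1}\mid W_1,\tilde W_1,\tilde Y_1^{i-1},\tilde V_1^i)$ on the right-hand side. The idea is that $\tilde Y_{1i}$ carries no new information about $Y_{1i}$ beyond what $Y_1^{i-1}$ already tells us once $\tilde Y_1^{i-1}$ is conditioned on, because the forward channel at time $i$ is memoryless given the inputs, and the inputs at time $i$ are determined by $(W_1,W_2)$ and past backward outputs. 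After these manipulations, moving everything to one side, every remaining term should be a genuine (conditional) mutual information with a plus sign, and the inequality follows from non-negativity; the negative term $-I(\tilde V_{1i};W_1,\tilde Y_1^{i-1}\mid\tilde W_1,V_1^{i-1},\tilde V_1^{i-1})$ on the right is exactly the slack produced by the chain-rule expansion of the $V_{1i},\tilde V_{1i}$ piece in step one.

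The main obstacle I anticipate is bookkeeping the conditioning sets correctly through the commutativity and chain-rule steps: triple mutual information is notoriously error-prone because one must condition on the \emph{same} variables in each of the two bilinear pieces, and here the conditioning sets ($\tilde W_1, V_1^{i-1},\tilde V_1^{i-1}$ versus the richer sets appearing with $Y_1^{i-1},\tilde Y_1^{i-1}$) differ between terms, so I need to carefully insert and remove conditioning variables (justified by the stated determinism facts) to line them up before applying commutativity. A secondary subtlety is making sure that when I split $\tilde Y_1^i$, the term $I(Y_{1i};\tilde Y_{1i}\mid\cdots)$ is bounded in the right direction — I expect to need the fact that $\tilde Y_{1i}$ is a function of the time-$i$ transmitted signals $(\tilde X_{1i},\tilde X_{2i})$ which are themselves functions of $(\tilde W_1,\tilde W_2,Y_1^{i-1},Y_2^{i-1})$, so that conditioned on $(W_1,\tilde W_1, Y_1^{i-1}, V_1^i,\tilde V_1^i)$ the residual dependence flows only through $W_2,\tilde W_2$ and the other user's past, which lets me relate it to the right-hand-side terms rather than leaving an unbounded positive quantity. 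Once the conditioning sets are aligned, the remaining work is routine chain-rule algebra.
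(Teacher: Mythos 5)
Your plan correctly identifies the opening move --- the paper's proof also begins by applying commutativity/the definition of triple mutual information to rewrite $-I(V_{1i},\tilde{V}_{1i};W_1;Y_1^{i-1}\mid \tilde{W}_1,V_1^{i-1},\tilde{V}_1^{i-1})$ as $I(V_{1i},\tilde{V}_{1i};Y_1^{i-1}\mid W_1,\tilde{W}_1,V_1^{i-1},\tilde{V}_1^{i-1})-I(V_{1i},\tilde{V}_{1i};Y_1^{i-1}\mid \tilde{W}_1,V_1^{i-1},\tilde{V}_1^{i-1})$, drops $V_{1i}$ from the negative piece, and uses the determinism of $V_{1i}$ as a function of $(W_1,\tilde{Y}_1^i)$ and of $\tilde{V}_{1i}$ as a function of $(\tilde{W}_1,Y_1^{i-1})$. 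Your reading of where the term $-I(\tilde{V}_{1i};W_1,\tilde{Y}_1^{i-1}\mid \tilde{W}_1,V_1^{i-1},\tilde{V}_1^{i-1})$ comes from is also essentially right.

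However, there is a genuine gap in your second step. The term $I(Y_{1i};\tilde{Y}_1^i\mid W_1,\tilde{W}_1,Y_1^{i-1},V_1^i,\tilde{V}_1^i)$ is not handled in the paper by any memorylessness or vanishing-cross-term argument, and your claim that ``$\tilde{Y}_{1i}$ carries no new information about $Y_{1i}$ beyond what $Y_1^{i-1}$ already tells us'' is neither needed nor obviously true (both $Y_{1i}$ and $\tilde{Y}_{1i}$ still depend on $W_2,\tilde{W}_2$ after the conditioning). What actually happens is structural: replacing $V_{1i}$ by $\tilde{Y}_1^i$ in the positive piece produces a correction $-I(\tilde{Y}_1^i;Y_1^{i-1}\mid W_1,\tilde{W}_1,V_1^i,\tilde{V}_1^i)$, which merges with $-I(Y_{1i};\tilde{Y}_1^i\mid\cdots)$ via the chain rule into $-I(\tilde{Y}_1^i;Y_1^i\mid W_1,\tilde{W}_1,V_1^i,\tilde{V}_1^i)$; meanwhile a chain-rule expansion of the remaining positive piece generates $+I(\tilde{Y}_1^{i-1};Y_1^{i-1}\mid W_1,\tilde{W}_1,V_1^{i-1},\tilde{V}_1^{i-1})$. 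These two families are then cancelled against each other by an \emph{index shift in the summation over} $i$ (the sum of the positive terms over $i=1,\dots,N$ equals the sum of the negative terms over $i=0,\dots,N-1$, hence is dominated by it). Without this telescoping step the positive term $I(\tilde{Y}_1^{i-1};Y_1^{i-1}\mid W_1,\tilde{W}_1,V_1^{i-1},\tilde{V}_1^{i-1})$ is left uncancelled and is not bounded by anything on the right-hand side, so your strategy of ``moving everything to one side and invoking non-negativity term by term'' cannot close. You need to add the summation-level telescoping argument; it is the key idea of this lemma, not routine bookkeeping.
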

\begin{proof}
See Section~\ref{sec:proofoflemma2}.
\end{proof}

\subsection{Proof of Lemma~\ref{lemma:inside_detail1}}
\label{sec:proofoflemma2}

\begin{align*}
\begin{split}
& - \sum \left \{ I (V_{1i}, \tilde{V}_{1i}; W_1;Y_1^{i-1} | \tilde{W}_1, V_1^{i-1}, \tilde{V}_{1}^{i-1})  +I (Y_{1i}; \tilde{Y}_{1}^{i} | W_1, \tilde{W}_1,Y_1^{i-1}, V_1^i,  \tilde{V}_1^{i} ) \right \} \\
& \overset{(a)}{=} \sum \left \{ I (V_{1i}, \tilde{V}_{1i}; Y_1^{i-1}  | W_1, \tilde{W}_1, V_1^{i-1}, \tilde{V}_{1}^{i-1}) \right. \\
& \left . \qquad \;\;\;\; - I (V_{1i}, \tilde{V}_{1i}; Y_1^{i-1}  | \tilde{W}_1,V_1^{i-1}, \tilde{V}_{1}^{i-1}) - I (Y_{1i}; \tilde{Y}_{1}^{i} | W_1, \tilde{W}_1, Y_1^{i-1}, V_1^i,  \tilde{V}_1^{i} ) \right \} \\
& \overset{(b)}{\leq} \sum \left \{ I (V_{1i}, \tilde{V}_{1i}; Y_1^{i-1}  | W_1, \tilde{W}_1,V_1^{i-1}, \tilde{V}_{1}^{i-1}) \right. \\
& \left . \qquad \;\;\;\; - I( \tilde{V}_{1i};Y_1^{i-1} | \tilde{W}_1, V_1^{i-1}, \tilde{V}_{1}^{i-1}) - I (Y_{1i}; \tilde{Y}_{1}^{i} | W_1, \tilde{W}_1,Y_1^{i-1}, V_1^i,  \tilde{V}_1^{i} ) \right \} \\
& \overset{(c)}{=} \sum \left \{ I ( \tilde{Y}_1^{i}, \tilde{V}_{1i}; Y_1^{i-1}  | W_1, \tilde{W}_1, V_1^{i-1}, \tilde{V}_{1}^{i-1})  - I ( \tilde{Y}_1^{i}; Y_1^{i-1} | W_1, \tilde{W}_1, V_1^i, \tilde{V}_1^{i} ) \right . \\
& \left . \qquad \;\;\;\; - I( \tilde{V}_{1i};Y_1^{i-1} | \tilde{W}_1, V_1^{i-1}, \tilde{V}_{1}^{i-1})  - I (Y_{1i}; \tilde{Y}_{1}^{i} | W_1, \tilde{W}_1, Y_1^{i-1}, V_1^i,  \tilde{V}_1^{i} ) \right \} \\
& \overset{(d)}{=} \sum \left \{ I ( \tilde{Y}_1^{i}, \tilde{V}_{1i}; Y_1^{i-1}  | W_1, \tilde{W}_1, V_1^{i-1}, \tilde{V}_{1}^{i-1}) \right. \\
& \left. \qquad \;\;\;\; - I ( \tilde{Y}_1^{i}; Y_1^{\blue{i}} | W_1, \tilde{W}_1, V_1^i, \tilde{V}_1^{i} )  - I( \tilde{V}_{1i};Y_1^{i-1} | \tilde{W}_1, V_1^{i-1}, \tilde{V}_{1}^{i-1}) \right \} \\
& \overset{(e)}{=} \sum \left \{ I ( \tilde{Y}_{1i}, \tilde{V}_{1i}; Y_1^{i-1}  | W_1, \tilde{W}_1, \tilde{Y}_1^{i-1},  V_1^{i-1}, \tilde{V}_{1}^{i-1}) + I ( \tilde{Y}_1^{i-1}; Y_1^{i-1} | W_1, \tilde{W}_1,V_1^{i-1}, \tilde{V}_1^{i-1}) \right. \\
& \left. \qquad \;\;\;\; -  I ( \tilde{Y}_1^{i}; Y_1^{i} | W_1, \tilde{W}_1, V_1^i, \tilde{V}_1^{i} )  - I( \tilde{V}_{1i};Y_1^{i-1} | \tilde{W}_1, V_1^{i-1}, \tilde{V}_{1}^{i-1}) \right \} \\
& \overset{(f)}{\leq} \sum \left \{ I ( \tilde{Y}_{1i}, \tilde{V}_{1i}; Y_1^{i-1}  | W_1, \tilde{W}_1, \tilde{Y}_1^{i-1},  V_1^{i-1}, \tilde{V}_{1}^{i-1}) - I( \tilde{V}_{1i};Y_1^{i-1} | \tilde{W}_1, V_1^{i-1}, \tilde{V}_{1}^{i-1}) \right \} \\
& \overset{(g)}{=} \sum \left \{ I ( \tilde{Y}_{1i} ; Y_1^{i-1}  | W_1, \tilde{W}_1, \tilde{Y}_1^{i-1},  V_1^{i-1}, \blue{\tilde{V}_{1}^{i}})  - I( \tilde{V}_{1i}; Y_1^{i-1};\blue{W_1, \tilde{Y}_1^{i-1}} | \tilde{W}_1, V_1^{i-1}, \tilde{V}_{1}^{i-1}) \right \} \\
& \overset{(h)}{=} \sum \left \{ I ( \tilde{Y}_{1i} ; Y_1^{i-1}  | W_1, \tilde{W}_1, \tilde{Y}_1^{i-1} ,\tilde{V}_{1}^{i}) - I( \tilde{V}_{1i}; W_1, \tilde{Y}_1^{i-1} | \tilde{W}_1, V_1^{i-1}, \tilde{V}_{1}^{i-1}) \right \} \\
\end{split}
\end{align*}
where $(a)$ follows from the definition of triple mutual information~\eqref{eq:triplet}; $(b)$ follows from the non-negativity of mutual information; $(c)$ follows from a chain rule and the fact that $V_{1i}$ is a function of $(W_1, \tilde{Y}_1^{i})$ (see below)
\begin{align*}
&\sum  I (V_{1i}, \tilde{V}_{1i}; Y_1^{i-1}  | W_1, \tilde{W}_1,V_1^{i-1}, \tilde{V}_{1}^{i-1})  \\
 =& \sum I (V_{1i}, \blue{\tilde{Y}_1^i }, \tilde{V}_{1i}; Y_1^{i-1}  | W_1, \tilde{W}_1,V_1^{i-1}, \tilde{V}_{1}^{i-1})  - \sum I ( \blue{\tilde{Y}_1^i }; Y_1^{i-1}  | W_1, \tilde{W}_1,V_1^{i}, \tilde{V}_{1}^{i}) \\
 = & \sum I (\tilde{Y}_1^i , \tilde{V}_{1i}; Y_1^{i-1}  | W_1, \tilde{W}_1,V_1^{i-1}, \tilde{V}_{1}^{i-1})  - \sum I ( \tilde{Y}_1^i; Y_1^{i-1}  | W_1, \tilde{W}_1,V_1^{i}, \tilde{V}_{1}^{i}); 
\end{align*} 
$(d)$ follows from a chain rule (combining the 2nd and 4th terms); $(e)$ follow from a chain rule (applying to the 1st term); $(f)$ follows from 
\begin{align*}
&\sum_{i=1}^{N} I ( \tilde{Y}_1^{i-1}; Y_1^{i-1} | W_1, \tilde{W}_1, V_1^{i-1}, \tilde{V}_1^{i-1})  \\ 
=& \sum_{i=0}^{N-1} I ( \tilde{Y}_1^{i}; Y_1^{i} | W_1, \tilde{W}_1, V_1^{i}, \tilde{V}_1^{i}) \leq \sum_{i=1}^{N} I ( \tilde{Y}_1^{i}; Y_1^{i} | W_1, \tilde{W}_1, V_1^{i}, \tilde{V}_1^{i});
\end{align*} 
$(g)$ follows from a chain rule and the definition of triple mutual information; $(h)$ follows from the fact that $V_1^{i-1}$ and $\tilde{V}_{1i}$ are a function of $(W_1, \tilde{Y}_1^{i-2})$ and $(\tilde{W}_1,Y_1^{i-1})$, respectively.

\section{Discussion}
\label{section:Comparison}

%

\subsection{System Implication}
\label{disc:systemimplication}

\begin{figure}[t]
\begin{center}
{\epsfig{figure=./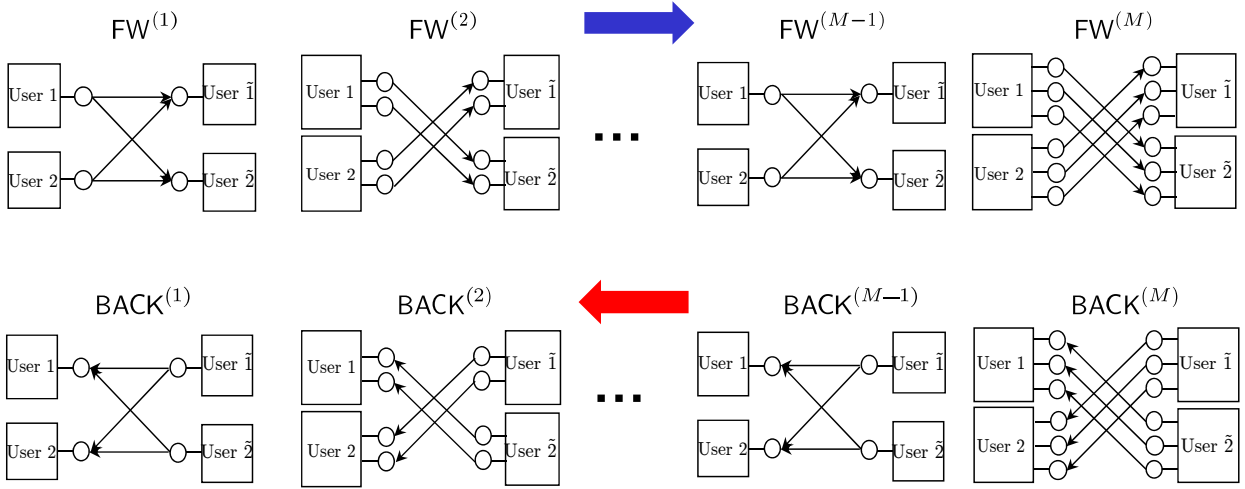, angle=0, width=1.0\textwidth}}
\end{center}
\caption{Two-way parallel ICs. The rich diversity on channel gains across many parallel subchannels can often occur in broadband systems.} \label{fig:broadband}
\end{figure}

As suggested in Fig.~\ref{fig:freeride_vs_nofreeride}, an interaction gain occurs when forward and backward ICs are somewhat different. This asymmetry occurs naturally in FDD systems where the forward and backward channels are on completely different bands. Even in TDD systems, the asymmetry can occur since the forward and backward channels can be on different subcarriers or different coherent time. Also one can create this asymmetry by opportunistically pairing subbands for the forward and backward transmissions. While this asymmetry is not likely to occur in narrowband systems, it can often occur in broadband systems where there are a multitude of subchannels with a wide dynamic range of channel gains. For example, in 4G-LTE and 5G systems, one can easily expect rich diversity on channel gains, since an operating bandwidth of the systems is much larger than coherence bandwidth of typical wireless channels (around the order of 0.1 MHz).

Fig.~\ref{fig:broadband} illustrates an example which can represent such scenario where there are a variety of parallel subchannels. Our results suggest that pairs of $({\sf FW}^{(1)}, {\sf BACK}^{(2)})$ and $({\sf FW}^{(2)}, {\sf BACK}^{(1)})$, for instance, can provide a significant gain with interaction. Another interesting observation is that even though forward-and-backward parallel ICs are identical, there exist many pairs of forward-backward subchannels that can yield capacity improvements. In tomorrow's communication systems, a broader system bandwidth is expected to support a variety of multimedia services. Hence, it is believed that our results will provide detailed guidelines as to how to design future communication systems.



In this paper, we investigate the benefit of interaction for a full duplex system. This is only for illustrative purpose. As suggested in Remarks~\ref{remark:scheme1} and~\ref{remark:whyfreelunch}, the nature of the interaction gain comes from exploiting the past received signals, partially decoded symbols and users' own information as side information. This nature is not limited to the full duplex system. So one can readily see that the interaction gain carries over to the half duplex system. While the detailed capacity region of the half-duplex system is distinct, the channel regimes in which feedback offers a gain remain unchanged. In other words, we have the same picture as in Fig.~\ref{fig:freeride_vs_nofreeride}.

\subsection{Translation to the Gaussian Channel}
\label{disc:translationtoGaussian}

The deterministic-channel achievability proposed in this work gives insights into an achievable scheme in the Gaussian channel. This is inspired by several observations that can be made from Scheme 1 (see Example 1 in Fig.~\ref{fig:example1}) and Scheme 2 (see Example 2 in Figs.~\ref{fig:example2_stage1} and~\ref{fig:example2_stage2}).

\emph{(Extracting feedback signals $\rightarrow $ quantize-\&-binning)}: 
Note in Fig.~\ref{fig:example1} that the fedback signal $a \oplus B$ at user $\tilde{1}$ can be interpreted as a quantized version of the received signal $(A,a \oplus B)$ at the level below the clean signal $A$. This motivates the use of \emph{quantize-and-binning}~\cite{Salman:IT11,Lim:it11} in the Gaussian channel. There are two points to make. The first is that the \emph{binning} scheme~\cite{ElGamalBook} can be employed solely without quantization in this example. Binning the received signal $(A, a \oplus B)$ may construct a linear combination of the two components: $a \oplus B \oplus A$. 
The distinction is then user $\tilde{1}$ feeds back $\tilde{A} \oplus (a \oplus B \oplus A)$ instead. Nonetheless, user 1 can still get $\tilde{A}$ of interest, as $A$ is also known. On the other hand, user 2 obtains $a \oplus A \oplus \tilde{A}$ instead of $a \oplus \tilde{A}$. This is not an issue either. We can still achieve interference alignment and neutralization in forward transmission at stage II. User 1 sending $a \oplus A \oplus \tilde{A}$ on top (with the help of the decoded symbol $\tilde{A}$) and user 2 sending $B' \oplus (a \oplus A \oplus \tilde{A})$, user $\tilde{2}$ can still obtain $B'$ interference-free. Also user $\tilde{1}$ can get $a$ with the help of $A$ which has already been received in stage I. The second point to note is that the binning-only approach might not work properly for other channel parameter regimes. This is because mixing all the equations may include some undesirable symbols that prevent the optimal transmission. In that case, both quantization and binning are desired to be employed with a careful choice of a quantization level, set to exclude undesirable symbols. 

\emph{(XORing with interference neutralization $\rightarrow$ superposition with dirty paper coding)}: Observe in Fig.~\ref{fig:example1} that the fedback signal $a \oplus B$ is XORed with a backward symbol $\tilde{A}$. This motivates the use of \emph{superposition coding} in the Gaussian channel. On the other hand, user $\tilde{2}$ sends $B$ on bottom for interference neutralization. To this end, we employ quantization scheme for extracting $B$ from the received signal $(B,b \oplus A)$ and utilize \emph{dirty paper coding}~\cite{Costa:it83} for nulling. 

\emph{(Interference alignment and neutralization $\rightarrow$ structured coding)}: Note at the second stage in Fig.~\ref{fig:example1} that user 2 computes the XOR of $b$ (its own symbol) and $\tilde{B}$ (decoded from the received signal in backward transmission) and then sends the XOR on a proper level (the top level) for nulling. This motivates the use of \emph{structured coding}~\cite{NazerGastpar:it11}, as computation needs to be made across appropriate symbols and the computed signal should be placed in a structured location for nulling.  



\emph{(Retrospective decoding)}: To be the best of our knowledge, this is a novel feature that has never been introduced in network information theory literature. Hence, it requires a new achievability technique which includes a careful decoding order as well as sets proper symbols to decode for each time slot. Also note that decoded symbols in an intermediate time slot are part of the entire symbols. See Fig.~\ref{fig:example2_stage2} for instance. Here $a_L$ (a decoded symbol in time $L+2$) is part of the entire symbols $(A_L, a_L)$ sent in time $L$. Hence, this scheme needs to be properly combined with Han-Kobayashi message splitting~\cite{HanKoba:it81}.

\subsection{Unified Achievability}
The noisy network coding~\cite{Lim:it11} together with Han-Kobayashi message splitting is a fairly generic scheme that yields reasonably good performances for a variety of multi-user channels. It implements many achievablility techniques such as quantize-and-binning and superposition coding. However, it has a room for improvement as it does not incorporate dirty paper coding and structured coding.
An effort has been made by Nazer-Gastpar~\cite{NazerGastpar:it11} for implementing structured codes. 

But this approach still has a room for improvement, as it does not allow for the key operation that appears in our achievability: \emph{retrospective decoding}. As suggested in Example 2, the key operation seems required for achieving the optimal performance. There seems no way to achieve the perfect feedback bound without an intermediate decoding of partial symbols which admits a carefully-designed backward ordering. One future work of interest is to develop a generic achievable scheme that can be applied to general discrete memoryless networks as well as unifies all of the techniques mentioned earlier: (1) quantize-and-binning; (2) superposition coding (or block Markov coding); (3) structured coding; (4) Han-Kobayashi message-splitting; (5) retrospective decoding. This development is expected to open the door to characterizing and/or approximating many of interesting interference networks.


%
\subsection{Unified Converse}

In this work, we develop a new converse technique which well captures the tension between feedback and independent message transmissions. Hence, unlike the prior upper bounds such as cutset~\cite{ElGamalBook}, genie-aided bounds~\cite{ElGamal:it82,Kramer:it02, SuhTse, Rini:it11, WangTse:it11, Vinod:it11, Sahai:ITW09}, generalized network sharing bounds~\cite{Kamath:allerton14}, it gives rise to the tight capacity characterization of interactive multi-user channels like the two-way IC. Encouragingly, our novel bound~\eqref{eq:S3outerbound_4} subsumes the following bounds as special cases: the nonfeedback-case counterpart $R_1 + R_2 \leq H(Y_1|V_1) + H(Y_2 |V_2)$~\cite{ElGamal:it82}; the rate-limited-feedback-case counterpart $R_1 + R_2 \leq H(Y_1|V_1) + H(Y_2|V_2) + C_{ {\sf FB}1}^{\sf bit pipe} + C_{ {\sf FB}2}^{\sf bit pipe}$~\cite{AlirezaSuhAves}. Here $C_{ {\sf FB}i}^{\sf bit pipe}$ denotes the capacity of the bit-piped feedback link that connects user $\tilde{i}$ to user $i$. One future work of interest is to extend this bound to arbitrary discrete memoryless networks in which many nodes interact with each other.

\subsection{Role of Interaction in General Networks}
\label{disc:extensiontogeneralnetworks}

This work focuses on an interference channel setting in which each user wishes to deliver its own message to its counterpart. As mentioned earlier, the nature of interaction gain is not limited to this particular setting. So it would be interesting to explore the role of interaction for a variety of different settings. While initial efforts along this research direction have been made for a multicast channel setting~\cite{SuhGoelaGastpar:isit12}, function computation settings~\cite{SuhGastpar:isit13,ShinSuh:allerton14}, and multi-hop network settings~\cite{ChenOzgurDiggavi:allerton14}, an explicit comparison between non-interactive vs interactive scenarios was not made yet. One research direction of interest is to investigate the capacity regions of such channels, thereby discovering two-way scenarios  in which one can achieve a huge interaction gain.

\section{Conclusion}
\label{sec:conclusion}

We characterized the capacity region of the two-way deterministic IC. As a consequence, we discovered an interesting fact that one can even get to perfect feedback capacities in both directions. In the process of obtaining this result, we found a new role of feedback: Feedback enables exploiting even the future information as side information via retrospective decoding. 
Our future work includes: (1) Translating to the Gaussian channel; (2) Discovering other two-way scenarios in which one can achieve a huge interaction gain; (3) Generalizing our new achievability to broader network contexts.

\newpage

\appendices

\section{Achievability Proof of Theorem~\ref{thm:capacityregion}: Generalization to Arbitrary $(n,m,\nt,\mt)$}
\label{append:achievability_remaining}

One key idea for generalization is to use the network decomposition in~\cite{SuhGoelaGastpar:it16} (also illustrated via Example 3 in Fig.~\ref{fig:example3}). The idea provides a conceptually simpler proof by decomposing a general $(n,m)$ (or $(\nt,\mt)$) channel into multiple elementary subchannels and taking a proper matching across forward and backward subchannels. See Theorem~\ref{thm:ND} (stated below) for the identified elementary subchannels, which we will use to complete the proof in the subsequent subsections. 

\begin{theorem}[Network Decomposition~\cite{SuhGoelaGastpar:it16}]
\label{thm:ND} For an arbitrary $(n,m)$ channel, the following network decomposition holds:
\begin{align}
&(n,m) \longrightarrow (1,0)^{ n-2m} \times (2,1)^m, \qquad \;  \alpha \in [0, 1/2]; \label{eq:ND1} \\
& (n,m) \longrightarrow  (2,1)^{2n-3m} \times (3,2)^{2m-n}, \;\;  \alpha \in [1/2,2/3]; \label{eq:ND2} \\
& (n,m) \longrightarrow  (0,1)^{m-2n} \times (1,2)^{n},  \qquad \;\;  \alpha \geq 2. \label{eq:ND3}
\end{align}
Here the symbol $\times$ indicates the concatenation of orthogonal channels and $(i,j)^\ell$ denotes the $\ell$-fold concatenation of the $(i,j)$ channel.
\end{theorem}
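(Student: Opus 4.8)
The plan is to prove each of the three decompositions~\eqref{eq:ND1}--\eqref{eq:ND3} by an explicit graph-coloring/level-partitioning argument on the bipartite signal-level graph of the $(n,m)$ channel, showing that the chosen color classes are mutually non-interfering and that each class, restricted, is isomorphic to one of the elementary channels $(1,0)$, $(2,1)$, $(3,2)$, $(0,1)$, or $(1,2)$. Concretely, for a symmetric $(n,m)$ deterministic IC the received signal at a user consists of the top $n$ levels of the direct signal superimposed (mod 2) with the top $m$ levels of the cross signal shifted appropriately; I would index transmit levels $1,\dots,\max(n,m)$ and track, for each level, which received level(s) at the two receivers it reaches. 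Two sets of transmit levels form "orthogonal subchannels" precisely when no received level mixes a level from one set with a level from the other at either receiver. So the core task is to exhibit a partition of $\{1,\dots,\max(n,m)\}$ into blocks, each block inducing a sub-IC of the claimed elementary type, with no cross-block collisions.

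First I would handle the very weak interference regime $\alpha\in[0,1/2]$, i.e.\ $m\le n/2$, proving~\eqref{eq:ND1}. Here the direct links have $n$ levels and the cross links only $m\le n/2$, which sit at the \emph{bottom} of the received direct signal with room to spare. I would group levels into $m$ disjoint "$(2,1)$-gadgets," each consisting of one level that is seen only at the intended receiver (a clean direct level) paired with one level that is also seen at the other receiver (a cross level), arranged so that within a gadget the structure is exactly the $(2,1)$ channel of Example~1; the remaining $n-2m$ levels are pure direct levels causing and receiving no interference, i.e.\ $n-2m$ copies of $(1,0)$. The key check is that because $m\le n/2$, the cross-level footprints at the unintended receiver are spread out enough that distinct gadgets never overlap at any received level — this is where the arithmetic $n-2m\ge 0$ and the shift structure are used. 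For~\eqref{eq:ND2}, the regime $\alpha\in[1/2,2/3]$ ($n/2\le m\le 2n/3$), the cross footprint is large enough that a pure $(2,1)$ tiling no longer fits, so I would instead use a mix: $2m-n$ copies of the denser $(3,2)$ gadget to "absorb" the extra overlap and $2n-3m$ copies of $(2,1)$ for the rest, again verifying the two multiplicities are nonnegative on the stated interval ($2n-3m\ge 0 \Leftrightarrow \alpha\le 2/3$, $2m-n\ge 0 \Leftrightarrow \alpha\ge 1/2$) and that the block-to-block non-interference holds by a shift/counting argument. The strong interference case $\alpha\ge 2$ ($m\ge 2n$), giving~\eqref{eq:ND3}, is the cleanest: now the cross signal is "taller" than the direct signal, so the roles swap and one gets $n$ copies of the $(1,2)$ channel (one direct level plus its dominant cross interference) together with $m-2n$ copies of $(0,1)$ (cross-only levels that reach the unintended receiver but carry no direct information), with $m-2n\ge 0$ exactly on $\alpha\ge 2$.

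The main obstacle I anticipate is not the existence of the gadgets but the bookkeeping that certifies \emph{orthogonality} of the color classes: one must show that after the mod-2 superposition at each receiver, no received level is a function of transmit levels from two different blocks, and this requires carefully tracking the relative shift between direct and cross links across all blocks simultaneously. I would organize this by writing the received-level-to-transmit-level incidence explicitly as an interval (or union of two intervals) for each block and then proving these intervals are pairwise disjoint at \emph{both} receivers — a finite but slightly delicate interval-packing argument whose feasibility is governed exactly by the regime boundaries $1/2$, $2/3$, $2$. Since this is precisely the content of~\cite{SuhGoelaGastpar:it16}, I would either cite that construction directly or reproduce it in the form of the three partitions above, and then the theorem follows by applying the one-way decomposition independently in the forward and backward channels, which is legitimate because the forward and backward ICs are orthogonal by the model in Section~\ref{sec:model}.
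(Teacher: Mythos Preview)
Your proposal is correct and matches the paper's approach: the paper does not actually prove Theorem~\ref{thm:ND} but simply cites it from~\cite{SuhGoelaGastpar:it16}, and the graph-coloring/level-partitioning construction you outline is exactly the mechanism the paper illustrates in Example~3 (Fig.~\ref{fig:example3}), where one colors a level and all levels connected to it, then peels off the resulting elementary subchannel. Your regime-by-regime counting and the orthogonality check via disjoint received-level intervals are precisely the content of the cited reference, so either citing it directly or reproducing the three explicit partitions as you describe is appropriate.
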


\subsection{Proof of (R1) $\alpha >2, \at >2 $ \& (R2) $\alpha \in (0, \frac{2}{3}), \at \in (0, \frac{2}{3})$ } 

The following achievability w.r.t. the elementary subchannels identified in Theorem~\ref{thm:ND} forms the basis of the proof for the regimes of (R1) and (R2). 

\begin{lemma} The following rates are achievable:
\label{lemma:buildingblocks1}
\begin{enumerate}
\item[(i)] For the pair of $(n,m)=(0,1)^i$ and $(\nt,\mt) = (1,2)^j$ where $ i \leq 2j$: $(R,\rt) = (i,2j-i)$;  
\item[(ii)] For the pair of $(n,m)=(2,1)^i$ and $(\nt,\mt) = (1,0)^j\times(2,1)^k$ where $i \leq 2j + 2k$: $(R,\rt) = (3i,2j+2k-i)$;
\item[(iii)] For the pair of $(n,m)=(2,1)^i$ and $(\nt,\mt) = (2,1)^j\times(3,2)^k$ where $i \leq 2j + 4k$: $(R,\rt) = (3i,2j+4k-i)$. 
\end{enumerate}
\end{lemma}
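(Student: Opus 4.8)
\textbf{Proof proposal for Lemma~\ref{lemma:buildingblocks1}.}

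The plan is to establish each of the three claims by combining a base achievability result on a single elementary pair with an averaging argument over time, exactly mirroring the structure of Example~2 (Scheme~2). The key observation is that each item asks for a tradeoff line of the form $(R,\tilde R) = (c\cdot i,\; B - i)$ where $c\in\{1,3\}$ is the per-subchannel forward gain (namely $C_{\sf pf}$ of the elementary forward subchannel, so $1$ for $(0,1)$ and $3$ for $(2,1)$) and $B$ is the \emph{total} backward resource budget $\sum 2\mt^{(\ell)}$ over the backward subchannels. So the skeleton is: (a) show that a single forward elementary subchannel $(2,1)$ (resp.\ $(0,1)$) can be run by Scheme~2 over a two-stage block to deliver $3$ (resp.\ $1$) forward symbols per slot at the cost of carrying exactly $1$ backward-feedback symbol per slot through one unit of backward cross-link resource $2\mt$; (b) show that, conversely, a backward elementary subchannel with parameter $2\mt^{(\ell)}$ units can either be used to carry $2\mt^{(\ell)}$ fresh backward symbols per slot (nonfeedback, via Baseline~\ref{baseline:nonfeedback}) or be ``lent'' to the forward direction as feedback capacity; (c) time-share: use the forward subchannels in Scheme-2 mode for a fraction of the block and in plain nonfeedback mode for the rest, and allocate the backward budget accordingly, producing the claimed linear tradeoff as $L\to\infty$.

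First I would set up the single-pair building block precisely. For item (i), the pair $(0,1)$ forward with $(1,2)$ backward is \emph{exactly} Example~2 up to relabeling (there $(\nt,\mt)=(0,1)$ played the backward role; here the roles of ``which IC needs feedback'' are arranged so that the $(0,1)$ forward IC has $C_{\sf pf}=1>0=C_{\sf no}$ and needs one feedback symbol per slot, while $(1,2)$ backward has $2\mt=4$ total levels, $\tilde C_{\sf pf}=2$, giving a slack of $2\mt-\tilde C_{\sf pf}=2$ — enough to host the forward feedback). Running Scheme~2 on one such pair over a length-$2L{+}1$ block yields $(R,\tilde R)\to(1,2)$; but the backward IC here is strong enough ($2$ units of slack) that with $j$ copies of $(1,2)$ one has total backward budget $2\mt$-sum $=2j$, and feeding back $i$ forward symbols (one per forward $(0,1)$ copy, $i\le 2j$ by the budget constraint) leaves $2j-i$ for fresh backward traffic, giving $(i,2j-i)$. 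Items (ii) and (iii) are the same argument with the forward elementary subchannel being $(2,1)$: here I invoke Scheme~2 (generalized as in the Proof Outline, Section~\ref{sec:proofoutline}, via network decomposition) on the $(2,1)$ forward IC, which has $C_{\sf pf}=3>2=C_{\sf no}$, so it needs $C_{\sf pf}-C_{\sf no}=1$ feedback symbol per slot; the backward side in (ii) is $(1,0)^j\times(2,1)^k$ with total cross-link budget $2j+2k$ (note $(1,0)$ contributes $2n=2$, $(2,1)$ contributes $2\mt=2$), and in (iii) it is $(2,1)^j\times(3,2)^k$ with budget $2j+4k$; in each case the constraint $i\le$ budget is exactly the feasibility of routing $i$ feedback symbols, and the leftover is $\text{budget}-i$ fresh backward symbols.

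The main obstacle I expect is verifying that the \emph{feedback routing} and \emph{interference neutralization} steps of Scheme~2 still go through when the backward resource is a \emph{concatenation} of heterogeneous elementary subchannels rather than a single $(1,2)$; in particular, one must check that the XOR-with-interference-neutralization trick at the end of each stage-I slot and the retrospective alignment-and-neutralization at each stage-II slot can be carried out level-by-level within whichever backward subchannel is currently hosting that feedback symbol, without cross-subchannel interference. This is where the orthogonality of the decomposition (Theorem~\ref{thm:ND}) does the heavy lifting: because the subchannels are genuinely non-overlapping, each feedback symbol lives entirely inside one elementary backward subchannel, and the relaying path ``user $\tilde1\to$ user $2\to$ feedback $\to$ user $\tilde2\to$ user $1$'' can be instantiated independently in each. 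A secondary bookkeeping obstacle is the off-by-one from the two-stage structure ($L$ vs.\ $L{+}1$ slots), which only affects the rate in the $L\to\infty$ limit and washes out, exactly as in Example~2. Once these are in place, the three tradeoff lines follow by the same limiting computation $(R,\tilde R)=\bigl(\tfrac{6Li}{2L+1},\tfrac{2L(\text{budget}-i)}{2L+1}\bigr)\to(3i,\text{budget}-i)$ (and its $c=1$ analogue), completing the proof.
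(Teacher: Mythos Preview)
Your proposal rests on a misidentification of which scheme this lemma needs. Lemma~\ref{lemma:buildingblocks1} serves the regimes (R1) and (R2), where there is \emph{no} interaction gain: check that in all three items $R+\rt = C_{\sf no}+\tilde{C}_{\sf no}$ (e.g., in (ii), $3i + (2j+2k-i) = 2i + (2j+2k)$). The lemma is asserting a strict one-to-one tradeoff---one bit of feedback costs exactly one bit of backward rate---not simultaneous perfect-feedback performance. Accordingly, the paper proves it with the \emph{basic perfect feedback relaying scheme} of~\cite{SuhTse}, not Scheme~2. For (i), user~1 sends $a_i$ through the forward $(0,1)$; user~$\tilde 2$ relays it to user~2 via one backward level; user~2 forwards it to user~$\tilde 1$. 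This consumes one of the $2j$ backward resource levels per forward copy, leaving $2j-i$ for fresh backward traffic. Item (ii) is the same idea with the forward $(2,1)$ channel: one feedback symbol per copy lifts $R$ from $2$ to $3$, again at unit backward cost. No retrospective decoding, no two-stage block, no $L\to\infty$.

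Your attempt to invoke Scheme~2 breaks down concretely in several places. First, the correspondence with Example~2 is wrong: Example~2 has $(n,m)=(2,1)$, $(\nt,\mt)=(0,1)$, whereas item~(i) has $(n,m)=(0,1)$, $(\nt,\mt)=(1,2)$; these are not related by relabeling. Second, in items~(i)--(iii) the backward channel satisfies $\tilde C_{\sf pf}=\tilde C_{\sf no}$ (e.g., $(1,2)^j$ has $\tilde\alpha=2$; $(1,0)^j$ has $\tilde\alpha=0$), so feedback does not help backward at all---Scheme~2, which is designed for the case where feedback helps in \emph{both} directions, is not the relevant mechanism. Third, your ``budget'' accounting is inconsistent: for $(\nt,\mt)=(1,0)^j\times(2,1)^k$ you write that $(1,0)$ contributes ``$2n=2$'' but $(2,1)$ contributes ``$2\mt=2$''; these are different quantities applied to the same object. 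The correct invariant here is simply $\tilde C_{\sf no}$ of the backward channel, which equals $2j+2k$ in~(ii) and $2j+4k$ in~(iii), and the feedback cost is subtracted directly from it. Once you drop Scheme~2 and use plain relaying, the proof is a two-line resource count.
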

\begin{proof}
The proof builds upon the perfect feedback scheme in~\cite{SuhTse}. See Appendix~\ref{append:lemma_buildingblocks1} for the detailed proof.
\end{proof}

For the considered regimes, the claimed achievable rate region reads:
\begin{align*}
\{(R, \rt): R  \leq C_{\sf pf}, \rt \leq  \tilde{C}_{\sf pf}, R + \tilde{R} \leq C_{\sf no} + \tilde{C}_{\sf no}\}.
\end{align*} 
We see that there is no feedback gain in sum capacity. This means that one bit of a capacity increase due to feedback costs exactly one bit. Depending on whether or not $C_{\sf pf}$ (or $\tilde{C}_{\sf pf}$) exceeds $C_{\sf no} + \tilde{C}_{\sf no}$, we have four subcases, each of which forms a different shape of the region. See Fig.~\ref{fig:R1R2shape}.

\begin{figure}[t]
\begin{center}
{\epsfig{figure=./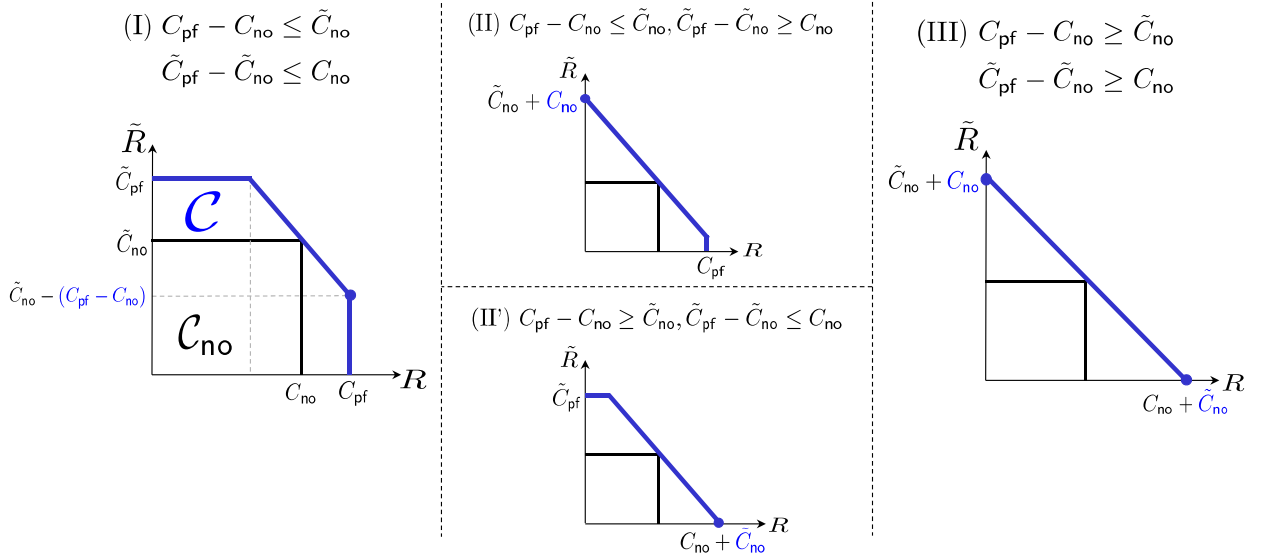, angle=0, width=1.0\textwidth}}
\end{center}
\caption{Four types of shapes of an achievable rate region for the regimes of (R1) $\alpha>2,\tilde{\alpha}>2 $ and (R2) $\alpha<\frac{2}{3},\tilde{\alpha}<\frac{2}{3}$. } \label{fig:R1R2shape}
\end{figure}

{\bf (I) $ (C_{\sf pf} - C_{\sf no} \leq \tilde{C}_{\sf no}), (\tilde{C}_{\sf pf} - \tilde{C}_{\sf no} \leq C_{\sf no})$}: The first case is the one in which the amount of feedback for maximal improvement, reflected in $C_{\sf pf} - C_{\sf no}$ (or $\tilde{C}_{\sf pf} - \tilde{C}_{\sf no}$), is smaller than the available resources offered by the backward IC (or the forward IC). In other words, in this case, we have a sufficient amount of resources such that one can achieve the perfect feedback bound in one direction. By symmetry, it suffices to focus on one corner point that favors the rate of forward transmission: $(R,\rt) = (C_{\sf pf}, \tilde{C}_{\sf no} - \blue{ (C_{\sf pf} - C_{\sf no}  )} )$.


\emph{ (R1) $\alpha >2, \at >2$ }: For this regime, the network decomposition~(\ref{eq:ND3}) yields: 
\begin{align*}
(n,m) \longrightarrow&  (0,1)^{C_{\sf pf}-C_{\sf no}} \times (1,2)^{n},\\
(\nt,\mt) \longrightarrow& (0,1)^{\mt-2\nt} \times (1,2)^{\nt}.
\end{align*} 
Here we use the fact that $C_{\sf pf}-C_{\sf no}=m-2n$ in the considered case. We now apply Lemma~\ref{lemma:buildingblocks1}-(i) for the pair of $(0,1)^{C_{\sf pf}-C_{\sf no}}$ and $(1,2)^{\nt}$. Note that the condition in Lemma~\ref{lemma:buildingblocks1}-(i) holds: $C_{\sf pf} - C_{\sf no} \leq \tilde{C}_{\sf no} = 2 \nt$. This then gives: $R^{(1)} = C_{\sf pf} - C_{\sf no}$; $\tilde{R}^{(1)} = 2 \nt - (C_{\sf pf } - C_{\sf no})$. For the remaining subchannels, we apply the nonfeedback scheme, yielding: $R^{(2)} = 2n$; $\tilde{R}^{(2)} = 0$. Aggregating these two, we achieve the claimed corner point: 
\begin{align*}
R&= C_{\sf pf}-C_{\sf no} + 2  n=C_{\sf pf} - C_{\sf no} + C_{\sf no} = C_{\sf pf},\\
\tilde{R}&=2\nt-(C_{\sf pf}-C_{\sf no})=\tilde{C}_{\sf no}-(C_{\sf pf}-C_{\sf no}). 
\end{align*}

\emph{ (R2) $\alpha \in (0, \frac{2}{3}), \at \in (0, \frac{2}{3})$ }: Applying the network decompositions~(\ref{eq:ND1}) and~(\ref{eq:ND2}) to this regime, we get:
\begin{align*}
(n,m) \longrightarrow& \left\{
  \begin{array}{ll}
    (1,0)^{n-2m} \times (2,1)^{C_{\sf pf}-C_{\sf no}}, & \hbox{$\alpha\in\left(0,1/2 \right]$;} \\
     (2,1)^{C_{\sf pf}-C_{\sf no}} \times (3,2)^{2m-n}, & \hbox{$\alpha\in\left( 1/2, 2/3 \right)$;}
  \end{array}
\right. \\
(\nt,\mt) \longrightarrow& 
\left\{
  \begin{array}{ll}
    (1,0)^{\nt-2\mt} \times (2,1)^{\mt}, & \hbox{$\alpha\in\left(0,1/2 \right]$;} \\
     (2,1)^{2\nt-3\mt} \times (3,2)^{2\mt-\nt}, & \hbox{$\alpha\in\left( 1/2, 2/3 \right)$.}
  \end{array}
\right. 
\end{align*} 
Here we use the fact that $C_{\sf pf}-C_{\sf no} = m$ for $\alpha\in(0,\frac{1}{2}]$ and takes $2n-3m$ for $\alpha\in(\frac{1}{2},\frac{2}{3})$. When $\alpha \in (0, \frac{1}{2}]$ and $\at \in (0, \frac{1}{2}]$, we apply Lemma~\ref{lemma:buildingblocks1}-(ii) for the pair of $(2,1)^{C_{\sf pf}-C_{\sf no}}$ and $(1,0)^{\nt-2\mt} \times (2,1)^{\mt}$, yielding $R^{(1)} = 3 (C_{\sf pf} - C_{\sf no}) = 3 m$ and $\tilde{R}^{(1)} = 2 (\nt - 2 \mt) + 2 \mt - m$.  Notice that the condition in Lemma~\ref{lemma:buildingblocks1}-(ii) is satisfied:  $C_{\sf pf}-C_{\sf pf}\leq \ct_{\sf no}=2(\nt-2\mt)+2\mt$. For the rest, we apply the nonfeedback scheme to achieve $R^{(2)} = 2 (n-2m)$. This then gives:
\begin{align*}
R&= 3  m + 2 (n-2m) =2n-m=C_{\sf pf},\\
\tilde{R}&=2 (\nt-2\mt)+2 \mt-m=2(\nt-\mt)-m=\tilde{C}_{\sf no}-(C_{\sf pf}-C_{\sf no}). 
\end{align*} 
When $\alpha \in (0, \frac{1}{2}]$ and $\at \in (\frac{1}{2}, \frac{2}{3})$, we apply Lemma~\ref{lemma:buildingblocks1}-(iii) for the pair of $(2,1)^{C_{\sf pf}-C_{\sf no}}$ and $(2,1)^{2\nt-3\mt} \times (3,2)^{2\mt-\nt}$, yielding $R^{(1)} = 3 (C_{\sf pf} - C_{\sf no} ) = 3m$ and $\tilde{R}^{(1)} = 2 ( 2 \nt - 3 \mt) + 4 (2 \mt - \nt) -m$. Note that the associated condition holds: $C_{\sf pf}-C_{\sf pf}\leq \ct_{\sf no}=2(2\nt-3\mt)+4(\mt-\nt)$. For the remaining subchannels, we apply the nonfeedback scheme to achieve $R^{(2)} = 2 (n - 2m)$.  
This then gives: 
\begin{align*}
R&=  3 m  + 2 (n-2m) = 2n-m=C_{\sf pf},\\
\tilde{R}&=2 (2\nt-3\mt)+4(2\mt-\nt)-m=2\mt-m=\tilde{C}_{\sf no}-(C_{\sf pf}-C_{\sf no}). 
\end{align*}  
The proof for the other regimes of $[\alpha \in (\frac{1}{2}, \frac{2}{3}), \tilde{\alpha} \in (0, \frac{2}{3})]$ and $[\alpha \in (\frac{1}{2}, \frac{2}{3}), \tilde{\alpha} \in (\frac{1}{2}, \frac{2}{3})]$ follows similarly. 

As seen from all the cases above, one key observation to make is that the capacity increase due to feedback $C_{\sf pf} - C_{\sf no}$ plus the backward transmission rate is always $\tilde{C}_{\sf no}$, meaning that there is \emph{one-to-one tradeoff} between feedback and independent message transmissions, i.e., one bit of feedback costs one bit. 

{\bf (II) $C_{\sf pf}-C_{\sf no} \leq \tilde{C}_{\sf no}, \tilde{C}_{\sf pf}-\tilde{C}_{\sf no} \geq C_{\sf no}$}: Also in this case, one can readily prove the same one-to-one tradeoff relationship in achieving one corner point $(R,\tilde{R}) = (C_{\sf pf}, \tilde{C}_{\sf no} - (C_{\sf pf} - C_{\sf no}))$. Hence, we omit the detailed proof. On the other hand, there is a limitation in achieving the other counterpart. Note that the maximal feedback gain $\tilde{C}_{\sf pf} - \tilde{C}_{\sf no}$ for backward transmission does exceed the resource limit $C_{\sf no}$ offered by the forward channel. This leads the maximal achievable rate for backward transmission to be saturated by $\tilde{R} \leq \tilde{C}_{\sf no} + C_{\sf no}$. So the other corner point reads $(R,\rt) = (0, \tilde{C}_{\sf no} + C_{\sf no})$ instead. For completeness, we will show this is indeed the case as below. By symmetry, we omit the case of (II'). 

\emph{ (R1) $\alpha >2, \at >2$ }: For this regime,
\begin{align*}
(n,m) \longrightarrow&  (0,1)^{m-2n} \times (1,2)^{\frac{C_{\sf no}}{2}}\\
(\nt,\mt) \longrightarrow& (0,1)^{C_{\sf no}}
\times (0,1)^{(\tilde{C}_{\sf pf}-\tilde{C}_{\sf no})-C_{\sf no}} 
\times (1,2)^{\nt}.
\end{align*} 
Here we use the fact that $\tilde{C}_{\sf pf}-\tilde{C}_{\sf no}=\mt-2\nt$ and $\frac{C_{\sf no}}{2}=n$ in the considered case. We now apply a symmetric version of Lemma~\ref{lemma:buildingblocks1}-(i) for the pair of $(1,2)^{\frac{C_{\sf no}}{2}}$ and $(0,1)^{C_{\sf no}}$. This then gives: 
$R^{(1)} = 2  \frac{C_{\sf no}}{2} - C_{\sf no} = 0$; $\tilde{R}^{(1)}= C_{\sf no}$. For the rest, we apply the nonfeedback scheme to achieve: $R^{(2)} = 0$; $\tilde{R}^{(2)} = 2 \tilde{n} = 2 \tilde{C}_{\sf no}$. Hence, we achieve the claimed corner point: $(R,\tilde{R})=(0,\tilde{C}_{\sf no}+C_{\sf no})$.

\emph{ (R2) $\alpha \in (0, \frac{2}{3}), \at \in (0, \frac{2}{3})$ }: For this regime, the network decompositions~(\ref{eq:ND1}) and~(\ref{eq:ND2}) yield: 
\begin{align*}
(n,m) \longrightarrow&  
\left\{
  \begin{array}{ll}
    (1,0)^{n-2m} \times (2,1)^{m}, & \hbox{$\alpha\in\left(0,1/2 \right]$;} \\
     (2,1)^{2n-3m} \times (3,2)^{2m-n}, & \hbox{$\alpha\in\left(\frac{1}{2},\frac{2}{3}\right)$;}
  \end{array}
\right. \\
(\nt,\mt) \longrightarrow& 
\left\{
  \begin{array}{ll}
   (1,0)^{\nt-2\mt} \times (2,1)^{C_{\sf no}}\times(2,1)^{\ct_{\sf pf}-\ct_{\sf no}-C_{\sf no}}, & \hbox{$\alpha\in\left(0,1/2 \right]$;} \\
     (2,1)^{C_{\sf no}}\times(2,1)^{\ct_{\sf pf}-\ct_{\sf no}-C_{\sf no}} \times (3,2)^{2\mt-\nt}, & \hbox{$\alpha\in\left(\frac{1}{2},\frac{2}{3}\right)$.}
  \end{array}
\right.
\end{align*} 
Here we use the fact that $\ct_{\sf pf}-\ct_{\sf no}=\mt$ for $\at\in(0,\frac{1}{2}]$ and takes $2\nt-3\mt$ for $\at\in(\frac{1}{2},\frac{2}{3})$. When $\alpha \in (0, \frac{1}{2}]$ and $\at \in (0, \frac{1}{2}]$, we apply a symmetric version of Lemma~\ref{lemma:buildingblocks1}-(ii) for the pair of $(1,0)^{n-2m}\times(2,1)^m$ and $(2,1)^{C_{\sf no}}$. So we get: $R^{(1)} = 2 (n-2m) + 2 m - C_{\sf no} = 0$; $\tilde{R}^{(1)} = 3 C_{\sf no}$. For the rest, we apply the nonfeedback scheme to achieve: $R^{(2)}= 0; \tilde{R}^{(2)} = 2 ( \nt -2 \mt) + 2 ( \tilde{C}_{\sf pf} - \tilde{C}_{\sf no} - C_{\sf no}) = 2 \tilde{n}- 2 \mt - 2 C_{\sf no} =  \tilde{C}_{\sf no} - 2 C_{\sf no}$. Hence we achieve: $(R, \rt)= (0,\ct_{\sf no}+C_{\sf no})$.

When $\alpha \in (\frac{1}{2}, \frac{2}{3})$ and $\at \in (0, \frac{1}{2}]$, we apply a symmetric version of Lemma~\ref{lemma:buildingblocks1}-(iii) for the pair of $(2,1)^{2n-3m} \times (3,2)^{2m-n}$ and $(2,1)^{C_{\sf no}}$, thus giving: $R^{(1)} = 2 (2n-3m) + 4 (2m-n) - C_{\sf no} = 0$; $\tilde{R}^{(1)} = 3 C_{\sf no}$. For the rest, we apply the nonfeedback scheme to achieve: $R^{(2)} = 0$; $\tilde{R}^{(2)} = 2 (\tilde{C}_{\sf pf} - \tilde{C}_{\sf no} - C_{\sf no}) + 4 ( 2 \tilde{m} - \nt) = 2 \mt - 2 C_{\sf no}$. 
Hence we prove: $(R,\tilde{R}) = (0, \tilde{C}_{\sf no} + C_{\sf no})$. 
The proof of the other regimes $[\alpha \in (\frac{1}{2}, \frac{2}{3}), \tilde{\alpha} \in (0, \frac{2}{3})]$ and $[\alpha \in (\frac{1}{2}, \frac{2}{3}), \tilde{\alpha} \in (\frac{1}{2}, \frac{2}{3})]$ follows similarly.

{\bf (III) $C_{\sf pf}-C_{\sf no} \geq \tilde{C}_{\sf no}, \tilde{C}_{\sf pf}-\tilde{C}_{\sf no} \geq C_{\sf no}$}: This is the case in which there are limitations now in achieving both $R= C_{\sf pf}$ and $\rt = \tilde{C}_{\sf pf}$. 
Due to the same argument as above, what we can maximally achieve for $R$ (or $\rt$) in exchange of the other channel is $C_{\sf no} + \tilde{C}_{\sf no}$ which implies $(R,\rt) = (C_{\sf no} + \tilde{C}_{\sf no},0)$ or $(0, C_{\sf no} + \tilde{C}_{\sf no})$. The proof follows exactly the same as above; hence, we omit it.

\subsection{Proof of (R3) $\alpha >2, \at \in [\frac{2}{3}, 2]$} 

A tedious yet straightforward computation demonstrates that the claimed achievable rate region evaluated in the regime (R3) is:
\begin{align*}
\{(R, \rt): R  \leq C_{\sf pf}, \rt \leq  \tilde{C}_{\sf no}, R + \tilde{R} \leq C_{\sf no} + 2 \tilde{n} \}.
\end{align*} 
Unlike the (R1) and (R2) regimes, there is an interaction gain. Note that the sum-rate bound exceeds $C_{\sf no} + \tilde{C}_{\sf no}$ in the regime. The backward IC has no feedback gain. The network decomposition (3) together with the fact that $C_{\sf pf}-C_{\sf no}=m-2n$ in the regime gives:
\[
(n,m)\longrightarrow(0,1)^{C_{\sf pf}-C_{\sf no}}\times (1,2)^n.
\]
We find that the shape of the region depends on where $C_{\sf pf}- C_{\sf no}$ lies in between $2 \nt - \tilde{C}_{\sf no}$ and $2 \nt$. See Fig.~\ref{fig:R3shape}. 

\begin{figure}[t]
\begin{center}
{\epsfig{figure=./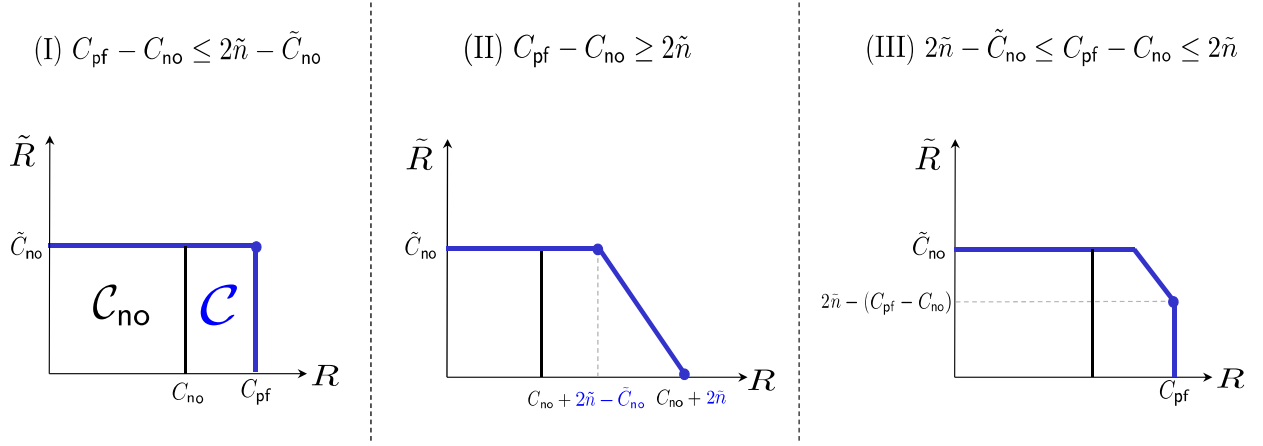, angle=0, width=1.0\textwidth}}
\end{center}
\caption{Three types of shapes of an achievable rate region for the regime (R3) $\alpha>2,\tilde{\alpha} \in [\frac{2}{3},2]$.} \label{fig:R3shape}
\end{figure}

{\bf (I) $ C_{\sf pf} - C_{\sf no} \leq 2 \nt - \tilde{C}_{\sf no}$}: The first case is the one in which the amount of feedback for maximal improvement, reflected in $C_{\sf pf} - C_{\sf no}$, is small enough to achieve the maximal feedback gain without degrading the performance of backward transmission. Now let us prove how to achieve $(R,\rt) = (C_{\sf pf}, \tilde{C}_{\sf no})$.

The decomposition idea is to pair up $(0,1)^{C_{\sf pf}-C_{\sf no}}$ and $(\nt, \mt)$ while applying the nonfeedback scheme for the remaining forward subchannel $(1,2)^n$. 
To give an achievability idea for the first pair, let us consider a simple example of $(n,m)=(0,1)$ and $(\nt,\mt)=(3,2)$. See Fig.~\ref{fig:R3achievability}.

\begin{figure}[t]
\begin{center}
\epsfig{figure=./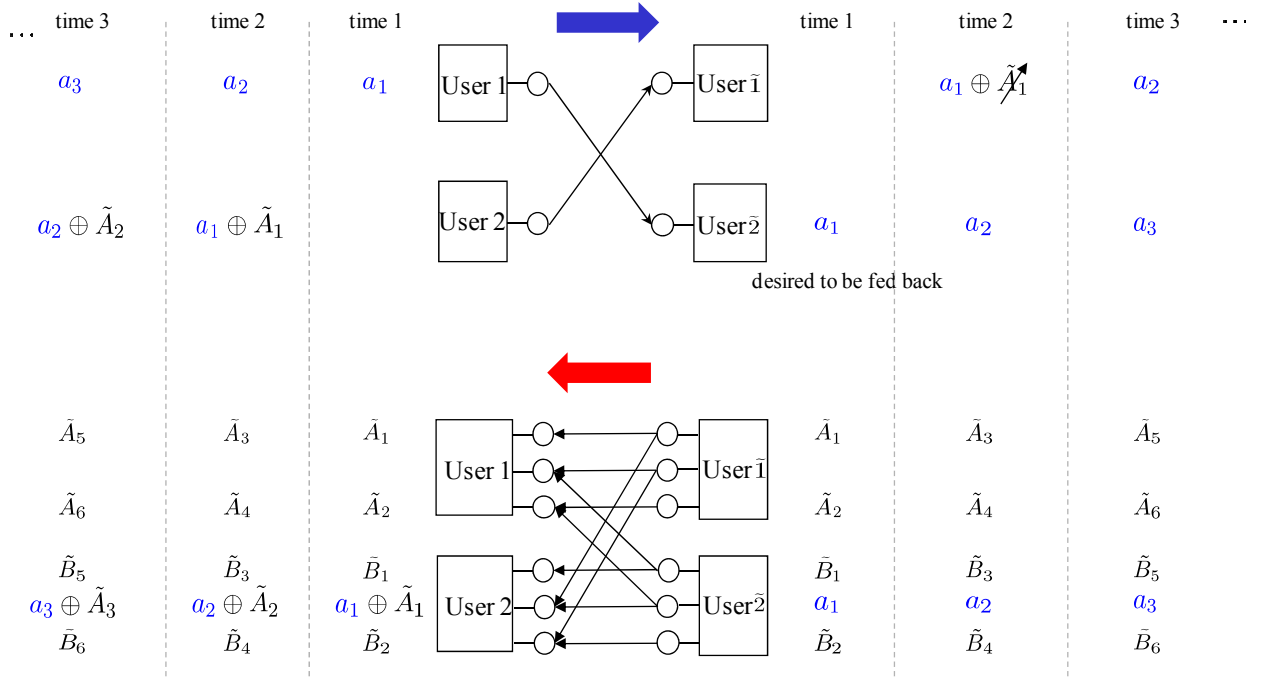, angle=0, width=1.0\textwidth}
\end{center}
\caption{Illustration of achievability for the (R3) regime via an example of $(n,m)=(0,1)$, $(\tilde{n},\tilde{m})=(3,2)$.  This is an instance in which we have a sufficient amount of resources that enables achieving the perfect feedback bound in the forward IC: $C_{\sf pf}-C_{\sf no} = 1 \leq 2 = 2 \nt - \tilde{C}_{\sf no}$. Hence, we achieve $(R,\rt) = (C_{\sf pf}, \tilde{C}_{\sf no}) =   (1, 4)$.} 
\label{fig:R3achievability}
\end{figure}

In each time, user 1 sends its own symbol $a_i$. Unlike the previous regimes (R1) and (R2), an interesting observation is made in feedback transmission. In the backward IC, $\tilde{C}_{\sf no} = \max ( 2 \nt - \mt, \mt )$ ($=4$ in this example) levels are utilized to send the backward symbols. For feedback, user $\tilde{2}$ sends user 1's received symbols $a_i$'s back to user 2 through the remaining \emph{direct-link} level. Here one can make two key observations. The first is that such feedback signal $a_i$ is interfered with by user $\tilde{1}$'s transmission but it turns out the interference does not cause any problem. Notice in the example that a feedback signal, say $a_1$, is mixed with $\tilde{A}_1$ and hence user 2 receives $a_1 \oplus \tilde{A}_1$ instead of $a_1$ which is desired to be fed back. Nonetheless user 2 sending the $a_1 \oplus \tilde{A}_1$ in time 2, user $\tilde{1}$ can decode $a_1$ of interest with the help of its own symbol $\tilde{A}_1$. This implies that feedback and independent backward message transmissions do not interfere with each other and thus one can maximally utilize available resource levels: the total number of direct-link levels $2\tilde{n}$. So the $2\tilde{n}-\tilde{C}_{\sf no}$ levels can be exploited for feedback. In the general case of $(0,1)^{C_{\sf pf} - C_{\sf no}}$, the maximal feedback gain $C_{\sf pf} - C_{\sf no}$ does not exceed the limit on the exploitable levels $2 \tilde{n} - \tilde{C}_{\sf no}$ under the considered regime. Hence, we achieve $R^{(1)} = C_{\sf pf} - C_{\sf no}$. Now the second observation is that the feedback transmission of $a_i$'s does not cause any interference to user 1. This ensures $\tilde{R}^{(1)} = \tilde{C}_{\sf no}$. On the other hand, for the remaining suchanneles $(1,2)^n$, we apply the nonfeedback scheme to achieve $R^{(2)} = 2n$. Combining all of the above, we get: 
\begin{align*}
R&=C_{\sf pf}-C_{\sf no}+2 n=C_{\sf pf}-C_{\sf no}+C_{\sf no}= C_{\sf pf}\\
\tilde{R}&=\tilde{C}_{\sf no}.
\end{align*}

{\bf (II) $ C_{\sf pf} - C_{\sf no} \geq 2 \nt$}: In this case, we do not have a sufficient amount of resources for achieving $R=C_{\sf pf}$. The maximally achievable forward rate is saturated by $C_{\sf no} + 2 \nt$ and this occurs when $\tilde{R}=0$.  On the other hand, under the constraint of $\tilde{R}=\tilde{C}_{\sf no}$, what one can achieve for $R$ is $C_{\sf no}+ ( 2 \nt - \tilde{C}_{\sf no})$. 

{\bf (III) $2 \nt - \tilde{C}_{\sf no}< C_{\sf pf} - C_{\sf no} < 2 \nt $}: This is the case in which we have a sufficient amount of resources for achieving $R=C_{\sf pf}$, but not enough to achieve $\tilde{R}=\tilde{C}_{\sf no}$ simultaneously. Hence, aiming at $R=C_{\sf pf}$, $\tilde{R}$ is saturated by $2\tilde{n} - (C_{\sf pf} - C_{\sf no})$.

\subsection{Proof of (R4) $\alpha \in (0,\frac{2}{3}], \at \in [\frac{2}{3}, 2]$} 

For the regime of (R4), the claimed achievable rate region is:
\begin{align*}
\{(R, \rt): R  \leq C_{\sf pf}, \rt \leq  \tilde{C}_{\sf no}, R + \tilde{R} \leq C_{\sf no} + 2 \tilde{m} \}.
\end{align*} 
This rate region is almost the same as that of (R3). The only difference is that the sum-rate bound now reads $C_{\sf no} + 2 \mt$ instead of $C_{\sf no} + 2 \nt$. Hence, the shape of the region depends now on where $C_{\sf pf}-C_{\sf no}$ lies in between $2\tilde{m}-\tilde{C}_{\sf no}$ and $2\tilde{m}$. See Fig.~\ref{fig:R4shape}. Here we will describe the proof for the case (I) $C_{\sf pf} - C_{\sf no} \leq 2 \mt - \tilde{C}_{\sf no}$ in which we have a sufficient amount of resources in achieving $(R, \rt) = (C_{\sf pf}, \tilde{C}_{\sf no})$. For the other cases of (II) and (III), one can make the same arguments as those in the (R3) regime; hence, we omit them.

Here what we need to demonstrate are two-folded. First, feedback and independent backward message transmissions do not interfere with each other. Second, the maximum number of resource levels utilized for sending feedback and independent backward symbols is limited by the total number of cross-link levels: $2 \mt$. The idea for feedback strategy is to employ Scheme 1 that we illustrated via Example 1 in Section~\ref{sec:scheme1}. We will show that the above two indeed hold when we use this idea. 

Note in Fig.~\ref{fig:example1} the tension between forward-symbol feedback and backward symbols, e.g., $a \oplus B$ vs. $\tilde{A}$. Scheme 1 based on XORing with interference neutralization leads us to completely resolve the tension. Observe that user 1 could decode $\tilde{A}$ of interest since user $\tilde{2}$ transmitted $B$ through the \emph{second cross-link} level to neutralize the inference $B$ at the bottom level at user 1. This contributes one bit (the number of the second cross-link level) to the backward symbol rate (w.r.t. $\tilde{A}$). At the same time, user 2 could obtain $a \oplus \tilde{A}$ (which would be used for the purpose of refinement in stage II) through the \emph{first cross-link} level. This contributes one bit (the number of the first cross-link level) to the feedback rate (w.r.t. $a \oplus \tilde{A}$). Similarly $b \oplus A$ and $\tilde{B}$ were successfully transmitted to user 1 and 2 respectively, and the contributed 2 bits correspond to the number of the \emph{remaining cross-link} levels. We can now see that feedback and independent backward symbols do not cause any interference to each other and the total transmission rate is limited by the total number of cross link levels: $2 \mt$.  Since the maximal amount of feedback $C_{\sf pf} - C_{\sf no}$ plus the backward symbol rate $\tilde{C}_{\sf no}$ does not exceed $2 \mt$ in the considered case, we can indeed achieve $(R,\rt) = (C_{\sf pf}, \tilde{C}_{\sf no})$.

\begin{figure}[t]
\begin{center}
{\epsfig{figure=./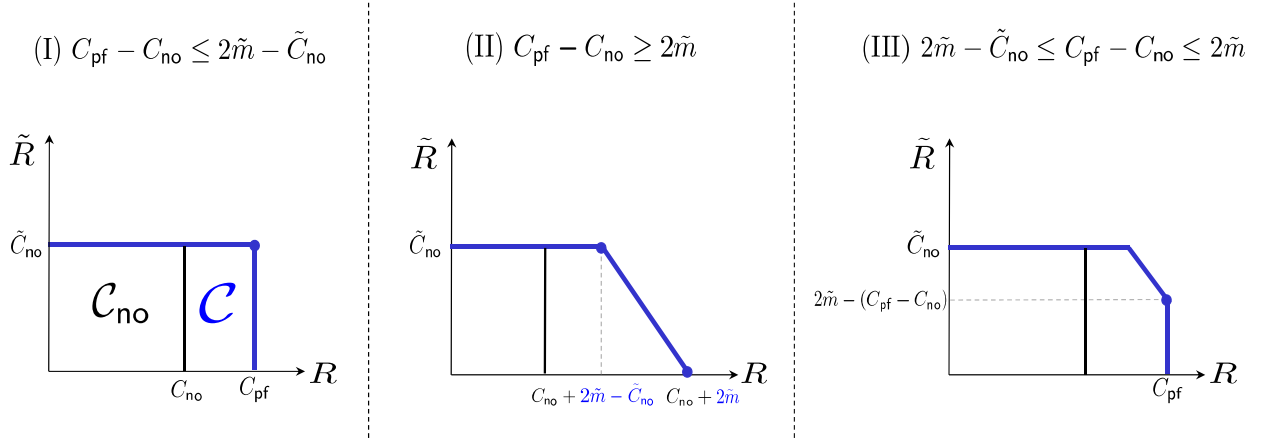, angle=0, width=1.0\textwidth}}
\end{center}
\caption{Three types of shapes of an achievable rate region for the regime (R4) $\alpha\in(0,\frac{2}{3}],\tilde{\alpha} \in [\frac{2}{3},2]$.} \label{fig:R4shape}
\end{figure}

\subsection{Proof of (R5) $\alpha \in (0,\frac{2}{3}), \at > 2$} 

For the regime of (R5), the claimed achievable rate region is:
\begin{align*}
\{(R, \rt): R  \leq C_{\sf pf}, \rt \leq  \tilde{C}_{\sf pf}, R + \tilde{R} \leq 2n + \tilde{C}_{\sf no}, R + \tilde{R} \leq C_{\sf no} + 2 \tilde{m} \}.
\end{align*} 
Remember that $C_{\sf pf} - C_{\sf no}$ indicates the maximum amount of feedback w.r.t. forward symbols and we interpret $2 \mt - \tilde{C}_{\sf pf}$ as the remaining resource levels that can potentially be utilized to aid forward transmission. Whether or not $C_{\sf pf} - C_{\sf no} \leq 2  \mt - \tilde{C}_{\sf pf}$ (i.e., we have enough resource levels to achieve $R=C_{\sf pf}$), the shape of the above claimed region is changed. Note that the last inequality in the rate region becomes inactive when $C_{\sf pf} - C_{\sf no} \leq 2  \mt - \tilde{C}_{\sf pf}$. Similarly the third inequality is inactive when $\tilde{C}_{\sf pf} - \tilde{C}_{\sf no} \leq 2 n - {C}_{\sf pf}$ (i.e., we have enough resources for achieving $\tilde{R} = \tilde{C}_{\sf pf}$). One can readily verify that $C_{\sf pf} - C_{\sf no} > 2  \mt - \tilde{C}_{\sf pf}$ and $\tilde{C}_{\sf pf} - \tilde{C}_{\sf no}  >  2 n - {C}_{\sf pf}$ do not hold simultaneously. Hence, it suffices to consider the following three cases: 
\begin{align*}
 \textrm{(I) } &C_{\sf pf} - C_{\sf no} \leq 2  \mt - \tilde{C}_{\sf pf}, \quad \tilde{C}_{\sf pf} - \tilde{C}_{\sf no} \leq 2 n - {C}_{\sf pf}; \\
 \textrm{(II) } &C_{\sf pf} - C_{\sf no} \leq 2  \mt - \tilde{C}_{\sf pf}, \quad \tilde{C}_{\sf pf} - \tilde{C}_{\sf no} > 2 n - {C}_{\sf pf}; \\
\textrm{(III) } &C_{\sf pf} - C_{\sf no} > 2  \mt - \tilde{C}_{\sf pf}, \quad \tilde{C}_{\sf pf} - \tilde{C}_{\sf no} \leq 2 n - {C}_{\sf pf}.
\end{align*}

As mentioned earlier in Example 3, the key idea for the proof is to use the network decomposition. Specifically, the following lemma that describes achievability for the elementary subchannels in the considered regime forms the basis of the proof.

\begin{lemma} The following rates are achievable:
\label{lemma:buildingblocks}
\begin{enumerate}
\item[(i)] For the pair of $(n,m)=(2,1)$ and $(\nt,\mt) = (0,1)$: $(R,\rt) = (3,1)$;
\item[(ii)] For the pair of $(n,m)=(2,1)^i$ and $(\nt,\mt) = (1,2)^j$ where $i \leq 2j$: $(R,\rt) = (3i,2j)$; 
\item[(iii)] For the pair of $(n,m)=(3,2)^i$ and $(\nt,\mt) = (0,1)^j$ where $2i \geq j$: $(R,\rt) = (4i,j)$; 
\item[(iv)] For the pair of $(n,m)=(2,1)$ and $(\nt,\mt) = (0,1)^2$: $(R,\rt) = (2,2)$;  
\item[(v)] For the pair of $(n,m)=(2,1)^2$ and $(\nt,\mt) = (0,1)$: $(R,\rt) = (6,0)$.
\end{enumerate}
\end{lemma}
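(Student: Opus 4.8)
The plan is to obtain all five rate points by \emph{reusing} the two schemes already developed in Section~\ref{sec:scheme1} (Scheme~1: XORing with interference neutralization) and Section~\ref{sec:scheme2} (Scheme~2: retrospective decoding), together with the network-decomposition bookkeeping of Theorem~\ref{thm:ND} and a careful level count; no genuinely new achievability idea is needed beyond those two schemes. The skeleton is: (i), (iv), (v) are operating points of Scheme~2, while (ii), (iii) are (possibly role-swapped) instances of Scheme~1.

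Claim~(i) is exactly Example~2: the two-stage construction of Figs.~\ref{fig:example2_stage1}--\ref{fig:example2_stage2} delivers $6L$ fresh forward and $2L$ fresh backward symbols over $2L+1$ slots, so letting $L\to\infty$ gives $(R,\rt)\to(3,1)=(C_{\sf pf},\tilde C_{\sf pf})$, and nothing further is required. For Claim~(ii) I would run Scheme~1 (Fig.~\ref{fig:example1}) on matched pairs consisting of a forward $(2,1)$ block and a backward $(1,2)$ block. On each pair Scheme~1 squeezes the $C_{\sf pf}-C_{\sf no}=1$ forward-feedback bit together with the $\tilde C_{\sf no}=2$ backward data bits through the $2\mt=4$ cross-link resource levels of the backward $(1,2)$ block (this is the content of Remark~\ref{remark:scheme1}: side information effectively enlarges the backward capacity). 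Hence $i$ forward blocks require room for $i$ feedback bits while the $j$ backward blocks furnish $4j$ cross-link levels, of which $2j$ already carry backward data; the hypothesis $i\le 2j$ is exactly $i+2j\le 4j$, so everything fits, giving forward rate $3i$ (perfect feedback) and backward rate $2j$ (the nonfeedback sum-capacity of Baseline~\ref{baseline:nonfeedback}).

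Claim~(iii) is the role-reversed companion of~(ii). Since $(n,m)=(3,2)$ has $C_{\sf pf}=C_{\sf no}=4$, the forward link needs no feedback of its own: I would run the nonfeedback scheme of~\cite{ElGamal:it82,bresler:europe} on each forward $(3,2)$ block to obtain $4i$, and use its $2n-C_{\sf no}=2$ unused direct-link levels to relay, via a swapped instance of the Scheme~1 XOR/neutralization argument, the $\tilde C_{\sf pf}-\tilde C_{\sf no}=1$ bit of feedback demanded by each backward $(0,1)$ block; the hypothesis $2i\ge j$ states precisely that the $2i$ spare levels absorb the $j$ feedback bits, yielding $\rt=j=\tilde C_{\sf pf}$. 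Claims~(iv) and~(v) are then the two extreme corner operating points of Scheme~2: for~(iv) I would keep the forward $(2,1)$ link at its nonfeedback rate $C_{\sf no}=2$ and devote all of its $2n-C_{\sf no}=2$ spare direct-link levels to relaying backward feedback, so the two backward $(0,1)$ blocks reach $\tilde C_{\sf pf}=2$ (total), giving $(2,2)$; for~(v) I would do the opposite, letting the forward $(2,1)^2$ pair execute the retrospective-decoding refinement of Example~2 to reach $C_{\sf pf}=6$ while the single backward $(0,1)$ block spends its $2\mt=2$ cross-link levels entirely on relaying the $C_{\sf pf}-C_{\sf no}=2$ forward-feedback bits, leaving $\rt=0$. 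In each of~(iv) and~(v) one re-instantiates the length-$L$, two-stage construction with the decoding order~\eqref{eq:decodingorder} adapted to whichever link is being pushed to its perfect-feedback sum-capacity.

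The main obstacle I anticipate is not the level counting but verifying that the interference-neutralization and alignment steps still ``close'' once Schemes~1 and~2 are run across several parallel blocks: one must check that each symbol a user is asked to relay for neutralization is genuinely available at the neutralizing user in exactly the (plain or XORed) form the alignment requires, and that the retrospective decoding order survives the forward/backward role-swap needed in~(iii),(iv). This is exactly what the network decomposition of Theorem~\ref{thm:ND} is for: it reduces every channel to the handful of elementary blocks $(2,1)$, $(3,2)$, $(0,1)$, $(1,2)$ handled here, so that the remaining verification is the finite, if tedious, bookkeeping deferred to the appendix.
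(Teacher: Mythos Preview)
Your plan is essentially correct and coincides with the paper's proof for (i)--(iii): the paper also points to Scheme~2 for~(i), and for~(ii) and~(iii) it observes that the aggregate channels fall into regimes (R4) and (R3$'$) respectively, which is exactly your Scheme~1 / role-swapped Scheme~1 level count.

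For (iv) and (v), however, you over-engineer. Your own concrete descriptions are right and in fact match the paper verbatim: for~(iv) run the forward $(2,1)$ at its nonfeedback rate and spend the two spare top levels relaying backward feedback; for~(v) dedicate the entire backward $(0,1)$ to forward feedback so that the forward channel sees effectively perfect feedback. But neither of these needs the length-$L$, two-stage retrospective construction you then invoke. The paper's schemes for (iv) and (v) are single-stage and do not use the decoding order~\eqref{eq:decodingorder} at all. In particular, for~(v) note that $C_{\sf pf}-C_{\sf no}=2>1=2\mt-\tilde C_{\sf pf}$, so Scheme~2 in its full form (which targets $(C_{\sf pf},\tilde C_{\sf pf})$) does \emph{not} apply here; what works is simply the perfect-feedback scheme of~\cite{SuhTse}, enabled by setting $\rt=0$. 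Dropping the retrospective-decoding wrapper around (iv) and (v) would make your argument both shorter and cleaner, and would remove the need for the ``finite, if tedious, bookkeeping'' you anticipate.
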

\begin{proof}
See Appendix~\ref{append:lemma_buildingblocks}.
\end{proof}

{\bf (I) $C_{\sf pf}-C_{\sf no}\leq 2\tilde{m}-\tilde{C}_{\sf pf}, \tilde{C}_{\sf pf}-\tilde{C}_{\sf no}\leq 2n-{C}_{\sf pf}$}: In this case, the rate region claims that we can get all the way to perfect feedback capacities: $(R,\rt) = (C_{\sf pf}, \tilde{C}_{\sf pf})$. First consider the regime of $\alpha\in(0,\frac{1}{2}]$ in which the network decompositions~\eqref{eq:ND1} and~\eqref{eq:ND3} yield: 
\begin{align*}
(n,m)\longrightarrow& {(2,1)^{\tilde{C}_{\sf pf}-\tilde{C}_{\sf no}}}
\times{(2,1)^{C_{\sf pf}-C_{\sf no}-(\tilde{C}_{\sf pf}-\tilde{C}_{\sf no})}}
\times(1,0)^{n-2m}; \\
(\tilde{n},\tilde{m})\longrightarrow&{(0,1)^{\tilde{C}_{\sf pf}-\tilde{C}_{\sf no}}} \times{(1,2)^{\tilde{n}}}.
\end{align*}
Here we use the fact that $C_{\sf pf} - C_{\sf no} = m$ and that $\tilde{C}_{\sf pf}-\tilde{C}_{\sf no}\leq C_{\sf pf}-{C}_{\sf no} = 2n - C_{\sf pf}$ in the considered regime.
We now apply Lemma~\ref{lemma:buildingblocks}-(i) for the pair of ${(2,1)^{\tilde{C}_{\sf pf}-\tilde{C}_{\sf no}}}$ and ${(0,1)^{\tilde{C}_{\sf pf}-\tilde{C}_{\sf no}}}$. Also we apply Lemma~\ref{lemma:buildingblocks}-(ii) for the pair of ${(2,1)^{C_{\sf pf}-C_{\sf no}-(\tilde{C}_{\sf pf}-\tilde{C}_{\sf no})}}$ and ${(1,2)^{\tilde{n}}}$. Note that $C_{\sf pf}-C_{\sf no}-(\tilde{C}_{\sf pf}-\tilde{C}_{\sf no})\leq 2\tilde{n}$ in the considered regime: $C_{\sf pf}-C_{\sf no}\leq 2\tilde{m}-\tilde{C}_{\sf pf}$.
Lastly we apply the nonfeedback scheme for the remaining subchannel $(1,0)^{n-2m}$. This yields: 
\begin{align*}
R&={3\times (\tilde{C}_{\sf pf}-\tilde{C}_{\sf no})}
+{3\times \{C_{\sf pf}-C_{\sf no}-(\tilde{C}_{\sf pf}-\tilde{C}_{\sf no})\}}+2\times (n-2m) =2n-m=C_{\sf pf},\\
\tilde{R}&={1\times(\tilde{C}_{\sf pf}-\tilde{C}_{\sf no})}
+{2\times\tilde{n}} =\tilde{m}=\tilde{C}_{\sf pf}.
\end{align*} 

Next consider the regime of $\alpha\in[\frac{1}{2},\frac{2}{3}]$. In this regime, there are two subcases depending on whether or not $C_{\sf pf}-C_{\sf no}\geq \tilde{C}_{\sf pf}-\tilde{C}_{\sf no}$. When $C_{\sf pf}-C_{\sf no}\geq \tilde{C}_{\sf pf}-\tilde{C}_{\sf no}$,  
\begin{align*}
(n,m)\longrightarrow& {(2,1)^{\tilde{C}_{\sf pf}-\tilde{C}_{\sf no}}}
\times{(2,1)^{C_{\sf pf}-C_{\sf no}-(\tilde{C}_{\sf pf}-\tilde{C}_{\sf no})}}
\times(3,2)^{2m-n};\\
(\tilde{n},\tilde{m})\longrightarrow&{(0,1)^{\tilde{C}_{\sf pf}-\tilde{C}_{\sf no}}}
\times{(1,2)^{\tilde{n}}}.
\end{align*}
We apply Lemma~\ref{lemma:buildingblocks}-(i) for the pair of ${(2,1)^{\tilde{C}_{\sf pf}-\tilde{C}_{\sf no}}}$ and ${(0,1)^{\tilde{C}_{\sf pf}-\tilde{C}_{\sf no}}}$; apply Lemma~\ref{lemma:buildingblocks}-(ii) for the pair of ${(2,1)^{C_{\sf pf}-C_{\sf no}-(\tilde{C}_{\sf pf}-\tilde{C}_{\sf no})}}$ and ${(1,2)^{\tilde{n}}}$ (note that $C_{\sf pf}-C_{\sf no}-(\tilde{C}_{\sf pf}-\tilde{C}_{\sf no})\leq 2\tilde{n}$ in the considered regime $C_{\sf pf}-C_{\sf no}\leq 2\tilde{m}-\tilde{C}_{\sf pf}$); apply the nonfeedback scheme for $(3,2)^{2m-n}$. This gives:
\begin{align*}
R&={3\times (\tilde{C}_{\sf pf}-\tilde{C}_{\sf no})}
+{3\times \{C_{\sf pf}-C_{\sf no}-(\tilde{C}_{\sf pf}-\tilde{C}_{\sf no})\}}+4\times (2m-n) =2n-m=C_{\sf pf},\\
\tilde{R}&={1\times(\tilde{C}_{\sf pf}-\tilde{C}_{\sf no})}
+{2\times\tilde{n}} =\tilde{m}=\tilde{C}_{\sf pf}.
\end{align*}
For the other case $C_{\sf pf}-C_{\sf no}< \tilde{C}_{\sf pf}-\tilde{C}_{\sf no}$,
\begin{align*}
(n,m)&\longrightarrow{(2,1)^{C_{\sf pf}-C_{\sf no}}}
\times{(3,2)^{2m-n}},\\
(\tilde{n},\tilde{m})&\longrightarrow{(0,1)^{{C}_{\sf pf}-{C}_{\sf no}}}
\times{(0,1)^{\tilde{C}_{\sf pf}-\tilde{C}_{\sf no}-({C}_{\sf pf}-{C}_{\sf no})}}
\times(1,2)^{\tilde{n}}.
\end{align*}
Using Lemma~\ref{lemma:buildingblocks} and making similar arguments as earlier, one can  show that
%
%
\begin{align*}
R&={3\times ({C}_{\sf pf}-{C}_{\sf no})}
+{4\times (2m-n)} =2n-m=C_{\sf pf},\\
\tilde{R}&={1\times({C}_{\sf pf}-{C}_{\sf no})}
+{1\times\{\tilde{C}_{\sf pf}-\tilde{C}_{\sf no}-({C}_{\sf pf}-{C}_{\sf no})\}}
+2\times\tilde{n} =\tilde{m}=\tilde{C}_{\sf pf}.
\end{align*}

{\bf (II) $C_{\sf pf}-C_{\sf no}\leq 2\tilde{m}-\tilde{C}_{\sf pf}, \tilde{C}_{\sf pf}-\tilde{C}_{\sf no}> 2n-{C}_{\sf pf}$}: In this case, there are two corner points to achieve. The first corner point is $(R,\rt) = (C_{\sf pf}, \tilde{C}_{\sf no} + 2n - C_{\sf pf})$. The second corner point depends on where $\tilde{C}_{\sf pf} - \tilde{C}_{\sf no}$ lies in between $2n- C_{\sf no}$, $2n$ and beyond. See Fig.~\ref{fig:R5shape2}. For the cases of (II-1) and (II-2), the corner point reads $(R,\rt) = (2n - (\tilde{C}_{\sf pf} - \tilde{C}_{\sf no} ), \tilde{C}_{\sf pf})$, while for (II-3), $(R,\rt) = (0, \tilde{C}_{\sf no} + 2n)$.

\begin{figure}[t]
\begin{center}
{\epsfig{figure=./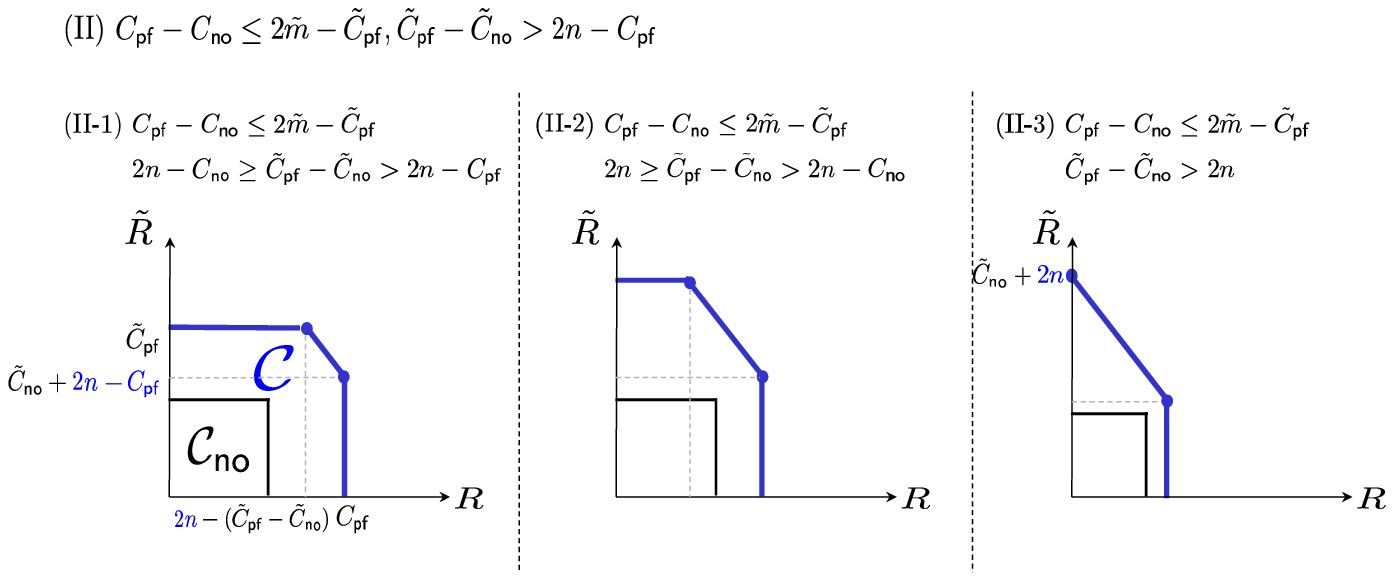, angle=0, width=1.0\textwidth}}
\end{center}
\caption{Three types of shapes of an achievable rate region for the regime (R5) $\alpha\in(0,\frac{2}{3}],\tilde{\alpha} >2$ and the case (II) $C_{\sf pf}-C_{\sf no}\leq 2\tilde{m}-\tilde{C}_{\sf pf}, \tilde{C}_{\sf pf}-\tilde{C}_{\sf no}> 2n-{C}_{\sf pf}$.} \label{fig:R5shape2}
\end{figure}

Let us first prove $(R,\rt) = (C_{\sf pf}, \tilde{C}_{\sf no} + 2n - C_{\sf pf})$. 
For the regime $\alpha\in (0,\frac{1}{2}]$, 
\begin{align*}
(n,m)\longrightarrow&{(2,1)^{C_{\sf pf}-C_{\sf no}}}
\times(1,0)^{n-2m},\\
(\tilde{n},\tilde{m})\longrightarrow&{(0,1)^{{C}_{\sf pf}-{C}_{\sf no}}}
\times(0,1)^{\tilde{C}_{\sf pf}-\tilde{C}_{\sf no}-({C}_{\sf pf}-{C}_{\sf no})}\times(1,2)^{\tilde{n}}.
\end{align*}
Note that $\tilde{C}_{\sf pf}-\tilde{C}_{\sf no}>C_{\sf pf}-C_{\sf no}$ in the considered regime $\tilde{C}_{\sf pf}-\tilde{C}_{\sf no}> 2n -C_{\sf pf}$. 
We apply Lemma~\ref{lemma:buildingblocks}-(i) for the pair of ${(2,1)^{{C}_{\sf pf}-{C}_{\sf no}}}$ and ${(0,1)^{{C}_{\sf pf}-{C}_{\sf no}}}$; apply the nonfeedback scheme for the rest. This yields:
\begin{align*}
R&={3\times(C_{\sf pf}-C_{\sf no})}+2\times (n-2m) =2n-m=C_{\sf pf},\\
\tilde{R}&={1\times({C}_{\sf pf}-{C}_{\sf no})}+2\times\tilde{n} =\tilde{C}_{\sf no}+2n-C_{\sf pf}.
\end{align*}
For the regime $\alpha\in [\frac{1}{2},\frac{2}{3}]$, 
\begin{align*}
(n,m)\longrightarrow&{(2,1)^{C_{\sf pf}-C_{\sf no}}}
\times{(3,2)^{2m-n}},\\
(\tilde{n},\tilde{m})\longrightarrow&{(0,1)^{{C}_{\sf pf}-{C}_{\sf no}}}
\times{(0,1)^{2(2m-n)}}
\times(0,1)^{\tilde{C}_{\sf pf}-\tilde{C}_{\sf no}-({C}_{\sf pf}-{C}_{\sf no})-2(2m-n)}
\times(1,2)^{\tilde{n}}.
\end{align*}
Note that $\tilde{C}_{\sf pf}-\tilde{C}_{\sf no}>C_{\sf pf}-C_{\sf no}+2(2m-n)$ in the considered regime $\tilde{C}_{\sf pf}-\tilde{C}_{\sf no}> 2n -C_{\sf pf}$. 
We apply Lemma~\ref{lemma:buildingblocks}-(i) for the pair of ${(2,1)^{{C}_{\sf pf}-{C}_{\sf no}}}$ and ${(0,1)^{{C}_{\sf pf}-{C}_{\sf no}}}$; apply Lemma~\ref{lemma:buildingblocks}-(iii) for the pair of ${(3,2)^{2m-n}}$ and ${(0,1)^{2(2m-n)}}$; apply the nonfeedback scheme for the rest. This yields: 
\begin{align*}
R&={3\times(C_{\sf pf}-C_{\sf no})}
+{4(2m-n)} =2n-m=C_{\sf pf},\\
\tilde{R}&={1\times(C_{\sf pf}-C_{\sf no})}
+{1\times 2(2m-n)}
+2\tilde{n} =\tilde{C}_{\sf no}+2n-C_{\sf pf}.
\end{align*}

We are now ready to prove the second corner point which favors $\rt$. 
Depending on the quantity of $\tilde{C}_{\sf pf} - \tilde{C}_{\sf no}$, we have three subcases.

{\bf (II-1) $2n-C_{\sf pf} < \tilde{C}_{\sf pf}-\tilde{C}_{\sf no} \leq  2n -C_{\sf no}$}:
For the regime $\alpha\in(0,\frac{1}{2}]$, 
\begin{align*}
(n,m)\longrightarrow&{(2,1)^{2n-C_{\sf no}-(\tilde{C}_{\sf pf}-\tilde{C}_{\sf no})}}
\times{(2,1)^{\tilde{C}_{\sf pf}-\tilde{C}_{\sf no}-(2n-C_{\sf pf})}}
\times(1,0)^{n-2m},\\
(\tilde{n},\tilde{m})\longrightarrow&{(0,1)^{2n-C_{\sf no}-(\tilde{C}_{\sf pf}-\tilde{C}_{\sf no})}}
\times{(0,1)^{2\{\tilde{C}_{\sf pf}-\tilde{C}_{\sf no}-(2n-C_{\sf pf})\}}}\times(1,2)^{\tilde{n}}.
\end{align*}
We apply Lemma~\ref{lemma:buildingblocks}-(i) for the pair of ${(2,1)^{2n-C_{\sf no}-(\tilde{C}_{\sf pf}-\tilde{C}_{\sf no})}}$ and ${(0,1)^{2n-C_{\sf no}-(\tilde{C}_{\sf pf}-\tilde{C}_{\sf no})}}$; apply Lemma~\ref{lemma:buildingblocks}-(iv) for the pair of ${(2,1)^{\tilde{C}_{\sf pf}-\tilde{C}_{\sf no}-(2n-C_{\sf pf})}}$ and ${(0,1)^{2\{\tilde{C}_{\sf pf}-\tilde{C}_{\sf no}-(2n-C_{\sf pf})\}}}$; apply the non-feedback scheme for the rest. This then gives: 
\begin{align*}
R&={3 \{2n-C_{\sf no}-(\tilde{C}_{\sf pf}-\tilde{C}_{\sf no})\}}
+{2 \{\tilde{C}_{\sf pf}-\tilde{C}_{\sf no}-(2n-C_{\sf pf})\}}
+2 (n-2m) =2n-(\tilde{C}_{\sf pf}-\tilde{C}_{\sf no}),\\
\tilde{R}&={\{2n-C_{\sf no}-(\tilde{C}_{\sf pf}-\tilde{C}_{\sf no})\}}
+{2\{\tilde{C}_{\sf pf}-\tilde{C}_{\sf no}-(2n-C_{\sf pf})\}}
+2 \tilde{n} =\tilde{C}_{\sf pf}.
\end{align*}
For the regime $\alpha\in[\frac{1}{2},\frac{2}{3}]$, 
\begin{align*}
(n,m)\longrightarrow&{(2,1)^{2n-C_{\sf no}-(\tilde{C}_{\sf pf}-\tilde{C}_{\sf no})}}
\times {(2,1)^{\tilde{C}_{\sf pf}-\tilde{C}_{\sf no}-(2n-C_{\sf pf})}}
\times{(3,2)^{2m-n}},\\
(\tilde{n},\tilde{m})\longrightarrow&{(0,1)^{2n-C_{\sf no}-(\tilde{C}_{\sf pf}-\tilde{C}_{\sf no})}}
\times {(0,1)^{2\{\tilde{C}_{\sf pf}-\tilde{C}_{\sf no}-(2n-C_{\sf pf})\}}}
\times {(0,1)^{2(2m-n)}}
\times(1,2)^{\tilde{n}}.
\end{align*}
We apply Lemma~\ref{lemma:buildingblocks}-(i) for the pair of ${(2,1)^{2n-C_{\sf no}-(\tilde{C}_{\sf pf}-\tilde{C}_{\sf no})}}$ and ${(0,1)^{2n-C_{\sf no}-(\tilde{C}_{\sf pf}-\tilde{C}_{\sf no})}}$; apply Lemma~\ref{lemma:buildingblocks}-(iv) for the pair of ${(2,1)^{\tilde{C}_{\sf pf}-\tilde{C}_{\sf no}-(2n-C_{\sf pf})}}$ and ${(0,1)^{2\{\tilde{C}_{\sf pf}-\tilde{C}_{\sf no}-(2n-C_{\sf pf})\}}}$; apply Lemma~\ref{lemma:buildingblocks}-(iii) for the pair of ${(3,2)^{2m-n}}$ and ${(0,1)^{2(2m-n)}}$; apply the nonfeedback scheme for the rest. This then gives: 
\begin{align*}
R&={3 \{2n-C_{\sf no}-(\tilde{C}_{\sf pf}-\tilde{C}_{\sf no})\}}
+{2  \{\tilde{C}_{\sf pf}-\tilde{C}_{\sf no}-(2n-C_{\sf pf})\}}
+{4(2m-n)} =2n-(\tilde{C}_{\sf pf}-\tilde{C}_{\sf no}),\\
\tilde{R}&={ \{2n-C_{\sf no}-(\tilde{C}_{\sf pf}-\tilde{C}_{\sf no})\}}
+{2 \{\tilde{C}_{\sf pf}-\tilde{C}_{\sf no}-(2n-C_{\sf pf})\}}
+{1 2(2m-n)}
+2\tilde{n} =\tilde{C}_{\sf pf}.
\end{align*}

{\bf (II-2) $ 2n - C_{\sf no} < \tilde{C}_{\sf pf}-\tilde{C}_{\sf no} \leq  2n$}: It turns out in this case proving achievability only via the network decomposition is a bit involved. So for illustrative purpose, we will first show achievability for one point that lies on the 45-degree line connecting the two corner points. Later we will slightly perturb the scheme to prove achievability for the second corner point that we intend to achieve. 

First consider the regime $\alpha\in (0,\frac{1}{2}]$. In this case, 
\begin{align*}
(n,m)\longrightarrow&{(2,1)^{C_{\sf pf}-C_{\sf no}}}
\times(1,0)^{n-2m},\\
(\tilde{n},\tilde{m})\longrightarrow&{(0,1)^{2(C_{\sf pf}-C_{\sf no})}}
\times(0,1)^{\tilde{C}_{\sf pf}-\tilde{C}_{\sf no}-2(C_{\sf pf}-C_{\sf no})}
\times(1,2)^{\tilde{n}}.
\end{align*}
Note that $\tilde{C}_{\sf pf}-\tilde{C}_{\sf no}> 2(C_{\sf pf}-C_{\sf no})$ in the considered regime $\tilde{C}_{\sf pf}-\tilde{C}_{\sf no}> 2n -C_{\sf no}$. We apply Lemma~\ref{lemma:buildingblocks}-(iv) for the pair of ${(2,1)^{C_{\sf pf}-C_{\sf no}}}$ and ${(0,1)^{2(C_{\sf pf}-C_{\sf no})}}$; apply the nonfeedback scheme for the rest. This then yields:
\begin{align*}
R&={2\times(C_{\sf pf}-C_{\sf no})}
+2\times (n-2m),\\
\tilde{R}&={2\times(C_{\sf pf}-C_{\sf no})}
+2\times\tilde{n}.
\end{align*}
As mentioned earlier, this is an intermediate point that lies on the 45-degree line connecting the two corner points. Now we tune the scheme which yields the above rate to prove the achievability of the second corner point. We use part of the forward channel for aiding backward transmission instead of sending its own traffic. Specifically we utilize $\tilde{C}_{\sf pf} - \tilde{C}_{\sf no} - 2 (C_{\sf pf} - C_{\sf no})$ number of bottom levels in the forward channel in an effort to relay backward-symbol feedback. This naive change incurs one-to-one tradeoff between feedback and independent message transmission, thus yielding: 
\begin{align*}
R&={2\times(C_{\sf pf}-C_{\sf no})}
+2\times (n-2m)
-\{\tilde{C}_{\sf pf}-\tilde{C}_{\sf no}-2(C_{\sf pf}-C_{\sf no})\}=2n-(\tilde{C}_{\sf pf}-\tilde{C}_{\sf no}),\\
\tilde{R}&={2\times(C_{\sf pf}-C_{\sf no})}
+2\times\tilde{n}
+\{\tilde{C}_{\sf pf}-\tilde{C}_{\sf no}-2(C_{\sf pf}-C_{\sf no})\}=\tilde{C}_{\sf pf}.
\end{align*} 

For the regime $\alpha\in[\frac{1}{2}, \frac{2}{3}]$,
\begin{align*}
(n,m)\longrightarrow&{(2,1)^{C_{\sf pf}-C_{\sf no}}}
\times{(3,2)^{2m-n}},\\
(\tilde{n},\tilde{m})\longrightarrow&{(0,1)^{2(C_{\sf pf}-C_{\sf no})}}
\times {(0,1)^{2(2m-n)}}
\times {(0,1)^{\tilde{C}_{\sf pf}-\tilde{C}_{\sf no}-2(C_{\sf pf}-C_{\sf no})-2(2m-n)}}
\times(1,2)^{\tilde{n}}.
\end{align*}
We apply Lemma~\ref{lemma:buildingblocks}-(iv) for the pair of ${(2,1)^{C_{\sf pf}-C_{\sf no}}}$ and ${(0,1)^{2(C_{\sf pf}-C_{\sf no})}}$; apply Lemma~\ref{lemma:buildingblocks}-(iii) for the pair of ${(3,2)^{2m-n}}$ and ${(0,1)^{2(2m-n)}}$; apply the nonfeedback scheme for the rest. This then gives:
\begin{align*}
R&={2\times(C_{\sf pf}-C_{\sf no})}
+{4\times(2m-n)},\\
\tilde{R}&={2\times(C_{\sf pf}-C_{\sf no})}
+{1\times 2(2m-n)}
+2\tilde{n}.
\end{align*}
This is an intermediate point that lies on the 45-degree line connecting the two corner points. Now sacrificing $\tilde{C}_{\sf pf} - \tilde{C}_{\sf no} - 2 (C_{\sf pf} - C_{\sf no}) - 2 (2m-n)$ number of resource levels in the forward channel for aiding backward transmission, we achieve:
\begin{align*}
R&={2(C_{\sf pf}-C_{\sf no})}
+{4 (2m-n)}
-\{\tilde{C}_{\sf pf}-\tilde{C}_{\sf no}-2(C_{\sf pf}-C_{\sf no})-2(2m-n)\} =2n-(\tilde{C}_{\sf pf}-\tilde{C}_{\sf no}),\\
\tilde{R}&={2 (C_{\sf pf}-C_{\sf no})}
+{ 2(2m-n)}
+2\tilde{n}
+\{\tilde{C}_{\sf pf}-\tilde{C}_{\sf no}-2(C_{\sf pf}-C_{\sf no})-2(2m-n)\} =\tilde{C}_{\sf pf}.
\end{align*}

{\bf (II-3) $\tilde{C}_{\sf pf}-\tilde{C}_{\sf no}> 2n$}:
In this case, we sacrifice all of the $2n$ direct links in the forward channel only for the purpose of helping backward transmission. This then gives: $(R,\tilde{R})=(0,\tilde{C}_{\sf no}+2n)$.

\begin{figure}[t]
\begin{center}
{\epsfig{figure=./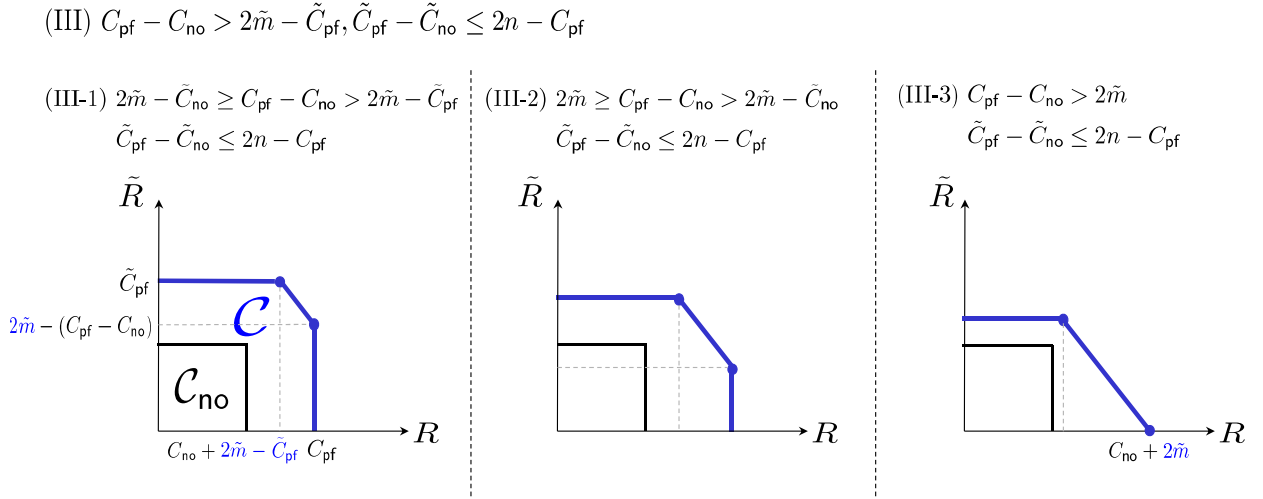, angle=0, width=1.0\textwidth}}
\end{center}
\caption{Three types of shapes of an achievable rate region for the regime (R5) $\alpha\in(0,\frac{2}{3}],\tilde{\alpha} >2$ and the case (III) $C_{\sf pf}-C_{\sf no}> 2\tilde{m}-\tilde{C}_{\sf pf}, \tilde{C}_{\sf pf}-\tilde{C}_{\sf no}\leq 2n-{C}_{\sf pf}$.} \label{fig:R5shape3}
\end{figure}

{\bf (III) $C_{\sf pf}-C_{\sf no}> 2\tilde{m}-\tilde{C}_{\sf pf}, \tilde{C}_{\sf pf}-\tilde{C}_{\sf no}\leq 2n-{C}_{\sf pf}$}:
Similarly this case requires the proof of two corner points. The first corner point is: $(R,\rt) = (C_{\sf no} + 2 \mt - \tilde{C}_{\sf pf}, \tilde{C}_{\sf pf})$. The second corner point depends on where $C_{\sf pf}- C_{\sf no}$ lies in. See Fig.~\ref{fig:R5shape3}. While the proof is similar to that in the previous case, we provide details for completeness. 

First focus on the proof of the first corner point $(R,\rt) = (C_{\sf no} + 2 \mt - \tilde{C}_{\sf pf}, \tilde{C}_{\sf pf})$. Notice that for the regime $\alpha\in(0,\frac{1}{2}],\at>2$, we encounter a contradiction as follows:
\begin{align*}
C_{\sf pf}-C_{\sf no}> 2\tilde{m}-\tilde{C}_{\sf pf}&\Rightarrow C_{\sf pf}-C_{\sf no}> \tilde{C}_{\sf pf}-\tilde{C}_{\sf no};\\
\tilde{C}_{\sf pf}-\tilde{C}_{\sf no}\leq 2n -C_{\sf pf}&\Rightarrow C_{\sf pf}-C_{\sf no}\leq \tilde{C}_{\sf pf}-\tilde{C}_{\sf no}.
\end{align*}
Hence, we will not consider this regime. For the regime $\alpha\in [\frac{1}{2}, \frac{2}{3}]$, 
\begin{align*}
(n,m)\longrightarrow&{(2,1)^{\tilde{C}_{\sf pf}-\tilde{C}_{\sf no}}}
\times {(2,1)^{2\tilde{n}}}
\times (2,1)^{C_{\sf pf}-C_{\sf no}-(\tilde{C}_{\sf pf}-\tilde{C}_{\sf no})-2\tilde{n}}
\times(3,2)^{2m-n},\\
(\tilde{n},\tilde{m})\longrightarrow&{(0,1)^{\tilde{C}_{\sf pf}-\tilde{C}_{\sf no}}}
\times{(1,2)^{\tilde{n}}}.
\end{align*}
Note that $C_{\sf pf}-C_{\sf no}> \tilde{C}_{\sf pf}-\tilde{C}_{\sf no}+2\tilde{n}$ in the considered regime $C_{\sf pf}-C_{\sf no}> 2\tilde{m}-\tilde{C}_{\sf pf}$. 
We now apply Lemma~\ref{lemma:buildingblocks}-(i) for the pair of ${(2,1)^{\tilde{C}_{\sf pf}-\tilde{C}_{\sf no}}}$ and ${(0,1)^{\tilde{C}_{\sf pf}-\tilde{C}_{\sf no}}}$; apply Lemma~\ref{lemma:buildingblocks}-(ii) for the pair of ${(2,1)^{2\tilde{n}}}$ and ${(1,2)^{\tilde{n}}}$; apply the nonfeedback scheme for the rest. This gives: 
\begin{align*}
R&={3(\tilde{C}_{\sf pf}-\tilde{C}_{\sf no})}
+{3 \cdot 2\tilde{n}}
+2\{C_{\sf pf}-C_{\sf no}-(\tilde{C}_{\sf pf}-\tilde{C}_{\sf no})-2\tilde{n}\}
+4 (2m-n) =C_{\sf no}+2\tilde{m}-\tilde{C}_{\sf pf},\\
\tilde{R}&={(\tilde{C}_{\sf pf}-\tilde{C}_{\sf no})}
+{2\tilde{n}} =\tilde{C}_{\sf pf}.
\end{align*}

Let us now prove the second corner point which favours $R$. As mentioned earlier, we have three subcases depending on the quantity of $C_{\sf pf} - C_{\sf no}$.

{\bf (III-1) $2\tilde{m}-\tilde{C}_{\sf pf} < C_{\sf pf}-C_{\sf no} \leq 2\tilde{m}-\tilde{C}_{\sf no}$}: In this case, we have:
\begin{align*}
(n,m)\longrightarrow&{(2,1)^{2\tilde{m}-\tilde{C}_{\sf no}-(C_{\sf pf}-C_{\sf no})}}
\times {(2,1)^{2\{C_{\sf pf}-C_{\sf no}-(2\tilde{m}-\tilde{C}_{\sf pf})\}}}
\times {(2,1)^{2\tilde{n}}}
\times(3,2)^{2m-n},\\
(\tilde{n},\tilde{m})\longrightarrow&{(0,1)^{2\tilde{m}-\tilde{C}_{\sf no}-(C_{\sf pf}-C_{\sf no})}}
\times {(0,1)^{C_{\sf pf}-C_{\sf no}-(2\tilde{m}-\tilde{C}_{\sf pf})}}
\times {(1,2)^{\tilde{n}}}.
\end{align*}
We now apply Lemma~\ref{lemma:buildingblocks}-(i) for the pair of ${(2,1)^{2\tilde{m}-\tilde{C}_{\sf no}-(C_{\sf pf}-C_{\sf no})}}$ and ${(0,1)^{2\tilde{m}-\tilde{C}_{\sf no}-(C_{\sf pf}-C_{\sf no})}}$; apply Lemma~\ref{lemma:buildingblocks}-(v) for the pair of ${(2,1)^{2\{C_{\sf pf}-C_{\sf no}-(2\tilde{m}-\tilde{C}_{\sf pf})\}}}$ and ${(0,1)^{C_{\sf pf}-C_{\sf no}-(2\tilde{m}-\tilde{C}_{\sf pf})}}$;  apply the nonfeedback scheme for the rest. This then yields:
\begin{align*}
R&={3 \{2\tilde{m}-\tilde{C}_{\sf no}-(C_{\sf pf}-C_{\sf no})\}}
+{3\cdot 2\{C_{\sf pf}-C_{\sf no}-(2\tilde{m}-\tilde{C}_{\sf pf})\}}
+{3\cdot 2\tilde{n}}
+4(2m-n)=C_{\sf pf},\\
\tilde{R}&={\{2\tilde{m}-\tilde{C}_{\sf no}-(C_{\sf pf}-C_{\sf no})\}}
+{2\tilde{n}} =2\tilde{m}-(C_{\sf pf}-C_{\sf no}).
\end{align*}

{\bf (III-2) $2\tilde{m} -\tilde{C}_{\sf no} < C_{\sf pf}-C_{\sf no} \leq  2\tilde{m}$}: In this case, we will take the two-step approach: first obtaining an intermediate point that lies on the 45-degree line and then perturbing the scheme to prove achievability of the second corner point of interest. 

In the considered regime, we have:
\begin{align*}
(n,m)\longrightarrow&{(2,1)^{2(\tilde{C}_{\sf pf}-\tilde{C}_{\sf no})}}
\times {(2,1)^{2\tilde{n}}}
\times (2,1)^{C_{\sf pf}-C_{\sf no}-2\times(\tilde{C}_{\sf pf}-\tilde{C}_{\sf no})-2\tilde{n}}
\times(3,2)^{2m-n},\\
(\tilde{n},\tilde{m})\longrightarrow&{(0,1)^{\tilde{C}_{\sf pf}-\tilde{C}_{\sf no}}}
\times {(1,2)^{\tilde{n}}}.
\end{align*}
Here we use the fact that $ C_{\sf pf}-C_{\sf no}> 2(\tilde{C}_{\sf pf}-\tilde{C}_{\sf no})+2\tilde{n}$ due to $C_{\sf pf}-C_{\sf no}> 2\tilde{m}-\tilde{C}_{\sf no}$. We now apply Lemma~\ref{lemma:buildingblocks}-(v) for the pair of ${(2,1)^{2(\tilde{C}_{\sf pf}-\tilde{C}_{\sf no})}}$ and ${(0,1)^{\tilde{C}_{\sf pf}-\tilde{C}_{\sf no}}}$; apply Lemma~\ref{lemma:buildingblocks}-(ii) for the pair of ${(2,1)^{2\tilde{n}}}$ and ${(1,2)^{\tilde{n}}}$; apply the nonfeedback scheme for the rest. This gives: 
\begin{align*}
R&={3\times2(\tilde{C}_{\sf pf}-\tilde{C}_{\sf no})}
+{3\times 2\tilde{n}}
+2\times\{C_{\sf pf}-C_{\sf no}-2\times(\tilde{C}_{\sf pf}-\tilde{C}_{\sf no})-2\tilde{n}\}
+4\times(2m-n),\\
\tilde{R}&={2\tilde{n}}.
\end{align*}
We now change the scheme that achieves the above rate pair to prove achievability of the second corner point. Specifically we utilize $C_{\sf pf}-C_{\sf no}-2\times(\tilde{C}_{\sf pf}-\tilde{C}_{\sf no})-2\tilde{n}$ number of cross links in the backward channel to help forward transmission. 
This way, we can achieve:
\begin{align*}
R&={3\times2(\tilde{C}_{\sf pf}-\tilde{C}_{\sf no})}
+{3\times 2\tilde{n}}
+2\times\{C_{\sf pf}-C_{\sf no}-2\times(\tilde{C}_{\sf pf}-\tilde{C}_{\sf no})-2\tilde{n}\}
+4\times(2m-n)\\
&+\{C_{\sf pf}-C_{\sf no}-2\times(\tilde{C}_{\sf pf}-\tilde{C}_{\sf no})-2\tilde{n}\} =2n-m=C_{\sf pf},\\
\tilde{R}&={2\tilde{n}}-\{C_{\sf pf}-C_{\sf no}-2\times(\tilde{C}_{\sf pf}-\tilde{C}_{\sf no})-2\tilde{n}\} =2\tilde{m}-(C_{\sf pf}-C_{\sf no}).
\end{align*}    

{\bf (III-3) $ C_{\sf pf}-C_{\sf no}> 2\tilde{m}$}:
In this case, all of the $2\tilde{m}$ cross-link levels in the backward channel are used solely for aiding forward transmission. So we can achieve: $(R,\tilde{R})=(C_{\sf no}+2\tilde{m},0)$.

\begin{figure}[t]
\begin{center}
\epsfig{figure=./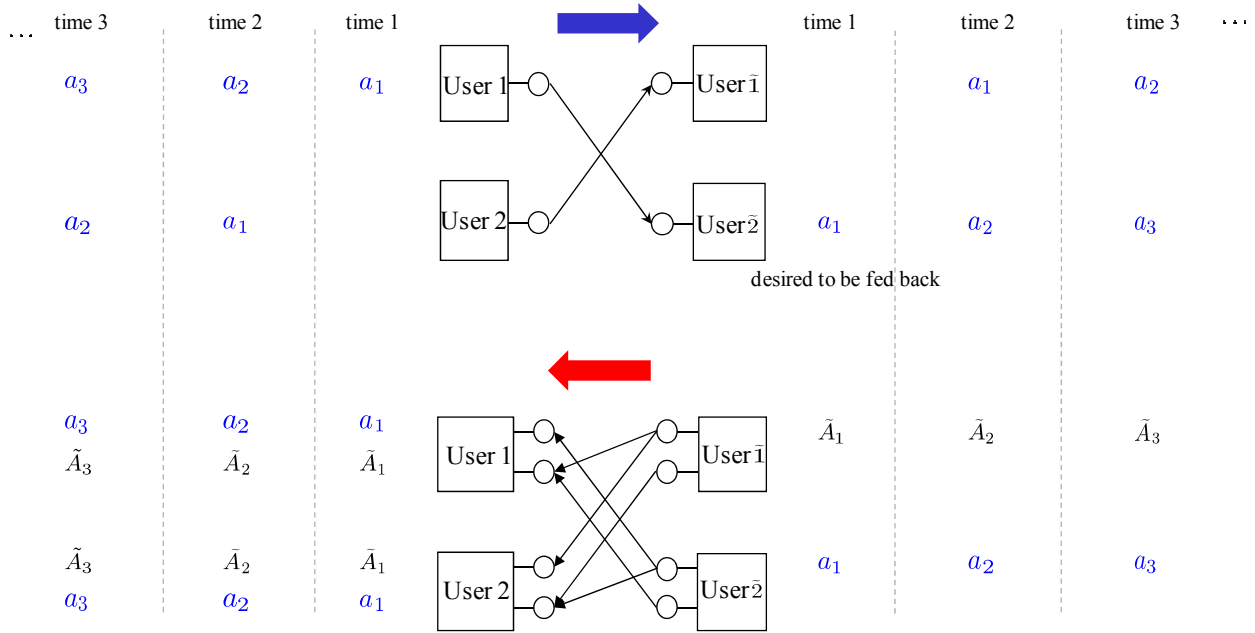, angle=0, width=1.0\textwidth}
\end{center}
\caption{For the pair of $(n,m)=(0,1)$, $(\tilde{n},\tilde{m})=(1,2)$, one can achieve $(R,\rt)=(1,1)$.} 
\label{fig:BuildingBlock1}
\end{figure}

\section{Proof of Lemma~\ref{lemma:buildingblocks1}}
\label{append:lemma_buildingblocks1}

\emph{(i):} We will illustrate achievability via the simplest example in which $(i,j)=(1,1)$. See Fig.~\ref{fig:BuildingBlock1}. In the forward channel $(0,1)$, only one user (say user 1) intends to send one symbol (say $a_i$) every time slot. User $\tilde{2}$ then feeds the symbol back to user 2 using the top level in the backward channel. Next user 2 delivers the fed back symbol to user $\tilde{1}$. This way, we achieve $R=1$. Now for $(n,m)=(0,1)^i$ and $(\nt,\mt)=(1,2)^j$, consider sending $i$ number of feedback symbols from user $\tilde{2}$ to user 2. Since the total number of resource levels at user 2 in the backward channel is $2j$, one can ensure $R=i$ as long as $i \leq 2j$. On the other hand, the remaining $2j-i$ resource levels at user 2 are used for backward traffic. In the example, $2j-i=1$, so one backward symbol $\tilde{A}_i$ is transmitted per time. Notice that this transmission also occupies a resource level at user 1. This prevents from squeezing more backward symbols, thus yielding $\rt = 2j - i$. Here one key observation to make is that feedback for increasing $R$ by one bit incurs one bit of degradation w.r.t. $\rt$, meaning that there is \emph{one-to-one tradeoff} between feedback and backward message transmissions, as demonstrated in~\cite{SuhTse}.

\begin{figure}[t]
\begin{center}
\epsfig{figure=./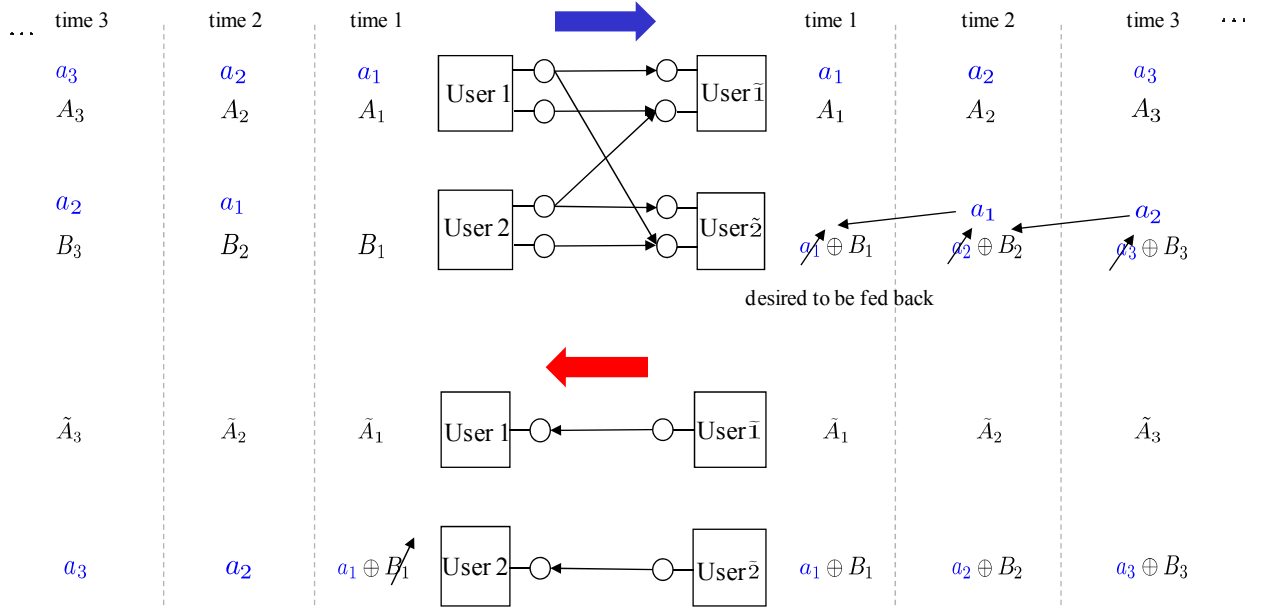, angle=0, width=1.0\textwidth}
\end{center}
\caption{For the pair of $(n,m)=(2,1)$, $(\tilde{n},\tilde{m})=(1,0)$, one can achieve $(R,\rt)=(3,1)$.} 
\label{fig:BuildingBlock2}
\end{figure}    

\emph{(ii):} We will describe achievability via a special case of $(i,j,k)=(1,1,0)$: $(n,m)=(2,1)$ and $(\nt, \mt)=(1,0)$. See Fig.~\ref{fig:BuildingBlock2}. In the forward channel $(2,1)$, user 1 sends two symbols $(a_i,A_i)$ per time, while user 2 sends only one symbol $B_i$ on bottom. The symbol $B_i$ is interfered with $a_i$. User $\tilde{2}$ then sends the interfered signal $a_i \oplus B_i$ back to user 2, which in turn enables user 2 to decode $a_i$. User 2 forwarding the $a_i$ through the top level in the next time allows user $\tilde{2}$ to refine the corrupted symbol. For instance, at time 2, user $\tilde{2}$ can decode $B_1$ by subtracting $a_1$ from $a_1 \oplus B_1$. This way, we achieve $R = \frac{3N-1}{N} \rightarrow 3 $ as code length $N$ tends to infinity. Now for $(n,m)=(2,1)^i$ and $(\nt,\mt)=(1,0)^j$, consider sending $i$ number of feedback symbols either from user $\tilde{2}$ to user 2 (as in the example) or from user $\tilde{1}$ to user 1 (this is the case in which user 2 sends more compared to user 1). Then, we can achieve $R=3i$ as long as $i$ does not exceed the total number $2j$ of resource levels in the backward channel which corresponds to the nonfeedback sum-rate. For the remaining resource levels $2j-i$, we employ the nonfeedback scheme to achieve $\rt = 2j - i$. As in the previous case $(i)$, we see one-to-one tradeoff. One can apply the same argument for $(n,m)=(2,1)^i$ and $(\nt,\mt)=(2,1)^k$ to observe the same one-to-one tradeoff relationship. The only distinction is that in this case, the nonfeedback sum-rate of the backward channel is $2k$. Hence, we achieve $(R,\rt)=(3i,2k-i)$ under $i \leq 2k$. Now for the general $(i,j,k)$ case, combining the above two, we get $(R,\rt)=(3i, 2j+2k-i)$ if $i \leq 2j + 2k$. 

\emph{(iii):} For $(n,m)=(2,1)^i$ and $(\nt,\mt)=(2,1)^j$, the proof in the $(ii)$ case yields $(R,\rt)=(3i, 2j-i)$ under $i \leq 2j$. For $(n,m)=(2,1)^i$ and $(\nt,\mt)=(3,2)^k$, using the same argument and the fact that the nonfeedback sum-rate of the backward channel is $4k$, one can show that $(R,\rt)=(3i, 4k - i)$ under $i \leq 4k$. Now for the general $(i,j,k)$ case, combining the above two, we can achieve $(R,\rt)=(3i,2j+4k-i)$ as long as $i\leq 2j+4k$. This completes the proof.

\section{Proof of Lemma~\ref{lemma:buildingblocks}}
\label{append:lemma_buildingblocks}

\emph{(i):} See \emph{Scheme 2} in Section IV.

\emph{(ii):} Obviously $(n,m)=(2,1)^i=(2i,i)$ and $(\nt,\mt)=(1,2)^j=(j,2j)$. Note in this case that $C_{\sf pf}=3i > 2i = C_{\sf no}$ and $\tilde{C}_{\sf no}=2j$ and hence the channel belongs to the (R4) regime. Since the condition $i \leq 2j$ corresponds to $C_{\sf pf}-C_{\sf no}\leq 2\tilde{m}-\tilde{C}_{\sf no}$, the achievability for the (R4) regime yields $(R,\tilde{R})=(C_{\sf pf},\tilde{C}_{\sf no})$.

\emph{(iii):} Obviously $(n,m)=(3,2)^i=(3i,2i)$ and $(\nt,\mt)=(0,1)^j=(0,j)$. Note in this case that $C_{\sf no}=4i$ and $\tilde{C}_{\sf pf}=j > 0 =\tilde{C}_{\sf no}$ and hence the channel belongs to the (R3') regime (the symmetric counterpart of (R3)). Since the condition $2i \geq j$ corresponds to $\tilde{C}_{\sf pf}-\tilde{C}_{\sf no}\leq 2n-{C}_{\sf no}$, the achievability for the (R3') regime yields $(R,\tilde{R})=(C_{\sf no},\tilde{C}_{\sf pf})$.  

\emph{(iv):} In the forward channel, each user sends one bit on bottom every time, while the upper level is utilized to relay backward-symbol feedback. See Fig.~\ref{fig:R5achievability}. Here the backward-symbol feedback, say $\tilde{b}_1$, does not cause any interference to user $\tilde{2}$ as it is already known. Hence, two feedback symbols can be delivered every time and this yields $(R, \rt) = (2,2)$. 
 
%

\begin{figure}[t]
\begin{center}
{\epsfig{figure=./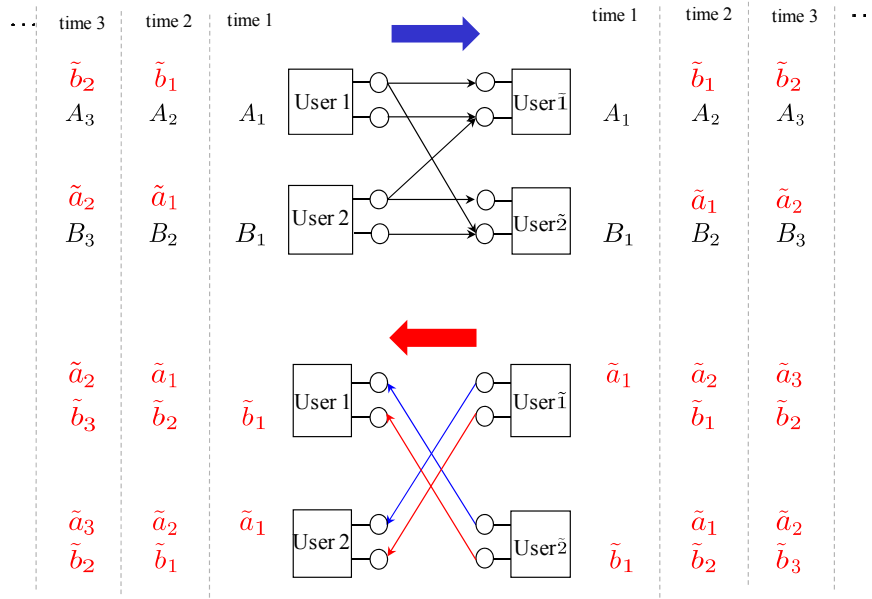, angle=0, width=0.7\textwidth}}
\end{center}
\caption{For the pair of $(n,m)=(2,1)$ and $(\nt,\mt)=(0,1)^2$, one can achieve $(R,\tilde{R})=(2,2)$.} \label{fig:R5achievability}
\end{figure}

\emph{(v):} The backward channel is used solely for feeding back forward-symbol feeddback. One can easily verify that this enables us to achieve $R= C_{\sf pf} = 2 \times 3 = 6$.   

\bibliographystyle{ieeetr}
\bibliography{bib_feedback}

\end{document}